\documentclass[12pt]{article}

\RequirePackage[OT1]{fontenc}
\usepackage{amsthm,amsmath,natbib}
\RequirePackage{hypernat}
\usepackage[pdftex]{graphicx}
\usepackage{amsfonts}
\usepackage{epic}
\usepackage{fancybox}
\usepackage[pdftex]{color}
\usepackage{marginnote}
\usepackage{enumerate}
\usepackage{multirow}

\usepackage[letterpaper]{geometry}
\geometry{top=1.0in, bottom=1.0in, left=1.0in, right=1.0in}

\RequirePackage{ifthen}

\usepackage{amsmath}

\usepackage{graphicx}
\usepackage{geometry}
\usepackage[utf8]{inputenc}
\usepackage{amsfonts,epsfig}
\usepackage[hyphens]{url}
\usepackage{hyperref}
\usepackage{multirow}
\usepackage[linesnumbered,ruled,vlined]{algorithm2e}
\usepackage{amssymb,amsmath}
\usepackage{mathtools}
\graphicspath{{plots/}}

\newtheorem{alg}{Algorithm}
\newtheorem{theorem}{Theorem}

\newtheorem{lemma}{Lemma}
\newtheorem{lemma*}{Lemma}
\newtheorem{remark}{Remark}
\newtheorem{corollary}{Corollary}
\newtheorem{definition}{Definition}
\newtheorem{example}{Example}

\DeclareMathOperator*{\argmin}{arg\,min}

\newboolean{inline}
\setboolean{inline}{true}

\title{
Finite Sample Valid Inference via Calibrated Bootstrap
}

\author{%
Yiran Jiang\textsuperscript{1}, %
Chuanhai Liu\textsuperscript{2}\textsuperscript{*}, %
Heping Zhang\textsuperscript{1} %
}

\date{\today}

\begin{document}

\maketitle
\vspace{-0.2 in}
\begin{center}
\textsuperscript{1}Department of Biostatistics, Yale University, New Haven, USA\\
\textsuperscript{2}Department of Statistics, Purdue University, West Lafayette, USA\\
\textsuperscript{*}Email: chuanhai@purdue.edu
\end{center}

\vspace{0.2 in}

\begin{abstract}
While widely used as a general method for uncertainty quantification, the bootstrap method encounters difficulties that raise concerns about its validity in practical applications. This paper introduces a new resampling-based method, termed {\it calibrated bootstrap}, designed to generate finite sample-valid parametric inference from a sample of size $n$. The central idea is to calibrate an $m$-out-of-$n$ resampling scheme, where the calibration parameter $m$ is determined against inferential pivotal quantities derived from the cumulative distribution functions of loss functions in parameter estimation. The method comprises two algorithms. The first, named {\it resampling approximation} (RA), employs a {\it stochastic approximation} algorithm to find the value of the calibration parameter $m=m_\alpha$ for a given $\alpha$ in a manner that ensures the resulting $m$-out-of-$n$ bootstrapped $1-\alpha$ confidence set is valid. The second algorithm, termed {\it distributional resampling} (DR),  is developed to further select samples of bootstrapped estimates from the RA step when constructing $1-\alpha$ confidence sets for a range of $\alpha$ values is of interest. The proposed method is illustrated and compared to existing methods using linear regression with and without $L_1$ penalty, within the context of a high-dimensional setting and a real-world data application. The paper concludes with remarks on a few open problems worthy of consideration.
\end{abstract}
\noindent
\noindent
{\it Keywords}: Bayesian Bootstrap; $L_{1}$ penalty; Pivotal quantities; Profile likelihood; Stochastic approximation.

\section{Introduction}



Bootstrap methods are designed to estimate the sample distribution of the statistic of interest by resampling the observed
data. Since the seminal paper of \cite{efron1979bootstrap} \citep[see also][]{efron2003second}, these methods have often served as an indispensable tool in the realm of statistical inference and prediction, due to their simplicity and efficiency \cite[see, {\it e.g.}][]{efron1993introduction}. 
It has also motivated relevant research in other contexts, including its Bayesian counterpart \citep[see][and references therein]{rubin1981bayesian,efron1982jackknife,newton1994approximate,efron2012bayesian,newton2021weighted}.

Nevertheless, practitioners may misuse the bootstrap methods in situations where no theoretical confirmation has been made \citep{shao2012jackknife}.
Care must be taken in specific applications \citep{martin2015plausibility}. Firstly, the theoretical underpinnings of the bootstrap methods predominantly center around its asymptotic validity. Specifically, researchers primarily concentrate on establishing the property of consistent estimation for the sampling distribution of the statistic of interest as the sample size approaches infinity
\citep[see, {\it e.g.},][ and references therein]{bickel1981some,singh1981asymptotic, wellner1996bootstrapping,vanderVaart1996weak,kosorok2008introduction,efron1993introduction}.
Secondly, it has been reported that bootstrap can fail, even in its intended applications based on large-sample theory \citep[see, {\it e.g.,}][and references therein]{bickel1983bootstrapping,abrevaya2005bootstrap, chernozhukov2023high}. For example, the bootstrap has been shown to provide an inconsistent estimation of the true sampling distribution of the coefficients in high-dimensional linear regression setting of $n \to \infty$ while $p/n \to c$ for some $c > 0$ with $p$ predictors, and the numerical experiments also show that the existing bootstrap methods give very poor inference on the vector of regression coefficients $\beta$ \citep{el2018can}. 

To overcome the limitations of the standard bootstrap method ({\it i.e.}, the bootstrap with $n$-out-of-$n$ replacement resampling) in certain scenarios, researchers have explored alternative resampling approaches. These alternatives, also required to be supported by their asymptotic validity, include $m$-out-of-$n$ resampling with replacement and without replacement, where the latter is also known as subset sampling \citep{politis1994large,bickel2008choice,bickel2012resampling}, an idea that dates back to \cite{bretagnolle1983lois}. 
However, to the best of our knowledge,  there is still a significant gap in the existing research when it comes to the development of theoretically supported finite sample inference methods based on resampling. 

In order to achieve desirable finite-sample performance, the resampling scheme may need to adapt itself according to the specific model and observations.
In this paper, we explore adaptive and computationally feasible resampling methods for achieving valid frequentist inference for model parameters in the finite sample case. 
That is, for a given model, the inference method should be able to provide control over the type-I error rates. 
For example, the constructed $95\%$ confidence interval of the parameter should cover the true parameter at least (and close to) 95\% of the time.   For this, we develop computational methods with sound theoretical justifications, along the lines of recent developments on the foundations of statistical inference \citep{fisher1973statistical,shafer1976mathematical,dempster2008dempster,singh2007confidence,xie2013confidence,hannig2009generalized,hannig2016generalized,martin2013inferential,martin2015inferential,martin2015plausibility,cella2022direct}, while our exposition will make use of the basic idea of the familiar pivotal method as much as possible. 

Theoretically sound methods for finite-sample valid inference have been well-studied in the existing literature. However, the computational tools that can be practically applied to find solutions are generally lacking and often infeasible, especially in high-dimensional scenarios \citep{martin2015plausibility}. The primary focus of this paper is to propose a computationally efficient, resampling-based numerical strategy to closely approximate the targeted theoretical solution. Philosophically, it is at least subtle to make inference via resampling because variability from resampling and uncertainty assessment for inference are two different concepts. 
Here, our key idea is to match the variability from resampling to uncertainty in inference. By looking at inferential problems from the 
inferential models (IMs) perspective \citep{martin2013inferential}, which could be considered as rooted in the pivotal method for hypothesis testing and confidence interval construction, we numerically quantify 
uncertainty assessment by considering loss functions in the context of parameter estimation and making use of the cumulative distributions of their sampling distributions to introduce the needed pivotal quantities or auxiliary random variables.
Consequently, we seek resampling schemes in such a way that the distribution of the loss function based on the bootstrapped estimates matches the theoretical distribution necessary for valid inference. 
Because our primary goal is to create a method that leads to valid frequentist inference in finite sample scenarios, regardless of the asymptotic relationship between the sample size $n$ and the number of parameters $p$, the method inherently possesses the capacity to be applicable in high-dimensional settings, for which the standard bootstrap has been known to be problematic as discussed above.

The rest of the paper is arranged as follows. In Section~\ref{s:motivation}, we will briefly review the generalized association method \citep{martin2015plausibility}, a foundational approach for valid parametric inference, which forms the theoretical basis of our proposed method.
In Section \ref{s:SA}, we propose a variant of bootstrap that aims to produce valid frequentist inference in finite sample scenarios. This method, called {\it calibrated bootstrap} (CB),  
consists of two steps. The first step applies a resampling method, called the {\it resampling approximation} (RA), to a set of pre-specified coverage probabilities. For each confidence coefficient, RA performs the $m$-out-of-$n$ resampling with replacement, with the value of $m$ obtained by a {\it stochastic approximation} algorithm to ensure that the resulting confidence region has the desired coverage probability.
The second step of CB, called the {\it distributional resampling} (DR), is optional and selects a common sample of the bootstrapped estimates that works for a pre-specified range of $\alpha$ values of interest. 
Although our exposition emphasizes joint inference on the unknown parameters, we also discuss marginal inference on a parameter of interest. The proposed CB method is illustrated with applications in both easy-to-verify linear regression in Section~\ref{s:SA} and challenging $L_1$ penalized linear regression in Section~\ref{s:application}, including a real data example from a diabetes study.
We conclude with a few remarks in Section \ref{s:discussion}.

\section{Foundations of Valid Inference: an Overview}
\label{s:motivation}

\ifthenelse{1=1}{}{
\subsection{Simple Examples of Valid Inference with the Pivotal Method}
\label{ss:examples} 
In this section, we use two classic examples to explain the key idea of making valid inferences using the pivotal method and its extensions, including fiducial inference \citep{fisher1930inverse} and its successors \citep[see, {\it e.g.},][and references therein]{liu2015frameworks}. 

\begin{example}[Inference on the unknown mean of a normal distribution]\label{example:f-normal}
\rm
Suppose that we have observed a sample of size $n$, $y_1,..., y_n$, from the normal distribution $N(\theta, 1)$ with unknown $\theta \in {\mathbb R}$. We are interested in making fiducial inference on the mean parameter $\theta.$ \CLdel{This can be done through the creation of a \textit{fiducial distribution} of $\theta.$} \CLadd{This can be done using the pivotal method.}

The main idea of fiducial inference can be best explained by the fiducial argument for the one observation case. In this case, $y$ is an observed value from $N(\theta, 1).$ Let
\[
    \gamma = y - \theta.
\]
It can be seen that before observing $y$, the unobserved $\gamma$ follows $N(0, 1)$. From the perspective of \cite{dempster1963further}, the idea of continuing to believe $\gamma\sim N(0, 1)$ after having observed $y$ gives rise to a \textit{fiducial distribution} of $\theta$ as $N(y, 1)$ because $\theta = y - \gamma$. This argument of ``continuing to believe'' forms the basis for fiducial inference. The unobserved quantity $\gamma$ is the so-called pivotal quantity \citep{Weerahandi1993}. 

For the case with $n$ observations, we can introduce the pivotal quantity as $G(\theta, \bar{y}) = \bar{y} - \theta,$ where $\bar{y}$ is the average of the $n$ observations; see \cite{martin2015conditional} for the use of $\bar{y}$ alone for efficient inference. We see that $G(\theta, \bar{y}) \sim N (0, 1/n)$ and, thereby, the fiducial distribution of $\theta$ is the distribution of $\bar{y} - G(\theta, \bar{y})$, which is $N (\bar{y}, 1/n).$ A symmetric $1-\alpha$ confidence interval for $\theta$ is given by $(\bar{y} -\frac{z_{1-\alpha/2}}{\sqrt{n}}, \bar{y} + \frac{z_{1 - \alpha/2}}{\sqrt{n}})$, where $z_{1- \alpha/2}$ is the $1-\alpha/2$ quantile of the standard normal distribution. Thus, the fiducial confidence interval for $\theta$ is the classical $z$-interval, which has the desired coverage probability of $1-\alpha$.

From this simple example, we notice that valid inference is made based on the property of pivotal quantities that their distributions do not depend on the unknown parameters. A reverse operation, such as $\theta=y-\gamma,$ relates the observed data and the pivotal quantity to produce the fiducial distribution for the parameter of interest and the fiducial inference on the parameter. The fiducial distribution can then be used to construct \CLdel{frequentist} confidence intervals. A justification for the validity of these confidence intervals is provided, for example, by \cite{martin2015inferential}.
\end{example}

\begin{remark}
    In fiducial inference, we treat the samples $y$ as random variables when constructing the distribution of the pivotal quantity. However,  we keep $y$ fixed at their observed values when constructing the fiducial distribution. It is crucial to differentiate between these two distinct roles that $y$ plays in this process. In this sense, fiducial inference is also seen as an attempt to obtain prior-free posterior probabilistic inference \citep{martin2013inferential}. Meanwhile, the dual roles of $y$ in this context mirror its function in both density and likelihood functions: as a variable in the former and as fixed observations in the latter.
\end{remark}

\CLadd{
\begin{remark}
    The theoretical foundation of our proposed method is IMs \citep{martin2013inferential}, a framework developed as the successor to traditional fiducial inference. We use fiducial inference and the pivotal method for illustration purposes, due to their conceptual and technical similarities. Consequently, the applicability of our method extends beyond cases where the traditional fiducial distribution is well-defined.
\end{remark}
}

\CLadd{(This example may need to be moved to the supplementary material.)} For the second example, we discuss valid inference about an unknown continuous distribution. Interestingly, we shall see that a simple resampling approximation to a fiducial distribution leads to inference similar to that based on the nonparametric bootstrap.

\begin{example}[Inference on unknown continuous distributions]\label{example:f-nonparametric}
\rm

\hfill{Suppose} that we observe a sample of size $n$, $y_1,..., y_n$, from a univariate continuous distribution $F$ on $\mathbb{R}$. We are interested in making fiducial inference on $F$ or a function of $F.$

Denote by $y_{(1)}, ..., y_{(n)}$ the order statistics of the sample $y_1,..., y_n.$ It follows from what is known as the {\it probability integral transform} that  for all $F$, the values $F(y_{i})$'s form a sample from the $\mbox{Uniform}(0,1)$ distribution. This motivates us to introduce a sample $U_1,..., U_n$ from $\mbox{Uniform}(0,1)$ as the pivotal quantities. Denote by $U_{(1)}= F(y_{(1)}), ..., U_{(n)}=F(y_{(n)})$ the order statistics of the pivotal quantities. The reverse operation produces the set-valued mapping from the space of $U$ to that of $F$:
\begin{equation}\label{eq:fiducial-bootstrap}
        \mathcal{F}_U = \left\{F:\; U_{(i-1)} \leq F(y) \le U_{(i)}\mbox{ for all } y \in [y_{(i-1)}, y_{(i)}),\; i=1,..., n+1\right\},
\end{equation}
where $U_{(0)}= 0$ and $U_{(n+1)}=1.$
For inference on the function $G=g(F)$, we have the fiducial set
\[
    \mathcal{G}_U = \left\{g(F):\; F\in\mathcal{F}_U\right\}.
\]
The fiducial distribution of $F$ can thereby be obtained by treating $U_{(1)},..., U_{(n)}$ in (\ref{eq:fiducial-bootstrap}) as auxiliary random variables. 
Accordingly, to predict a new $y_*$ with the fiducial argument, we can first create a draw of $(U_{(1)}, \dots, U_{(n)})$ and then a draw $U_*$ from $\mbox{Uniform}(0,1)$ to predict $y_*$ by
\begin{equation}\label{eq:fiducial-uniform-13}
    y_* \in [y_{(i-1)}, y_{(i)}), \qquad \qquad  \mbox{ if }
    U_* \in [U_{(i-1)}, U_{(i)});\;\mbox{ for } i=1,...,n,
\end{equation}
where $y_{(0)}=\min \mathcal Y$ and $y_{(n+1)}=\max \mathcal Y$, with $\mathcal{Y}$ representing the support of the distribution $F$.

\begin{remark}
Standard fiducial inference can be invalid \cite[see, e.g.,][and references therein for more discussions on fiducial]{Zabell1992,liu2015frameworks}. However, valid inference (see Definition~\ref{def:valid}) can still be made by predicting the pivotal quantities such as $U_{(1)}, ..., U_{(n)}$ using random sets. Details can be found in \cite{martin2013inferential}. 
\end{remark}


To this end, we consider a sequence of approximations to provide an alternative understanding of nonparametric bootstrap from the perspective of the fiducial inference in this example. First, for simplicity, we approximate the predictive distribution of the set value in \eqref{eq:fiducial-uniform-13} with a usual distribution
by replacing the interval \eqref{eq:fiducial-uniform-13} with the single point $y_{(i)}$ for $i=1,...,n$. That is,
\begin{equation}\label{eq:fiducial-uniform-14}
    y_* = y_{(i)}, \qquad \qquad  \mbox{ if }
    U_* \in [U_{(i-1)}, U_{(i)});\;\mbox{ for } i=1,...,n.
\end{equation}
This type of simplification, {\it i.e.}, replacing set-valued mapping with point-valued mapping, is related to the development in the framework of generalized fiducial inference in \cite{hannig2009generalized} and \cite{hannig2016generalized}.
For practical consideration, we choose to ignore the extreme case at $i = n + 1$. 
We notice that the corresponding probability mass of $y_*$ taking the value $y_{(i)}$ is proportional to $U_{(i)} - U_{(i-1)}$, for $i = 1,\dots,n$. It is known that the uniform spacings $(U_{(1)}-U_{(0)}, ..., U_{(i)}-U_{(i-1)}, ..., U_{(n)}-U_{(n-1)})$ have the same joint distribution as 
\[
  \left(\frac{V_1}{\sum_{k=1}^{n}V_k}, ..., \frac{V_i}{\sum_{k=1}^{n}V_k}, ..., \frac{V_{n}}{\sum_{k=1}^{n}V_k}\right), \qquad V_1, ..., V_{n} \stackrel{iid}{\sim}\mbox{Expo}(1),
\]
where $\mbox{Expo}(1)$ denotes the standard exponential distribution. This leads to the discrete predictive distribution 
\begin{equation}\label{eq:fiducial-uniform-21}
  \mathrm{P}(y_*=y_i)=P_i, \quad P_i = \frac{V_i}{\sum_{k=1}^{n}V_k} \mbox{ with } V_1, ..., V_{n} \stackrel{iid}{\sim}\mbox{Expo}(1);\;\mbox{ for } i=1,...,n.
\end{equation}

Lastly, we show that the bootstrap can be viewed as a way to make inference by approximating \eqref{eq:fiducial-uniform-21} via resampling. That is, the fundamental idea is to approximate the uncertainty represented by \eqref{eq:fiducial-uniform-21} with the variability that is to be provided by resampling. 
Denote by $\{\tilde{y}_1, \dots, \tilde{y}_n\}$ the resampled observations of $\{y_1, \dots, y_n\}$. Thus, the target is for $\{\tilde{y}_1, \dots, \tilde{y}_n\}$ to have approximately the same distribution as that of $\{y_1^*, \dots, y_n^*\}$, where each observation is sampled from the predictive distribution (\ref{eq:fiducial-uniform-21}).
Let $N_i$ be the frequency of the value $y_i$ in the resampled observations $\{\tilde{y}_1, \dots, \tilde{y}_n\}$, for $i = 1, \dots, n$.  Equivalently, we want $(N_1/\sum_{k=1}^nN_k, ..., N_i/\sum_{k=1}^nN_k, ...,N_n/\sum_{k=1}^nN_k)$ to be a rational approximation to $(P_1,..., P_n)$ in (\ref{eq:fiducial-uniform-21}). 
Note that $P=(P_1,..., P_n)$ follows the Dirichlet distribution
\begin{equation}\label{eq:fiducial-uniform-42}
  P \sim \mbox{Dirichlet}\left(\mathbf{1}_{n}\right). 
\end{equation}
Thus, in the corresponding resampling scheme, we allocate the weights generated from $(\ref{eq:fiducial-uniform-42})$  to the observations $\{y_1, \dots, y_n\}$ and resample the observations with these assigned weights. It turns out that this resampling scheme is exactly the Bayesian bootstrap \citep{rubin1981bayesian}. 

Note that
\[
\mbox{E}(P_i) = \frac{1}{n}, \quad\mbox{Var}(P_i) 
= \frac{n-1}{n^2(n+1)}, \quad\mbox{ and }\quad 
\mbox{Cov}(P_i, P_j) = -\frac{1}{n^2(n+1)};\;\mbox{ for } i\neq j.
\]
Compared to the Bayesian bootstrap, the standard bootstrap utilizes the distribution
\begin{equation}\label{eq:fiducial-uniform-44}
  n\tilde{P} \sim \mbox{Multinomial}\left(n, \frac{1}{n}, ..., \frac{1}{n}\right),
\end{equation}
with
\[
\mbox{E}(\tilde{P}_i) = \frac{1}{n}, \quad 
\mbox{Var}(\tilde{P}_i) 
 = \frac{n-1}{n^3}, \quad\mbox{ and }\quad
 \mbox{Cov}(\tilde{P}_i, \tilde{P}_j) 
 =-\frac{1}{n^3};\;\mbox{ for } i\neq j.
\]
It becomes evident that the standard bootstrap (\ref{eq:fiducial-uniform-44}) can be seen as an approximation to the Bayesian bootstrap (\ref{eq:fiducial-uniform-42}) when $n$ is large, whereas the Bayesian bootstrap can be viewed as an  approximation to fiducial inference \eqref{eq:fiducial-bootstrap}. 
\end{example}
}

\subsection{The Generalized Association Method for Parametric Inference}
\label{ss:generalized-association}
Here, we provide a brief review of the valid inference discussed in \cite{martin2015plausibility}, which will be taken as our starting point in Section~\ref{s:par-inference}, where our proposed method will be introduced.
 Let $y$ be an observed sample of size $n$ from the true population $Y \sim \mathbf{P}_\theta$, $\theta\in\Theta$. Suppose that we are interested in inference on $\theta$.
We consider the problem of estimating $\theta$ with some loss function $\ell(y, \theta)$. Here, for our exposition, we focus on the likelihood-based loss function
\begin{equation}\label{eq:loss}
    \ell(y, \theta) = -\ln L_y(\theta) + \pi(\theta),\qquad \theta\in\Theta
\end{equation}
where $L_y(\theta)$ denotes the likelihood function and $\pi(\theta)$ stands for a penalty function. 
Following 
\citeauthor{martin2015plausibility} (\citeyear{martin2015plausibility}, see Equation (1)),
we take the \textit{generalized association function}
\begin{equation}\label{eq:T-association-01}
T_{y,\theta} =  \ell(y, \hat\theta_y) - \ell(y,\theta), \qquad\theta\in\Theta, y\in {\mathbb Y}
\end{equation}
where $\hat\theta_y = \arg\min_\theta\ell(y,\theta)$ and it is assumed that $\min_\theta\ell(y,\theta)$ is finite. 

As discussed in \cite{martin2015plausibility} and \cite{martin2015inferential}, valid inference can be made by making use of the distribution of $T_{Y,\theta}$,
\begin{equation}\label{eq:key-cdf}
F_\theta(t) = \mathrm{P}_{Y|\theta}\left(T_{Y,\theta}\le t\right),
\end{equation} 
where $Y \sim \mathbf{P}_\theta$ and $\theta \in \Theta$ is the fixed parameter value.
The basic idea behind inferential models is that based on the fact that given the value $U$ of $F_\theta\left(T_{y,\theta}\right)$, the knowledge we know about $\theta$ from the observed data $y$ is exactly represented by the set
\[
    \left\{\theta:\; F_\theta\left(T_{y,\theta}\right) = U
    \right\}.
\]
This leads to an exact frequentist confidence region $\{\theta:\;F_\theta(T_{y,\theta}) \geq \alpha\}$ with coverage probability $1-\alpha$ for all $\alpha \in (0, 1)$. More precisely, we have the following theoretical result \citep[{\it c.f.},][]{martin2015plausibility}.


\begin{theorem}[Exact confidence region] If $T_{Y,\theta}$ is a continuous random variable as a function of $Y \sim \mathbf{P}_\theta$, for any $\alpha \in (0, 1)$, the set $\{\theta:\;F_\theta(T_{Y,\theta}) \geq \alpha\}$ is an exact $1-\alpha$ frequentist confidence region. Namely, for $Y \sim \mathbf{P}_\theta$ and the fixed parameter value $\theta \in \Theta$, \label{thm:exact}
\[
\mathrm{P}_{Y|\theta} \left(\theta \notin \{\theta:\;F_\theta(T_{Y,\theta}) \geq \alpha\} \right) = \alpha
\]
\end{theorem}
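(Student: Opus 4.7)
The plan is to reduce the statement to a one-line application of the probability integral transform (PIT). The quantity at the true parameter $\theta$ that controls coverage is $F_\theta(T_{Y,\theta})$, where $F_\theta$ is defined in \eqref{eq:key-cdf} as the CDF of $T_{Y,\theta}$ under $Y\sim\mathbf{P}_\theta$. First I would clarify notation: in the set $\{\vartheta:\;F_\vartheta(T_{y,\vartheta})\ge\alpha\}$, $\vartheta$ is a dummy variable, while on the left-hand side of the probability statement $\theta$ denotes the fixed true parameter used to generate $Y\sim\mathbf{P}_\theta$. The non-coverage event is therefore precisely $\{F_\theta(T_{Y,\theta})<\alpha\}$, evaluated at the true $\theta$.

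Next, I would invoke continuity. By hypothesis, $T_{Y,\theta}$ is a continuous random variable when $Y\sim\mathbf{P}_\theta$, hence its CDF $F_\theta$ is continuous. The PIT then yields
\[
F_\theta(T_{Y,\theta})\;\sim\;\mathrm{Uniform}(0,1)\qquad\text{under }Y\sim\mathbf{P}_\theta.
\]
Consequently, for any $\alpha\in(0,1)$,
\[
\mathrm{P}_{Y|\theta}\!\left(F_\theta(T_{Y,\theta})<\alpha\right)=\alpha.
\]
Combining this with the identification of the non-coverage event gives the exact equality asserted in the theorem:
\[
\mathrm{P}_{Y|\theta}\!\left(\theta\notin\{\vartheta:\;F_\vartheta(T_{Y,\vartheta})\ge\alpha\}\right)=\mathrm{P}_{Y|\theta}\!\left(F_\theta(T_{Y,\theta})<\alpha\right)=\alpha.
\]

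There is no real obstacle; the only subtle point is the role of continuity. Without it, $F_\theta(T_{Y,\theta})$ would only be stochastically no smaller than $\mathrm{Uniform}(0,1)$, yielding $\mathrm{P}_{Y|\theta}(F_\theta(T_{Y,\theta})<\alpha)\le\alpha$, i.e., conservative rather than exact coverage. Continuity of $T_{Y,\theta}$, assumed in the theorem, is precisely what upgrades this inequality to equality and therefore turns a valid confidence region into an exact one. No measurability or regularity beyond continuity of $F_\theta$ is needed, and the proof does not depend on the particular form \eqref{eq:loss} of the loss function or on properties of $\hat\theta_y$ used to construct $T_{y,\theta}$ in \eqref{eq:T-association-01}.
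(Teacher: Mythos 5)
Your proof is correct and is essentially the paper's own argument: both identify non-coverage with the event that $F_\theta(T_{Y,\theta})$ falls below $\alpha$ and apply the probability integral transform, using continuity of $T_{Y,\theta}$ to conclude $F_\theta(T_{Y,\theta})\sim\mathrm{Uniform}(0,1)$ and hence exact level $\alpha$. Your version is slightly more careful on two cosmetic points the paper glosses over --- writing the non-coverage event with the strict inequality $\{F_\theta(T_{Y,\theta})<\alpha\}$ (the paper writes $\le\alpha$, harmless under continuity) and noting that without continuity one only gets the conservative bound $\le\alpha$ --- but the route is the same.
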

\begin{proof}
If $T_{Y,\theta}$ is a continuous random variable as a function of $Y \sim \mathbf{P}_\theta$, the distribution of $F_\theta(T_{Y,\theta})$ follows the standard uniform distribution $\text{Uniform}(0,1)$. Therefore,
\begin{align*}
    \mathrm{P}_{Y|\theta} \left(\theta \notin \{\theta:\;F_\theta(T_{Y,\theta}) \geq \alpha\} \right) &= \mathrm{P}_{Y|\theta}\left(F_\theta(T_{Y,\theta}) \leq \alpha \right) \\
    &= \alpha,
\end{align*}
completing the proof.
\end{proof}

For the sake of clarity in the context of constructing confidence regions, we formally define the validity of a confidence-oriented inference procedure, which is consistent with that used in the bootstrap literature for considering large sample-based validity.
\begin{definition}[Valid Parametric Inference] A confidence-oriented inference procedure for the model parameter $\theta \in \Theta$ is said to be valid at level $\alpha\in (0,1)$ if its $1-\alpha$ confidence region $\mathbf{C}(Y)$ satisfies  
\label{def:valid}
\[
\mathrm{P}_{Y|\theta}\left(\theta \notin \mathbf{C}(Y) \right) \leq \alpha
\]
as $Y \sim \mathbf{P}_\theta$. It is said to be valid if it is valid at all levels.
\end{definition}


Thus, from Definition \ref{def:valid} and Theorem~\ref{thm:exact}, we see that the key to obtaining the valid inference based on $T_{y,\theta}$ is to be able to evaluate $F_\theta(t)$ defined in \eqref{eq:key-cdf}. Here is a simple example where $F_\theta(t)$ can be obtained analytically.

\begin{remark}
  The pioneering work of \cite{martin2015plausibility}, developed in the framework of IMs, can be viewed as to use the set $\{\theta:\;F_\theta(T_{y,\theta}) \geq \alpha\}$ to construct an exact confidence region for $\theta.$  However, it is not always feasible to derive a closed-form expression of $F_\theta(T_{y,\theta})$, necessitating Monte Carlo estimation. Meanwhile, evaluating the function value across a wide range of $\theta$ is required. Consequently, as is pointed out by \cite{martin2015plausibility}, the computational cost can become prohibitively high, especially when dealing with high-dimensional $\theta$.   
\end{remark}

\begin{remark}
    Besides \eqref{eq:T-association-01}, the generalized association function $T_{y,\theta}$ can take other forms \citep{martin2015plausibility}. Although the algorithm proposed in Section~\ref{s:SA} does not depend on this specific formulation, we choose to adhere to \eqref{eq:T-association-01} throughout our discussion. The choice is motivated by the fact that (\ref{eq:T-association-01}) conveniently summarizes all information in $y$ concerning $\theta$, and (\ref{eq:T-association-01}) is computationally simple compared to other potential forms of $T_{y,\theta}$, such as the standardized signed log likelihood ratio \citep{Barndorff1986}.
\end{remark}

\subsection{$T$-Confidence Distribution}\label{ss:confidence-distribution}
Utilizing the valid confidence set $\{\theta:\;F_\theta(T_{y,\theta}) \geq \alpha\}$ discussed in Section~\ref{ss:generalized-association}, suppose now we are interested in constructing a distribution of a function of both the unknown parameter $\theta$ and the data $y$, where $y$ is taken as random variable, 
such that it assigns at least probability $1-\alpha$ to all the $100(1-\alpha)\%$ confidence sets. Intuitively, for each fixed $y$, our goal is to create a distribution of the parameter $\theta$, denoted by $G_y$, such that for samples drawn from $G_y$ and sorted according to the value $F_\theta(T_{y,\theta})$, the proportion of these values greater than $\alpha$ is at least $1-\alpha$ for any $\alpha \in (0,1)$. By Theorem~\ref{thm:exact}, this guarantees the desired property of the distribution $G_y$.  Such a distribution will greatly facilitate the inference procedure, as the choice of $\alpha$ and the confidence set is arbitrary. Formally, we define such a distribution as a \textit{T-confidence distribution} as follows \citep[{\it c.f.},][]{martin2021inferential, martin2023b}:
\begin{definition}[$T$-Confidence Distribution] Given the observed data $y$ from $Y\sim \mathbf{P}_\theta$ and a function $T_{y, \theta}$ of $y$ and $\theta$, a distribution $G_y$, indexed by $y$ and on the space of $\theta$, is said to be a $T$-confidence distribution of the unknown parameter $\theta$ with respect to $T_{y, \theta}$ if  $G_y$ satisfies:
\label{def:confidence}
\begin{equation} \label{eq:confidence}
    \mathrm{P}_{\theta^*}\left(F_{\theta_*}(T_{y, \theta_*})\le\alpha\right) = \alpha,\quad \text{for any  $\alpha \in (0,1)$,}
\end{equation}
or equivalently,
\(F_{\theta_*}(T_{y,\theta_*}) \sim \text{Uniform}(0,1)\), when $\theta_* \sim G_y$.
\end{definition}

For simplicity, hereafter we refer to the $T$-confidence distribution defined above simply as the \textit{confidence distribution}.
It follows that confidence sets can be produced straightforwardly from confidence distributions, as shown in the following theorem.

\begin{theorem} Suppose that $G_y$ is a confidence distribution with respect to $T_{y, \theta}$ for given the observed data $y$ from $Y \sim \mathbf{P}_\theta$.
For any $\alpha \in (0,1)$, let $U^{1-\alpha}_y$ be the subset of $\Theta$ determined by
$\int_{U^{1-\alpha}_y}dG_y \geq 1-\alpha$.
Then $U^{1-\alpha}_y$, as a confidence set for the true parameter $\theta$, has at least $100(1-\alpha)\%$ coverage probability, that is, for $Y \sim \mathbf{P}_\theta$,
\[
\mathrm{P}_{Y|\theta}\left(\theta \notin U^{1-\alpha}_Y \right) \leq \alpha.\]\label{thm:2}
\end{theorem}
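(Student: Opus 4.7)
The plan is to derive the coverage guarantee by combining two Uniform(0,1) statements about the pivot $F_\theta(T_{y,\theta})$: the $G_y$-based uniformity from Definition~\ref{def:confidence}, which governs the $G_y$-mass of pivot-level sets, and the $\mathbf{P}_\theta$-based uniformity from Theorem~\ref{thm:exact}, which governs the sampling-coverage of those sets. Together these pin down the canonical realization $C_\alpha(y) := \{\theta \in \Theta : F_\theta(T_{y,\theta}) \geq \alpha\}$ of the subset $U^{1-\alpha}_y$ described by the theorem's $G_y$-mass condition, and it is through this realization that the coverage claim goes through.

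First, I would verify that taking $U^{1-\alpha}_y := C_\alpha(y)$ meets the hypothesized mass condition. By Definition~\ref{def:confidence}, $\theta_* \sim G_y$ implies $F_{\theta_*}(T_{y,\theta_*}) \sim \text{Uniform}(0,1)$, so
\[
    \int_{U^{1-\alpha}_y} dG_y \;=\; \mathrm{P}_{\theta_* \sim G_y}\!\left(F_{\theta_*}(T_{y,\theta_*}) \geq \alpha\right) \;=\; 1-\alpha,
\]
which certifies that this choice of $U^{1-\alpha}_y$ satisfies $\int dG_y \geq 1-\alpha$. Next, I would apply Theorem~\ref{thm:exact} to the same set: for the fixed true $\theta$ and $Y \sim \mathbf{P}_\theta$, the pivot $F_\theta(T_{Y,\theta})$ is $\text{Uniform}(0,1)$, hence
\[
    \mathrm{P}_{Y|\theta}\!\left(\theta \notin U^{1-\alpha}_Y\right) \;=\; \mathrm{P}_{Y|\theta}\!\left(F_\theta(T_{Y,\theta}) < \alpha\right) \;=\; \alpha,
\]
which is exactly the coverage bound claimed in the theorem.

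The step I expect to be the main obstacle is justifying the identification $U^{1-\alpha}_y = C_\alpha(y)$, because the stated condition $\int_{U^{1-\alpha}_y} dG_y \geq 1-\alpha$ does not by itself single out a unique subset of $\Theta$: many distinct subsets share that $G_y$-mass, and coverage is not automatic for all of them (indeed, for a location-scale example like $G_y = N(y,1)$ one can translate a $G_y$-mass-$(1-\alpha)$ interval away from the true $\theta$ so that frequentist coverage falls strictly below $1-\alpha$). My proposal therefore reads the phrase ``the subset of $\Theta$ determined by $\int dG_y \geq 1-\alpha$'' as singling out the canonical pivot-aligned upper-level set $C_\alpha(y)$, matching the paper's own description immediately preceding Definition~\ref{def:confidence} of sorting draws from $G_y$ by the value $F_\theta(T_{y,\theta})$ and retaining those above $\alpha$. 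With this identification made explicit, Definition~\ref{def:confidence} supplies the mass condition and Theorem~\ref{thm:exact} supplies the coverage, and the two assemble into the statement of the theorem.
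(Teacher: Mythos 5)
Your argument is correct for the set it actually treats, and it rests on the same two ingredients as the paper's proof: the $G_y$-uniformity of $F_{\theta_*}(T_{y,\theta_*})$ from Definition~\ref{def:confidence}, which certifies the mass condition, and the sampling uniformity of $F_\theta(T_{Y,\theta})$ from Theorem~\ref{thm:exact}, which yields the coverage bound. The difference is scope. The paper does not commit to the upper level set $C_\alpha(y)=\{\theta:\,F_\theta(T_{y,\theta})\geq\alpha\}$: it pushes a general mass-$(1-\alpha)$ set $U^{1-\alpha}_y$ forward through $\theta\mapsto F_\theta(T_{y,\theta})$ to an image $S^{1-\alpha}_y\subset(0,1)$, notes that $G_y$-uniformity forces $S^{1-\alpha}_y$ to carry uniform mass at least $1-\alpha$, and then applies Theorem~\ref{thm:exact} to the set $\{\theta:\,F_\theta(T_{Y,\theta})\in S^{1-\alpha}_Y\}$. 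This accommodates pivot-aligned regions of arbitrary shape (e.g., two-sided $S$), of which your $S=[\alpha,1)$ is the special case; in exchange, your specialization is airtight where the paper's generality is loose. Two points the paper glosses over simply vanish under your reading: first, the paper's coverage conclusion is established for the preimage set $\{\theta:\,F_\theta(T_{Y,\theta})\in S^{1-\alpha}_Y\}$, which contains $U^{1-\alpha}_Y$ but coincides with it only when $U^{1-\alpha}_Y$ is a union of pivot level sets, so the identification you make explicit is tacitly needed there too; second, $S^{1-\alpha}_Y$ varies with $Y$, while the final uniformity step is only immediate for a fixed set --- again harmless for your fixed $S=[\alpha,1)$.

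One correction to your supporting remark: the location-model counterexample does not work as stated. With $G_y=N(\bar y,1/n)$, any translation-equivariant interval $[\bar y+a,\bar y+b]$ of $G_y$-mass $1-\alpha$ has frequentist coverage exactly $1-\alpha$, because $\theta-\bar Y$ and $\theta_*-\bar y$ share the same $N(0,1/n)$ law; shifting the interval (and lengthening it to preserve mass) never degrades coverage in this model. The mass condition genuinely fails to imply coverage only for non-equivariant, $y$-dependent selections: for instance, with $\alpha=1/2$, take $U_y=[\bar y,\infty)$ if $\bar y>0$ and $U_y=(-\infty,\bar y]$ if $\bar y\leq 0$; each choice has $G_y$-mass $1/2$, yet at $\theta=0$ the coverage is $0$. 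So your conclusion --- that the mass condition alone cannot single out the sets for which coverage holds, and some pivot alignment must be read into the statement --- stands, but it needs this sharper witness.
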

\begin{proof}
The function mapping $\theta \mapsto F_\theta(T_{y,\theta})$ gives $U^{1-\alpha}_y \mapsto S^{1-\alpha}_y \subset (0,1)$, with $\mathrm{P}_{\theta^*}\left(F_{\theta_*}(T_{y, \theta_*}) \in S^{1-\alpha}_y \right) \geq 1-\alpha$. Since \(F_{\theta_*}(T_{y,\theta_*})\) follows the standard uniform distribution denoted by $U$, we have $\int_{S^{1-\alpha}_y}dU \geq 1-\alpha$. It follows from Theorem~\ref{thm:exact} that, when $Y \sim \mathbf{P}_\theta$,
\[
    \mathrm{P}_{Y|\theta} \left(\theta \notin \{\theta:\;F_\theta(T_{Y,\theta}) \in S^{1-\alpha}_Y\} \right) = \mathrm{P}_{Y|\theta}\left(F_\theta(T_{Y,\theta}) \notin S^{1-\alpha}_Y \right) 
    \leq \alpha,
\]
completing the proof.
\end{proof}

\begin{remark}
    A more rigorous definition of the confidence distribution from the imprecise probability point of view \citep{martin2021imprecise, martin2023b} is given in Supplementary~S.1. Furthermore, \cite{martin2023b} shows that if well-defined, the fiducial distribution obtained via the traditional fiducial inference \cite[see, e.g.,][and references therein for more discussions on fiducial]{Zabell1992,liu2015frameworks} is precisely the confidence distribution. 
\end{remark}




\begin{example}[Inference on the unknown mean of a normal distribution]\label{example:f-normal}
\rm
Suppose that we have observed a sample of size $n$, $y := (y_1,..., y_n)$, from the normal distribution $N(\theta, 1)$ with unknown $\theta \in {\mathbb R}$. We are interested in making valid inference on the mean parameter $\theta.$ Utilizing $T_{y,\theta}$ and its distribution, we have
\begin{equation}\label{eq:simple-T}
T_{y,\theta} = 
 \ell(y, \hat\theta) - \ell(y,\theta)  = \frac{1}{2}\left[ \sum_{i=1}^n(y_i-\hat\theta)^2 - \sum_{i=1}^n(y_i-\theta)^2 \right]
= -\frac{n}{2}(\bar{y}-\theta)^2,
\end{equation}
where $\hat\theta=\bar{y}$, the sample mean $\bar{y} = \frac{1}{n}\sum_{i=1}^n y_i$. This implies that $T_{Y,\theta}\sim -\frac{1}{2}\chi_1^2$, as the sampling distribution of $\bar{y}$ is $N(\theta, 1/n)$. Represent the standard normal distribution as $Z \sim N(0,1)$ and symbolize its cumulative distribution function as $\Phi(\cdot)$. For $Y \sim \mathbf{P}_\theta$,
\begin{align*}
F_\theta(t) = \mathrm{P}_{Y|\theta}\left(T_{Y,\theta} \le t\right) &= \mathrm{P}\left(\chi_1^2\geq -2 t\right) \\
&= 2\Phi(-\sqrt{-2 t}),\qquad t\leq 0.  
\end{align*}
Note that $\{\theta:\;F_\theta(T_{y,\theta}) \geq \alpha\}$ gives exactly the same classical $(1- \alpha)\%$ $z$-interval for $\theta$.

By applying the reverse operation, we obtain the distribution $\theta_* \sim N (\bar{y}, 1/n)$. We have $F_{\theta_*}(T_{y,\theta_*}) = 2 \Phi (-\sqrt{n} |\bar{y} - \theta_*|) \sim 2 \Phi (-|Z|)$ where $Z \sim N(0,1)$. Since $2 \Phi (-|Z|) \sim \text{Uniform}(0,1)$, the distribution of $\theta_*$ satisfies (\ref{eq:confidence}), and thus is the confidence distribution of $\theta$. In other words, any $100(1-\alpha)\%$ confidence set from $N (\bar{y}, 1/n)$ has at least $1-\alpha$ chance to cover the true parameter $\theta$.
\end{example}

\section{The Calibrated Bootstrap Method}
\label{s:SA}
In this section, we propose a computationally feasible method to perform valid parametric inference via an adaptive resampling procedure.  This method, termed {\it calibrated bootstrap} (CB), consists of two steps: \textit{resampling approximation} (RA) and \textit{distributional resampling} (DR). The RA step searches for a resampling scheme that can best approximate the exact confidence region of $\theta\in\Theta$ for each of a set of prespecified confidence coefficients. The DR step selects samples from the bootstrapped estimates obtained in the RA step to construct the confidence distribution of the unknown parameter of interests. These two steps are summarized as two algorithms and are discussed below in Sections \ref{ss:SA} and \ref{ss:SA-fiducial}.

\label{s:par-inference}\label{ss:parametric-inference}
\subsection{Resampling Approximation}
\label{ss:SA}
It is seen in Section~\ref{ss:confidence-distribution} that the key to producing valid inference based on $T_{y,\theta}$ is to obtain the confidence distribution $\theta_* \sim G_y$ that satifies (\ref{eq:confidence}). For the general case,  in the CB context, we are interested in a bootstrapped sample (or, in theory, population) of estimates $\tilde{\Theta}$, represented in terms of the empirical distribution $\tilde{G}_y$ based on $\tilde{\Theta}$ and referred to as {\it bootstrapped confidence distribution} in the sequel, as long as it provides valid inference in the sense:
\begin{equation}\label{eq:fiducial-dist}
\hat{\theta}_* \sim \tilde{G}_y \ \ s.t. \ \  F_{\hat{\theta}_*}(T_{y,\hat{\theta}_*}) \sim \text{Uniform}(0,1).
\end{equation}
It appears difficult to design a resampling scheme to construct a $\tilde{\Theta}$ satisfying \eqref{eq:fiducial-dist}. Here, we consider to find $\tilde{\Theta}_\alpha$ with the corresponding empirical distribution $\tilde{G}_{y}^{\alpha}$ such that for random variable $\hat{\theta}_*$ and distribution function $\tilde{G}_{y}^{\alpha}$ that depends on given $\alpha$ and $y$, when $\hat{\theta}_* \sim \tilde{G}_{y}^{\alpha}$,
\begin{equation}\label{eq:fiducial-alternative-obj}
    \mathrm{P}_{\hat{\theta}_*}\left(F_{\hat{\theta}_*}(T_{y, \hat{\theta}_*})\le\alpha\right) = \alpha, \quad \text{for some pre-defined $\alpha \in (0,1)$.}
\end{equation}
This is an easier alternative to finding $\tilde{\Theta}$ as \eqref{eq:fiducial-dist} entails \eqref{eq:fiducial-alternative-obj}. The following Lemma suggests that when $\hat{\theta}_* \sim \tilde{G}_{y}^{\alpha}$, the obtained $1-\alpha$ confidence region using the 
$\alpha$ quantile of $F_{\hat{\theta}_*}(T_{y, \hat{\theta}_*})$ 
is exact.
\begin{lemma}\label{lemma:1} Suppose $T_{Y,\theta}$ is a continuous random variable as a function of $Y \sim \mathbf{P}_\theta$ for $\theta \in \Theta$. For a distribution $\tilde{G}_{y}^{\alpha}$ such that when $\hat{\theta}_* \sim \tilde{G}_{y}^{\alpha}$, (\ref{eq:fiducial-alternative-obj}) holds for some pre-defined  $\alpha \in (0,1)$. Denote the distribution of $F_{\hat{\theta}_*}(T_{y, \hat{\theta}_*})$ as $Q_{y}$. Suppose that $\tilde{G}_{y}^{\alpha}$ has non-zero density at any $\theta_0 \in \{\theta:\;F_\theta(T_{y,\theta}) \geq \alpha\}$ and zero density at any $\theta_0 \notin \Theta$. Then, the set
\begin{equation}\label{eq:CI-region}
    \{\theta':F_{\theta'}(T_{y, \theta'}) \geq Q^{-1}_{y}(\alpha)\}, 
\end{equation}
where each $\theta'$ represents a realization of the random variable $\hat{\theta}_*$ simulated from $\tilde{G}_{y}^{\alpha}$,
provides an exact $1-\alpha$ confidence region for $\theta$ as defined in Theorem~\ref{thm:exact}, where $Q^{-1}_{y}(\cdot)$ denotes the inverse function of $Q_{y}$.
\end{lemma}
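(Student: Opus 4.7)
The plan is to reduce the lemma to Theorem~\ref{thm:exact} by showing that the cut-off $Q_y^{-1}(\alpha)$ is exactly $\alpha$, and that the set in (\ref{eq:CI-region}) coincides with the theoretical exact $1-\alpha$ confidence region $\mathcal{C}_\alpha := \{\theta\in\Theta:\ F_\theta(T_{y,\theta})\ge\alpha\}$ from Theorem~\ref{thm:exact}.

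First, I would unpack the hypothesis (\ref{eq:fiducial-alternative-obj}): by definition of $Q_y$, it states exactly $Q_y(\alpha)=\alpha$. The continuity of $T_{Y,\theta}$ as a function of $Y$, together with the stated density conditions on $\tilde{G}_y^{\alpha}$, should rule out a flat plateau of $Q_y$ at level $\alpha$ extending below $\alpha$, so that the generalized inverse $Q_y^{-1}(\alpha) = \inf\{t:Q_y(t)\ge\alpha\}$ equals $\alpha$. Substituting this back into (\ref{eq:CI-region}) identifies the candidate confidence region with $\{\theta' : F_{\theta'}(T_{y,\theta'}) \ge \alpha\}$, evaluated over realizations $\theta'\sim \tilde{G}_y^{\alpha}$.

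Next, I would use the two density conditions to show that the set of realizations satisfying $F_{\theta'}(T_{y,\theta'})\ge\alpha$ is exactly $\mathcal{C}_\alpha$: the zero-density-outside-$\Theta$ condition forces all realizations to lie in $\Theta$, while the non-zero density at every $\theta_0\in\mathcal{C}_\alpha$ guarantees that $\mathcal{C}_\alpha$ is fully contained in the support of $\tilde{G}_y^{\alpha}$. Hence the realization-based set in (\ref{eq:CI-region}) coincides with $\mathcal{C}_\alpha$, and Theorem~\ref{thm:exact} immediately delivers exact $1-\alpha$ coverage under $Y\sim\mathbf{P}_\theta$.

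The main obstacle is the identification $Q_y^{-1}(\alpha)=\alpha$: (\ref{eq:fiducial-alternative-obj}) pins down $Q_y$ only at the single point $\alpha$, so without additional regularity one only gets $Q_y^{-1}(\alpha)\le \alpha$, which would merely yield a set containing $\mathcal{C}_\alpha$ and therefore a conservative (rather than exact) region. I would therefore argue that, because $T_{Y,\theta}$ is continuous in $Y$ and $F_\theta(\cdot)$ is the c.d.f.\ of $T_{Y,\theta}$, the map $\theta\mapsto F_\theta(T_{y,\theta})$ pushed forward by $\tilde{G}_y^{\alpha}$ produces a distribution $Q_y$ with no atoms and strictly increasing near $\alpha$, closing the argument. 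If this regularity is not automatic, I would flag it explicitly as a mild additional assumption on $\tilde{G}_y^{\alpha}$ and the association function, since it is essential for the ``exact'' (as opposed to conservative) conclusion.
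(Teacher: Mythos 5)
Your proposal takes essentially the same route as the paper's proof: the calibration condition (\ref{eq:fiducial-alternative-obj}) yields $Q_y^{-1}(\alpha)=\alpha$, the density/support conditions on $\tilde{G}_y^{\alpha}$ identify the realization-based set (\ref{eq:CI-region}) with $\{\theta:\;F_\theta(T_{y,\theta})\ge\alpha\}$, and exactness then follows from Theorem~\ref{thm:exact}. Your additional care with the generalized inverse --- observing that without strict monotonicity of $Q_y$ near $\alpha$ one only obtains $Q_y^{-1}(\alpha)\le\alpha$ and hence a conservative rather than exact region --- is a legitimate refinement of a point the paper passes over by simply asserting $Q_y^{-1}(\alpha)=\alpha$ (implicitly treating $Q_y$ as invertible, as its statement of the lemma suggests), but it does not alter the structure of the argument.
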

\begin{proof}
 Condition (\ref{eq:fiducial-alternative-obj}) implies that $Q^{-1}_{y}(\alpha) = \alpha$. Since  $\tilde{G}_{y}^{\alpha}$ has non-zero density at any $\theta_0 \in \{\theta:\;F_\theta(T_{y,\theta}) \geq \alpha\}$, (\ref{eq:CI-region}) is equivalent to $\{\theta:\;F_\theta(T_{y,\theta}) \geq \alpha\}$. The exactness follows from Theorem~\ref{thm:exact}.
\end{proof}

\ifthenelse{\boolean{inline}}{
\begin{figure}
    \centering
    \begin{minipage}{0.45\linewidth}
        \centering
        \includegraphics[width=6cm]{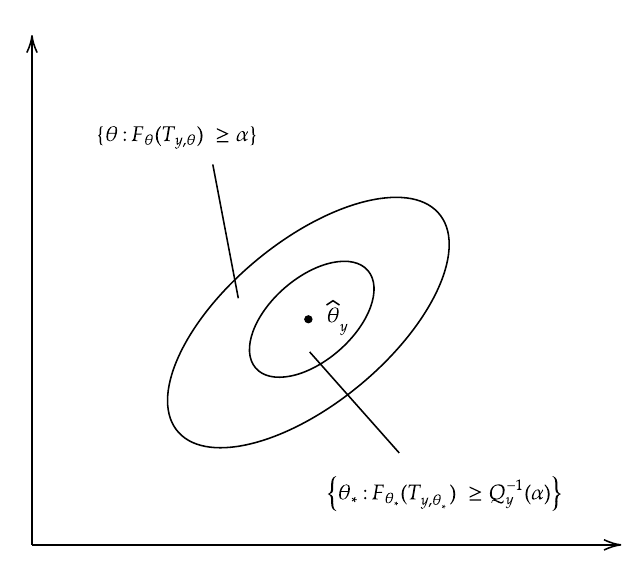}
        \textbf{(a)} $m > m_\alpha$
    \end{minipage}
    \hfill 
    \begin{minipage}{0.45\linewidth}
        \centering
        \includegraphics[width=6cm]{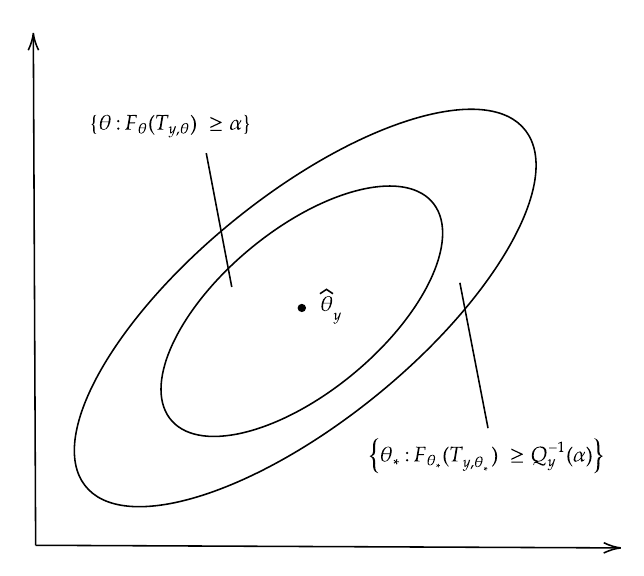}
        \textbf{(b)} $m < m_\alpha$
    \end{minipage}
    \caption{Illustration of the over-disperseness of $\hat{\theta}_*$ with different choice of $m$. The distribution of $F_{\hat{\theta}_*}(T_{y, \hat{\theta}_*})$ for $\hat{\theta}_*$ obtained by the $m$-out-of-$n$ bootstrap is denoted by $Q_{y}$, and its inverse function is denoted by $Q^{-1}_{y}(\cdot)$. The optimal $m$ denoted by $m_\alpha$ is the $m$ such that the confidence region $\{\hat{\theta}_*: F_{\hat{\theta}_*}(T_{y,\hat{\theta}_*}) \geq Q^{-1}_{y}(\alpha) \}$ closely matches the desired confidence region $\{\theta: F_{\theta}(T_{y,\theta}) \geq \alpha \}$.}
    \label{fig:contour}
\end{figure}
}{}

Denote by $r(\cdot)$ a resampling scheme $r(\cdot)$ that is used to generate the resampled data set $\tilde{y} \sim r(y)$. The variability of the estimates introduced by $r(\cdot)$ is captured through the distribution of bootstrap estimates, denoted by $\hat{\theta}_*$. Here, $\hat{\theta}_*$ is defined as the estimated parameter that minimizes the loss function for a given resampled dataset $\tilde{y}$:
\begin{equation}\label{eq:resampling-1}
    \hat{\theta}_* = \argmin_\theta\ell(\tilde{y},\theta), \quad \tilde{y} \sim r(y)
\end{equation}
Ideally, for valid inference, the variability of the estimates of $\theta$ in (\ref{eq:resampling-1}) should match its theoretically quantified uncertainty in (\ref{eq:fiducial-dist}) or (\ref{eq:fiducial-alternative-obj}). 
To search for such a resampling scheme, here we consider a more general method of $m$-out-of-$n$ bootstrap \citep{bickel2008choice}. That is,
\begin{equation}\label{eq:m-out-of-n}
    r(y) = \{\tilde{y}_1, \dots, \tilde{y}_m\}, \qquad m \geq 1
\end{equation}
with $\tilde{y}_i \ (i = 1,\dots,m)$ being a random draw with replacement from $\{y_1,..., y_n\}$. Note that (\ref{eq:m-out-of-n}) reduces to the standard bootstrap when $m = n$. We take $m$ here to be adaptive to specific models and observations in order to control the variability of the estimates of $\theta$ in (\ref{eq:resampling-1}). Numerically, as $m \to \infty$, we have $\hat{\theta}_* \stackrel{p}{\rightarrow} \hat{\theta}_y$, and $F_{\hat{\theta}_*}(T_{y,\hat{\theta}_*}) \stackrel{p}{\rightarrow} 1$, where $\stackrel{p}{\rightarrow}$ denotes convergence in probability. Theoretically, as a matter of fact, an asymptotic normality result similar to that for the maximum likelihood estimator \citep[see, {\it e.g.},][p. 415]{lehmann1991theory} can be established by treating the $\hat\theta=\argmin_{\theta\in\Theta}\ell(y,\theta)$ as the true parameter with respect to resampling, because the same proof for the maximum likelihood estimator can go through here due to the fact that the expectation of the score function at $\theta = \hat\theta$ is zero.
Conversely, as $m$ decreases, $\hat{\theta}_*$ diverges from $\hat{\theta}_y$, causing the distribution of $F_{\hat{\theta}_*}(T_{y,\hat{\theta}_*})$ to gradually shift toward 0. The effect of $m$ on the over-disperseness of the estimates $\hat{\theta}_*$ is illustrated in Figure~\ref{fig:contour}. 

For theoretical justifications, here we make a mild assumption that the inference problem is \textit{bootstrap $\epsilon$-calibratable} at level $\alpha$. Specifically, for fixed $y$, there exists a positive integer $m$ such that the $m$-out-of-$n$ bootstrapped estimates $\hat{\theta}_*$ satisfy
\begin{equation}\label{eq:bootstrap=calibratable}
        \alpha \leq \mathrm{P}_{\hat{\theta}_*}\left(F_{\hat{\theta}_*}(T_{y, \hat{\theta}_*})\leq \alpha\right) \leq \alpha + \epsilon,
\end{equation}
for some $\epsilon \in [0,\alpha]$ is the tolerance. This assumption ensures the existence of an approximate solution to (\ref{eq:fiducial-alternative-obj}). Additional insights of such a property are given in Supplementary~S.2.3. The assumption is considered mild for some common choices of $\alpha$ such as $\{0.05,0.10,\dots,0.95\}$ due to the well-documented flexibility and efficiency of the adaptive $m$-out-of-$n$ bootstrap in the literature \citep{bickel2008choice, Chakraborty2013, jiang2024}.
Similarly, we also make another mild assumption that the set of the finite number of possible values provided by the $m$-out-of-$n$ bootstrap is dense enough for practically accurate inference on $\theta$ given $y$; see the discussion in Section \ref{s:discussion} on the use of weighted likelihood estimation as alternatives to or a generalization of resampling in the context of bootstrapping.

In the first step of the proposed CB method,  our primary objective is to determine an optimal value of $m$, such that the empirical distribution of the obtained bootstrap estimates of $\theta$ in (\ref{eq:resampling-1}) from the $m$-out-of-$n$ bootstrap best approximates the distribution of $\tilde{G}_{y}^{\alpha}$ in (\ref{eq:fiducial-alternative-obj}).  This can be done using the {\it stochastic approximation} (SA) algorithm \citep{robbins1951stochastic}. See \cite{Syring2019} for the application of SA for obtaining a specific target coverage level in a different context.
For an unbiased estimator of the distribution of $F_{\hat{\theta}_*}(T_{y, \hat{\theta}_*})$ that is needed for a SA implementation,
we consider estimating the distribution of $F_{\hat{\theta}_*}(T_{y, \hat{\theta}_*})$ with the empirical distribution function obtained by $B$ resampling repetitions. Suppose that after $B$ resampling repetitions, we have $B$ simulated values
\begin{eqnarray*}
    U_1 &=&F_{\hat{\theta}_*^{(1)}}(T_{y, \hat{\theta}_*^{(1)}})\\
    &\vdots&\\
    U_B &=&F_{\hat{\theta}_*^{(B)}}(T_{y, \hat{\theta}_*^{(B)}}),
\end{eqnarray*}
where $\hat{\theta}_*^{(b)} \ (b = 1,\dots,B)$ denotes the obtained estimation of $\theta$ by minimizing the loss functions with regard to the $b$-th bootstrap samples. For sufficiently large $B$, objective (\ref{eq:fiducial-alternative-obj}) is equivalent to
\begin{equation}\label{eq:fiducial-constraint-03}
   \lim_{B\rightarrow\infty} \frac{1}{B}\sum_{b = 1}^{B}\mathbb{I}(U_b \leq \alpha) = \alpha
\end{equation}
where $\mathbb{I}(\cdot)$ denotes the indicator function. The approximation of $F_{\hat{\theta}_*}(T_{y, \hat{\theta}_*})$ for each $\hat{\theta}_*$ can be done by sampling $N$ new observation vectors $y^{(1)}, \dots, y^{(N)}$ from the population $\mathbf{P}_{\hat{\theta}_*}$
\begin{equation}\label{eq:fiducial-constraint-04}
  F_{\hat{\theta}_*}(T_{y, \hat{\theta}_*}) \approx \frac{1}{N}\sum_{i = 1}^{N}\mathbb{I}\left( T_{y^{(i)}, \hat{\theta}_*} \leq T_{y, \hat{\theta}_*} \right).  
\end{equation}
Alternative approaches, such as by reweighting the observed data using the SIR approach of \cite{rubin1987calculation}, can also be considered.


\ifthenelse{\boolean{inline}}{
\begin{algorithm}
        Specify the target coverage level $\alpha$ and 
        choose the number of bootstrap replications $B$\;
        Initialize $m^{(0)}$ with $m^{(0)}=n$ and the iteration number $t$ with $t = 0$\;
        \While{$m^{(t)}$ not converged} {
            Let $m=\lfloor m^{(t)}\rfloor + \mbox{Bernoulli}(m^{(t)} -\lfloor m^{(t)}\rfloor)$, where $\lfloor m^{(t)}\rfloor$ denotes the largest integer not greater than $m^{(t)}$ and $\mbox{Bernoulli}(p)$ a Bernoulli random variable with parameter $p$\;
            Sample  $\{(x_i^*, y_i^*):\; i=1,..., m^{(t)}\}$ or, in the matrix notation, $(X^*, Y^*)$ from $\{(x_i, y_i):\; i=1,..., n\}$ with replacement\;    
            Compute $\hat{\theta}_* = \argmin_\theta \ell(\theta, X^*, Y^*)$\;
            Evaluate $\ell^{(t)} = -[\ell(\hat{\theta}_*, X, Y) - 
            \ell(\hat{\theta}, X, Y)]$\;
            Initialize $P=0$\;
            \For{$b \gets 1$ \KwTo $B$}{
                Sample $\{(x_i^{**}, y_i^{**}):\; i=1,..., n\}$ or, in the matrix notation, $(X^{**}, Y^{**})$ from model $\mathbf{P}_{\hat{\theta}_*}$\;
                Compute $\hat{\theta}_{**} = \argmin_\theta \ell(\theta, X^{**}, Y^{**})$\;
                Evaluate $S^{(b)} = -[\ell(\hat{\theta}_{*}, X^{**}, Y^{**}) - \ell(\hat{\theta}_{**}, X^{**}, Y^{**})]$\;
                \If{$S^{(b)} \le \ell^{(t)}$}{
                    $P\gets P + 1$\;
                }
            }
            Set $Z_t \gets \frac{\mathbb{I}{\{P/B \le\alpha\}}  - \alpha}{\sqrt{B \alpha(1-\alpha)}}$\;
            Update $m^{(t+1)} = m^{(t)} + \frac{c}{t + 1} Z_t$, where $c$ is a predefined constant\;
            Set $t\gets t+1$\;
        }
        Return $\lfloor m^{(t)}\rfloor - 1$.
   \caption{The Resampling Approximation algorithm\label{alg:SA4LR-parameters} 
   }
\end{algorithm}
}{}

The right-hand side of \eqref{eq:fiducial-constraint-04} is seen as an unbiased estimator 
of $F_{\hat{\theta}_*}(T_{y, \hat{\theta}_*})$.
Making use of this fact, we propose a computationally efficient SA method of finding the optimal $m$ for a pre-specified value of $\alpha \in(0,1)$ subject to constraints \eqref{eq:fiducial-constraint-03}. The method is summarized as Algorithm \ref{alg:SA4LR-parameters}. The step size $c$ used in the algorithm controls the convergence speed, and one practical choice is $c = d\cdot n$ for some $d > 0$. In the simulation study in Section~\ref{s:application}, we use $d = 10$.  Additionally, in the illustrative examples presented in this paper, we use a small number of resampling repetitions, specifically $B = 10$. This approach has yielded satisfactory convergence results. The following theorem provides necessary theoretical results on the required stochastic update in Algorithm \ref{alg:SA4LR-parameters}. The proof is given in the Supplementary~S.2.

\begin{theorem}\label{Thm:convergence-1}
For a given targeted coverage probability $1-\alpha \ (0 < \alpha < 1)$, let the observed coverage probability be defined as a function of $m$:
\[
f^\alpha(m) \coloneqq \mathrm{P}_{\hat{\theta}_*(m)}\left(F_{\hat{\theta}_*(m)}(T_{y, \hat{\theta}_*(m)})\le\alpha\right)
\]
where $\hat{\theta}_*(m)$, with the resulting distribution $\mathrm{G}_{y,m},$ 
is the bootstrap estimate of $\theta$ obtained with the $m$-out-of-$n$ bootstrap. 
The convergence of Algorithm~\ref{alg:SA4LR-parameters} to the desired solution
\[ m_\alpha^* = \min_{m \geq 1} \{ f^\alpha(m) -  \alpha : f^\alpha(m) \geq \alpha \} \]
 is guaranteed with probability one if $f^\alpha(m)$ is non-increasing with respect to $m$ and
\begin{equation}\label{theom.eq.1}
|f^\alpha(m_0) - \alpha| > |f^\alpha(m_\alpha^*) - \alpha| 
\end{equation}
for all $m_0 \neq m_\alpha^*$, under the regularity conditions of likelihood function given in Supplementary~S.2. Moreover, if the problem is bootstrap $\epsilon$-calibratable by (\ref{eq:bootstrap=calibratable}), the constructed confidence interval by (\ref{eq:CI-region}) is at least at the $1-\alpha$ level.

\color{black}





\end{theorem}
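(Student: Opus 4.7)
The plan is to recognize Algorithm~\ref{alg:SA4LR-parameters} as a Robbins--Monro stochastic approximation scheme for locating the (essentially unique) root of the monotone map $m \mapsto f^\alpha(m) - \alpha$, and then to invoke the classical almost-sure convergence theorem for such schemes. I would split the argument into four pieces: (i) verification of the Robbins--Monro step-size and moment conditions, (ii) analysis of the conditional mean of the stochastic update $Z_t$, (iii) reduction from the discrete-valued $m$ to the real-valued iterates via the Bernoulli rounding in the algorithm, and (iv) combining with the $\epsilon$-calibratability assumption to obtain the stated coverage claim.

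First, the step sizes are $a_t = c/(t+1)$, which trivially satisfy $\sum_t a_t = \infty$ and $\sum_t a_t^2 < \infty$, and the noise $Z_t$ is almost surely bounded by $1/\sqrt{B\alpha(1-\alpha)}$, so the standard moment requirements in the Robbins--Monro theorem are immediate. Next, I would compute $\mathbb{E}[Z_t \mid m^{(t)}]$. Conditional on the outer bootstrap draw producing $\hat\theta_*$, the inner loop yields $P/B$ as an empirical CDF estimate of $F_{\hat\theta_*}(T_{y,\hat\theta_*})$ based on $B$ i.i.d. samples from $\mathbf{P}_{\hat\theta_*}$; integrating out both the outer and inner randomness under the regularity conditions on the likelihood (so that $T_{Y,\theta}$ is continuous and $F_\theta$ is well-behaved), I get
\[
\mathbb{E}\!\left[\mathbb{I}\{P/B \le \alpha\}\,\middle|\, m^{(t)}=m\right] \;=\; f^\alpha(m) + r_B(m),
\]
with a remainder $r_B(m)$ that vanishes as $B\to\infty$ and is uniformly bounded. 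Hence $\mathbb{E}[Z_t \mid m^{(t)} = m]$ has the same sign as $f^\alpha(m) - \alpha$ for $B$ sufficiently large, which is exactly what drives the iterates towards the target.

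The third piece uses the monotonicity hypothesis that $f^\alpha$ is non-increasing in $m$, together with (\ref{theom.eq.1}), to identify $m_\alpha^*$ as the unique attractor: for $m > m_\alpha^*$, $f^\alpha(m) - \alpha \le 0$ so the drift is non-positive, while for $m < m_\alpha^*$ it is non-negative, and the strict inequality in (\ref{theom.eq.1}) rules out spurious equilibria at other integer points. The Bernoulli-rounding step in the algorithm, $m = \lfloor m^{(t)}\rfloor + \mathrm{Bernoulli}(m^{(t)} - \lfloor m^{(t)}\rfloor)$, is an unbiased integer-valued sampler of $m^{(t)}$, so the drift analysis carried out on the (continuous) iterates $m^{(t)}$ transfers to the effective integer $m$ used at each step. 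Applying the almost-sure convergence theorem for Robbins--Monro algorithms under bounded noise and monotone drift (e.g.\ via the supermartingale convergence argument of Robbins--Siegmund) then gives $m^{(t)} \to m_\alpha^*$ with probability one, and the algorithm's return value $\lfloor m^{(t)}\rfloor - 1$ stabilizes accordingly.

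Finally, for the confidence-region claim: at the limiting $m = m_\alpha^*$, the $\epsilon$-calibratability assumption (\ref{eq:bootstrap=calibratable}) yields $\alpha \le f^\alpha(m_\alpha^*) \le \alpha + \epsilon$, so the quantile $Q_y^{-1}(\alpha)$ in (\ref{eq:CI-region}) is no larger than $\alpha$, and therefore $\{\theta': F_{\theta'}(T_{y,\theta'}) \ge Q_y^{-1}(\alpha)\} \supseteq \{\theta': F_{\theta'}(T_{y,\theta'}) \ge \alpha\}$. Combining this containment with Theorem~\ref{thm:exact} gives coverage at least $1-\alpha$, as asserted. The main obstacle I anticipate is handling the finite-$B$ bias $r_B(m)$ cleanly in the discrete setting: strictly speaking, for any fixed $B$ the drift is only approximately $f^\alpha(m) - \alpha$, so a careful argument (either by letting $B$ grow with $t$, or by absorbing the bias into the tolerance $\epsilon$ in the $\epsilon$-calibratability condition) is needed to avoid a small systematic error in the limiting integer.
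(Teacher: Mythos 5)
Your proposal takes essentially the same route as the paper's proof in Supplementary S.2: recognize Algorithm~\ref{alg:SA4LR-parameters} as a Robbins--Monro scheme with step sizes $c/(t+1)$ and bounded noise, show the drift of $Z_t$ is proportional to $f^\alpha(m)-\alpha$ so that the non-increasing hypothesis together with the uniqueness condition \eqref{theom.eq.1} makes $m_\alpha^*$ the sole attractor, and then derive the coverage claim from $\epsilon$-calibratability via $f^\alpha(m_\alpha^*)\ge\alpha \Rightarrow Q_y^{-1}(\alpha)\le\alpha$, so that the set \eqref{eq:CI-region} contains the exact region of Theorem~\ref{thm:exact}. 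Your closing caveat is well placed: only $P/B$ is unbiased for $F_{\hat{\theta}_*}(T_{y,\hat{\theta}_*})$, not $\mathbb{I}\{P/B\le\alpha\}$ for the event probability, and controlling that finite-$B$ smoothing of the drift is precisely the role of the regularity conditions the theorem delegates to the supplement.
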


\begin{corollary}\label{Cor:convergence-1}
The sufficient conditions of non-increasing $f^\alpha(m)$ in Theorem~\ref{Thm:convergence-1} are:
\begin{enumerate}[{C}1.]
    \item For $\theta_1$, $\theta_2$ such that $\ell(y, \theta_1) \geq \ell(y, \theta_2)$, we have $F_{\theta_1}(T_{y, \theta_1}) \leq F_{\theta_2}(T_{y, \theta_2})$, and
    \item For any $m_1 > m_2 > 0$,  $\ell(y, \hat{\theta}_*(m_2))$ first-order stochastically dominates $\ell(y, \hat{\theta}_{*}(m_1))$.
\end{enumerate}
Moreover, Theorem~\ref{Thm:convergence-1} holds asymptotically as $m,n \to \infty$, given that the asymptotic normality of maximum likelihood estimator (MLE) holds.
\end{corollary}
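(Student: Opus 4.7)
The plan is to establish the two claims separately: first, that conditions C1 and C2 together suffice for $f^\alpha(m)$ to be non-increasing, and second, that the conclusion holds asymptotically under standard regularity. For the first part, the key idea is to use C1 to rewrite the event defining $f^\alpha$ as an upper-level set of the loss $\ell(y,\cdot)$, after which C2 delivers monotonicity immediately by the definition of first-order stochastic dominance.

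Concretely, I would first observe that C1 implies $\theta \mapsto F_\theta(T_{y,\theta})$ is a non-increasing function of $\ell(y,\theta)$ (in particular, parameters with equal loss must yield equal $F$-values, so the map factors through $\ell$). Hence there exists a threshold $c_\alpha = c_\alpha(y)\in\mathbb{R}\cup\{\pm\infty\}$ such that, up to a $G_{y,m}$-null boundary,
\[
\{\theta:\,F_\theta(T_{y,\theta})\le \alpha\} \;=\; \{\theta:\,\ell(y,\theta)\ge c_\alpha\}.
\]
Plugging this into the definition of $f^\alpha(m)$ gives
\[
f^\alpha(m) \;=\; \mathrm{P}_{\hat\theta_*(m)\sim G_{y,m}}\bigl(\ell(y,\hat\theta_*(m))\ge c_\alpha\bigr).
\]
Then for any $m_1>m_2>0$, C2 says that $\ell(y,\hat\theta_*(m_2))$ first-order stochastically dominates $\ell(y,\hat\theta_*(m_1))$, so by definition $\mathrm{P}(\ell(y,\hat\theta_*(m_2))\ge c_\alpha)\ge \mathrm{P}(\ell(y,\hat\theta_*(m_1))\ge c_\alpha)$, yielding $f^\alpha(m_1)\le f^\alpha(m_2)$.

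For the asymptotic claim, I would lean on Wilks' theorem and the bootstrap central limit theorem. Asymptotic normality of the MLE gives $-2T_{Y,\theta}\xrightarrow{d}\chi^2_p$ under $Y\sim \mathbf{P}_\theta$, so $F_\theta(t)\approx \mathrm{P}(\chi^2_p\ge -2t)$ for $t\le 0$, with the limit free of $\theta$. Under standard conditions, $\sqrt{m}(\hat\theta_*(m)-\hat\theta_y)\xrightarrow{d} N(0,I^{-1}(\hat\theta_y))$ under the resampling measure, so a second-order Taylor expansion of $\ell(y,\cdot)$ at $\hat\theta_y$ gives $-2T_{y,\hat\theta_*(m)}\approx (n/m)\,W$ with $W\sim\chi^2_p$. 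Combining,
\[
f^\alpha(m) \;\approx\; \mathrm{P}_W\!\left(\mathrm{P}(\chi^2_p \ge (n/m)W)\le\alpha\right) \;=\; \mathrm{P}\!\left(W\ge (m/n)\,q_{1-\alpha}\right),
\]
where $q_{1-\alpha}$ is the $(1-\alpha)$-quantile of $\chi^2_p$. Since $W\ge 0$, this is manifestly non-increasing in $m$, and the conclusion of Theorem~\ref{Thm:convergence-1} then follows as the monotonicity hypothesis is recovered.

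The main obstacle is the asymptotic part. Wilks' theorem concerns fresh draws $Y\sim\mathbf{P}_\theta$, whereas the bootstrap CLT gives a limit under the resampling measure conditional on $y$; reconciling the two requires uniform-in-$t$ convergence of $F_{\hat\theta_*(m)}(t)$ to the $\chi^2_p$ approximation so that it can be evaluated at the random argument $T_{y,\hat\theta_*(m)}$, whose scale is $O_p(n/m)$ rather than $O_p(1)$. One must also ensure the bootstrap CLT remains valid uniformly over the range of $m$ swept by Algorithm~\ref{alg:SA4LR-parameters}, which may shrink $m$ well below $n$. By contrast, once the threshold $c_\alpha$ is identified, the sufficiency argument under C1 and C2 is essentially a one-line application of stochastic dominance.
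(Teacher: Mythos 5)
Your proposal is correct and follows essentially the same route as the paper's argument: C1 is used to show that $\theta\mapsto F_\theta(T_{y,\theta})$ factors antitonically through the loss, so that $\{F_{\hat\theta_*}(T_{y,\hat\theta_*})\le\alpha\}$ becomes an upper-level set $\{\ell(y,\hat\theta_*)\ge c_\alpha\}$ on which C2 gives monotonicity of $f^\alpha(m)$ by first-order stochastic dominance, while the asymptotic claim comes from Wilks' theorem plus the bootstrap normal approximation, giving $-2T_{y,\hat\theta_*(m)}\approx (n/m)\chi^2_p$ and hence $f^\alpha(m)\approx \mathrm{P}\left(\chi^2_p\ge (m/n)\,q_{1-\alpha}\right)$, which is manifestly non-increasing (indeed strictly decreasing, so the gap condition \eqref{theom.eq.1} is also recovered in the limit). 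One small simplification: your ``up to a $G_{y,m}$-null boundary'' caveat is unnecessary, since the level set is exactly one of $\{\ell\ge c_\alpha\}$ or $\{\ell> c_\alpha\}$ and first-order stochastic dominance orders both $\mathrm{P}(\cdot\ge c)$ and $\mathrm{P}(\cdot> c)$, so the comparison goes through in either case.
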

\begin{remark}
    One sufficient condition for C1 in Corollary~\ref{Cor:convergence-1} to hold true is the distribution of $T_{Y,\theta}$ being independent of $\theta$ for $Y \sim \mathrm{P}_\theta$. Further insights about this sufficient condition are elaborated in \cite{martin2015plausibility} as well as Supplementary~S.3. It can also be verified that condition C1 is satisfied in commonly used models, such as the linear regression model. Condition C2, on the other hand, is grounded in the concept that the model loss decreases as the sample size increases. The demonstration of the example of linear regression model satisfying these conditions are given in Supplementary~S.3.2. 
\end{remark}

Now, we attempt to approximate the confidence distribution $\hat{\theta}_* \sim \tilde{G}$ in (\ref{eq:fiducial-dist}) with some resampling scheme. Note that the sufficient condition for (\ref{eq:fiducial-dist}) is that for all $\alpha \in (0,1)$, (\ref{eq:fiducial-alternative-obj}) holds true. This motivates us to use the SA method to find the optimal $m$ for each in a range of target coverage probabilities, such as $0.05, 0.15, ..., 0.95$. The estimated resampling scheme is the $m$-out-of-$n$ bootstrap with a mixture of these $m$ values. Note that due to the limitation in the versatility of $m$-out-of-$n$ bootstrap, the confidence distribution may not be perfectly recovered. In this case, further refinement methods, as will be dicussed in Section~\ref{ss:SA-fiducial}, are necessary.

\subsection{RA with a Simple Example} \label{ss:ra-simple}
Consider the example in Section~\ref{ss:confidence-distribution} on the inference of the mean parameter $\theta$ with $y$, a sample containing $n$ observations from the model $Y \sim N(0,1)$. Here we illustrate the application of RA to numerically approximate the $(1- \alpha)\%$ $z$-interval for $\theta$ with some pre-specified $\alpha$.

By (\ref{eq:simple-T}), for each bootstrap estimate $\hat{\theta}^*\in\Theta$, the Monte-Carlo estimated function value $F_{\hat{\theta}_*}(T_{y, \hat{\theta}_*})$ with $B$ repetitions takes the form 

\begin{align}
\begin{split}
  F_{\hat{\theta}_*}(T_{y, \hat{\theta}_*}) &\approx \frac{1}{B}\sum_{b = 1}^{B}\mathbb{I}\left( T_{y^{(b)}, \hat{\theta}_*} \leq T_{y, \hat{\theta}_*} \right) \\
  &= \frac{1}{B}\sum_{b = 1}^{B}\mathbb{I}\left( \frac{1}{2}\sum_{i=1}^n (\bar{y}^{(b)} - y_i^{(b)})^2 - \frac{1}{2}\sum_{i=1}^n (\hat{\theta}_* - y_i^{(b)})^2 \leq \frac{1}{2}\sum_{i=1}^n (\bar{y} - y_i)^2 - \frac{1}{2}\sum_{i=1}^n (\hat{\theta}_* - y_i)^2 \right)  
\end{split}\label{eq-simple-1}
\end{align}
where $y^{(b)}$ is a sample of size $n$ from the model $N(\hat{\theta}_*, 1)$. 

In the RA process, an initial value $m^{(0)}$ is used to conduct the $m$-out-of-$n$ bootstrap on the observed data $y$ to obtain the resample data $\tilde{y}$, which yields the parameter value $\hat{\theta}_* = \bar{\tilde{y}}$,  the MLE of the resampled data. Next, the function value $F_{\hat{\theta}_*}(T_{y, \hat{\theta}_*})$ is calculated by (\ref{eq-simple-1}), based on which we update $m^{(0)}$ with
\[
m^{(t+1)} = m^{(t)} + \frac{c}{t + 1}\frac{\mathbb{I}{\{F_{\hat{\theta}_*}(T_{y, \hat{\theta}_*}) \le\alpha\}}  - \alpha}{\sqrt{B \alpha(1-\alpha)}},
\]
taking the iteration number as $t = 0$, where $c = 10 \cdot n$ is a predefined constant. Finally, we increase the iteration number $t$ by 1 and repeat the whole process for the repetitive updates of $m$ until convergence.

With the obtained $m$ from the RA step, we can then conduct the $m$-out-of-$n$ bootstrap to obtain the resampled $\hat{\theta}_*$ whose distribution satisfies
\(
\mathrm{P}_{\hat{\theta}_*}\left(F_{\hat{\theta}_*}(T_{y, \hat{\theta}_*})\le\alpha\right) \approx \alpha.
\)
By Theorem~\ref{thm:exact}, we wish to construct the $1-\alpha$ confidence interval
\(
    \{\hat{\theta}_*:F_{\hat{\theta}_*}(T_{y, \hat{\theta}_*}) \geq \alpha\}
\)
with the set of $\hat{\theta}_*$ values obtained from the $m$-out-of-$n$ bootstrap.
It can be seen from (\ref{eq-simple-1}) that the distribution of the left hand side of the inequality does not depend on $\hat{\theta}_*$. As a result, for $\theta_1$, $\theta_2$ such that $\ell(y, \theta_1) \geq \ell(y, \theta_2)$, we have $F_{\theta_1}(T_{y, \theta_1}) \leq F_{\theta_2}(T_{y, \theta_2})$. Thus, the $1-\alpha$ confidence interval can be obtained by the range of the $\hat{\theta}_*$ values for which the corresponding loss function values fall within the lowest $1-\alpha$ quantile of all obtained loss function values. 

For numerical evaluations, we experimented with the case $n = 50, \theta = 1, \alpha = 0.05$. The theoretical interval is $(0.757, 1.312)$ and the 
proposed RA method gives $(0.756,1.313)$.

\subsection{RA with a Linear Regression Example}
To illustrate the proposed method,  here we present a study on linear regression under a high-dimensional setting.
Further applications of Algorithm \ref{alg:SA4LR-parameters} are illustrated in Section~\ref{s:application}.

\begin{example}[High-Dimensional Linear Regression]\label{example:lr}
\rm 
Consider the linear regression model
\begin{equation}\label{eq:LR01}
    Y = X\beta + \varepsilon,\quad \varepsilon \sim N_n(0,\sigma^2 I), \beta \in {\mathbb R}^p,
\end{equation}
where $X$ is the $(n\times p)$ full-rank design matrix,  $Y$ is the vector of $n$ observed responses, $1\leq p<n$, and $\sigma^2 > 0$ denotes the variance of the noise term. In the context of high-dimensional linear regression, one may imagine the scenario where $p/n \to c$ for some $0 < c < 1$ as $n \to \infty$.

Suppose that the estimand of interest is the unknown vector $\beta$ of regression coefficients. Theoretical results from the fiducial inference perspective or, more precisely, the CIM (Conditional Inferential Models) perspective \citep[see, {\it e.g.},][]{martin2015inferential} shows that for the estimates
\[
\hat{\beta} = (X'X)^{-1}X' Y, \quad \hat{\sigma} =  \frac{1}{n-p}Y'\left(I- X(X'X)^{-1}X'\right)Y,
\]
the fiducial or confidence distribution of $\beta$ is given by
\begin{equation}\label{eq:fiducial-theoretical-1}
    \beta^* \sim t_p\left(\hat{\beta}, \hat{\sigma}^2 (X'X)^{-1}, n-p\right),
\end{equation}
where $t_p(\cdot)$ denotes the $p$-dimensional multivariate student-$t$ distribution, and if $\sigma$ is known,
\begin{equation}\label{eq:fiducial-theoretical-2}
    \beta^* \sim N_p\left(\hat{\beta}, \sigma (X'X)^{-1}\right)
\end{equation}
where $N_p(\cdot)$ denotes the $p$-dimensional multivariate Gaussian distribution. Apparently, (\ref{eq:fiducial-theoretical-1}) and (\ref{eq:fiducial-theoretical-2}) can be used to construct the classical valid frequentist confidence region for $\beta$. More relevant details and discussion, focusing on the high-dimensional setting, are provided in Supplementary~S.5 and S.6.

\ifthenelse{\boolean{inline}}{
\begin{figure}[!htp]
\centering
\includegraphics[width=5.5in]{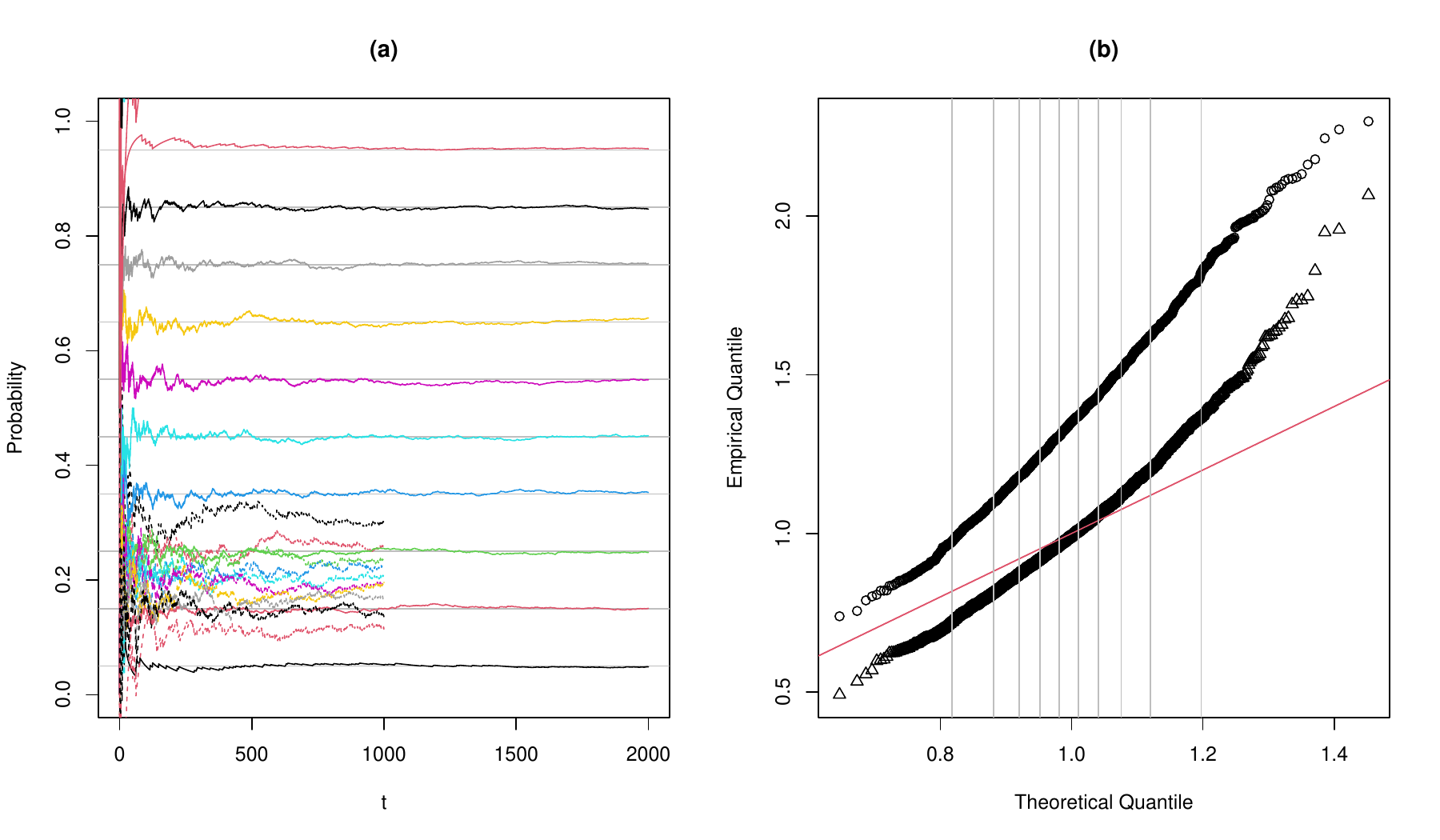}
        \caption{Illustration of SA on the Gaussian linear regression with unit error variance, $n=500$, $\kappa=p/n=0.3$, a case in the context of bootstrap for high-dimensional problems in \cite{el2018can}. Plot (a) shows the trajectories of 10 SA runs for each of 10 equally-spaced targeted coverage probabilities $0.05, 0.15, ..., 0.95$. The trajectories of 10 $m$-values in terms of $m/n$ and $m/n - 1$ are displayed in dotted lines (compressed to half its original length by displaying the values every two iterations for clarity). The solid lines denote the cumulative mean values of $F_\theta(T_{y,\theta})$.
        The circle ``$\circ$'' points in Plot (b) is the Q-Q plot of 1,000 bootstrap values of $(\beta_*-\hat{\beta})'(X'X)(\beta_*-\hat{\beta})/p$ against the theoretical quantiles. The triangle ``$\triangle$'' points show the corresponding quantiles for those obtained by bootstrap with the 10 estimated $m$ values. The gray vertical lines denote the 10 quantiles correspond to the 10 targeted coverage probabilities.
        }
\label{fig:lrParUnitVar01}
\end{figure}
}{}

\ifthenelse{\boolean{inline}}{
\begin{table}
\caption{Upper bounds of the constructed confidence Interval for $(\beta-\hat{\beta})'(X'X)(\beta-\hat{\beta})/p$ at various significance levels $\alpha$, comparing the standard bootstrap and CB. Each method uses 1,000 bootstrap samples, with the number of observations $m$ resampled from the original dataset (size $n = 500$), as determined in the RA step, specified in brackets. All intervals have lower bounds fixed at $0$.\label{tab:1} }
\vspace{0.1 in}
\footnotesize
\centering
\begin{tabular}{ccccccccccc}
\hline
\hline
              & 0.05                                                      & 0.15                                                      & 0.25                                                      & 0.35                                                      & 0.45                                                      & 0.55                                                      & 0.65                                                      & 0.75                                                      & 0.85                                                      & 0.95                                                      \\ \hline
Truth         & 1.197                                                     & 1.120                                                     & 1.075                                                     & 1.041                                                     & 1.010                                                     & 0.981                                                     & 0.952                                                     & 0.920                                                     & 0.881                                                     & 0.818                                                     \\
Bootstrap & \begin{tabular}[c]{@{}c@{}}1.820\\ {[}500{]}\end{tabular} & \begin{tabular}[c]{@{}c@{}}1.650\\ {[}500{]}\end{tabular} & \begin{tabular}[c]{@{}c@{}}1.547\\ {[}500{]}\end{tabular} & \begin{tabular}[c]{@{}c@{}}1.454\\ {[}500{]}\end{tabular} & \begin{tabular}[c]{@{}c@{}}1.385\\ {[}500{]}\end{tabular} & \begin{tabular}[c]{@{}c@{}}1.324\\ {[}500{]}\end{tabular} & \begin{tabular}[c]{@{}c@{}}1.262\\ {[}500{]}\end{tabular} & \begin{tabular}[c]{@{}c@{}}1.190\\ {[}500{]}\end{tabular} & \begin{tabular}[c]{@{}c@{}}1.107\\ {[}500{]}\end{tabular} & \begin{tabular}[c]{@{}c@{}}0.992\\ {[}500{]}\end{tabular} \\
CB            & \begin{tabular}[c]{@{}c@{}}1.176\\ {[}651{]}\end{tabular} & \begin{tabular}[c]{@{}c@{}}1.111\\ {[}630{]}\end{tabular} & \begin{tabular}[c]{@{}c@{}}1.074\\ {[}617{]}\end{tabular} & \begin{tabular}[c]{@{}c@{}}1.032\\ {[}611{]}\end{tabular} & \begin{tabular}[c]{@{}c@{}}1.006\\ {[}603{]}\end{tabular} & \begin{tabular}[c]{@{}c@{}}0.984\\ {[}597{]}\end{tabular} & \begin{tabular}[c]{@{}c@{}}0.938\\ {[}598{]}\end{tabular} & \begin{tabular}[c]{@{}c@{}}0.919\\ {[}584{]}\end{tabular} & \begin{tabular}[c]{@{}c@{}}0.896\\ {[}569{]}\end{tabular} & \begin{tabular}[c]{@{}c@{}}0.801\\ {[}557{]}\end{tabular} \\ \hline
\end{tabular}
\end{table}

}{}

For a simulation study, we took $n=500$ and $\kappa=p/n=0.3$, a case in the context of bootstrap for high-dimensional problems as discussed in \cite{el2018can}. We set $\sigma=1$ to be known. Our goal is to conduct a valid and efficient joint inference on $\beta$ through resampling. Our numerical experiments indicate that both standard bootstrap and the residual bootstrap \citep{Freedman1981} yield unsatisfactory results in cases where $\kappa=p/n=0.3$, with variations in the choice of $n$ (see Supplementary~S.7 for details). This observation aligns with the results reported in \cite{bickel1983bootstrapping} and \cite{el2018can}.
Here, we show that the proposed CB method can give a valid and efficient joint inference on $\beta$.  The SA algorithm to find the optimal $m$ was applied to each of 10 equally spaced target coverage probabilities $0.05, 0.15, ..., 0.95$. 
The trajectories of the 10 SA runs and the corresponding 10 estimated $m$-values are shown in Figure \ref{fig:lrParUnitVar01} (a). We see that all estimated $m$ values are larger than $n$ in this case. We compare the constructed frequentist $1-\alpha$ confidence intervals for $(\beta-\hat{\beta})'(X'X)(\beta-\hat{\beta})/p$ with standard bootstrap and the estimated $m$-out-of-$n$ bootstrap by RA with the 10 confidence coefficients $\alpha$. Both bootstrap confidence intervals are constructed by taking $0$ as the lower bound and $1-\alpha$ quantile of the resampled values $(\tilde{\hat{\beta}}- \hat{\beta})'(X'X)(\tilde{\hat{\beta}}-\hat{\beta})/p$  as the upper bound. As a matter of fact, similar to the reasoning in Section~\ref{ss:ra-simple}, it can be shown that in this example, this interval construction method for the proposed RA is equivalent to (\ref{eq:CI-region}). The results are summarized in Table~\ref{tab:1}. It can be seen that our proposed method gives an estimate numerically very close to the true confidence region derived theoretically.

Figure \ref{fig:lrParUnitVar01} (b) compares the distribution of the standard
 bootstrap estimates of the confidence distribution $(\beta_*-\hat{\beta})'(X'X)(\beta_*-\hat{\beta})/p$ 
 against the theoretical distribution obtained from (\ref{eq:fiducial-theoretical-2}).  The corresponding distribution obtained by the $m$-out-of-$n$ bootstrap with the mixture of 10 estimated values of $m$ (each with equal probabilities) is also shown. It can be seen that the resampling scheme found by RA dramatically outperforms the standard bootstrap in recovering the theoretical distribution of $(\beta_*-\hat{\beta})'(X'X)(\beta_*-\hat{\beta})/p$. However, it appears challenging to perfectly recover the desired theoretical distribution via a mixture of the $m$-out-of-$n$ bootstrap, regardless of the number of mixture components. This suggests the potential for exploring alternative resampling schemes that can provide greater flexibility than the $m$-out-of-$n$ bootstrap. Furthermore, this leads to our 
 proposed refinement method that is discussed next
 in Section \ref{ss:SA-fiducial}.
        
\end{example}

\subsection{Distributional Resampling of $T_{y, \theta}$}
\label{ss:SA-fiducial}

\ifthenelse{1=1}{}{
\begin{remark}[An motivating example... FIXME. {\color{red} This seems to provide deep intuitions on why bootstrap can fail!}]
\label{remark:conjecture}
    Consider $m$-out-of-$n$ bootstrap for the linear regression model in Example \ref{example:lr}.
    There is no random variable $m$ such that the bootstrap distribution is (close to) the fidicual distribution.
    
\begin{proof}
    The variability of bootstrapped $(X'X)^{-1}$, {\it, that is,}, $(X_*'X_*)^{-1}$, has more variability than $(X'X)^{-1}$ and its scaled variants. More precisely, the shape of the ellipse $(X_*'X_*)^{-1}$ varies and is different from that of $(X'X)^{-1}$. This difference cannot be eliminated by allowing $m$ to take different values.
\end{proof}
\end{remark}

{\color{red} Do we use the word {\it fiducial} in the sense of matching observed with predicted?}
}

As elaborated in Section~\ref{ss:SA}, to approximate the confidence distribution of $\theta$ with bootstrapped $\hat{\theta}_*$'s, it requires that (\ref{eq:fiducial-dist}) holds. This motivates a simple refinement method to resample the bootstrapped estimates $\hat{\theta}_*$'s in (\ref{eq:resampling-1}) obtained in the RA step to create a new distribution such that the values of $F_{\hat{\theta}_*} (T_{y,\hat{\theta}_*})$ with the resampled $\hat{\theta}_*$'s closely approximate a sample from the standard uniform distribution. Such resampled distribution can then be seen as an approximation to the targeted confidence distribution and can be used for constructing confidence sets at all $\alpha \in (0,1)$ levels by Theorem~\ref{thm:2}. The proposed method is summarized into the following three-step algorithm, which is referred to as {\it distributional resampling} (DR). 

\addtocounter{alg}{+1}
\begin{alg}[Distributional Resampling]    
Suppose that a set of $m$ values for the $m$-out-of-$n$ bootstrap method is obtained by applying the SA algorithm with varying target coverage level $\alpha$ (e.g. $\alpha = \{0.05, 0.50, 0.95\}$). The DR algorithm creates a sample of selected bootstrap estimates in three steps:
\begin{enumerate} 
    \item Create $B$ bootstrapped estimates $\hat{\theta}_*^{(1)}, \dots, \hat{\theta}_*^{(B)}$ with sufficiently large $B$ using the $m$ values obtained by the SA algorithm. 
    \item Compute $U_b = F_{\hat{\theta}_*^{(b)}}(T_{y, \hat{\theta}_*^{(b)}})$ with Monte Carlo approximation in (\ref{eq:fiducial-constraint-04}) for $b = 1,\dots,B$.
    \item Set $\tilde{\Theta} = \emptyset$ as the set of resampled estimates and repeat the following steps for $B$ times:
    \begin{enumerate}
    \item Draw $u\sim \mbox{Uniform}(0,1)$;
    \item Find $b_* = \arg\min_{b}\left|U_b- u\right|$;
    \item Add $\hat{\theta}_*^{(b_*)}$ to set $\Tilde{\Theta}$. 
    \end{enumerate}
\end{enumerate}
\end{alg}

\begin{remark}
    The DR algorithm can be employed with any sets of $\hat{\theta}_*^{(1)}, \dots, \hat{\theta}_*^{(B)}$. Nonetheless, the RA step, which offers a close, albeit not flawless, approximation to the true confidence distribution of $\theta$, can produce a more optimal effective sample size, making it a more effective choice. The example in Supplementary~S.4 shows that the construction of the confidence distribution via DR can be practically impossible without the RA step. The effect of the varying $\alpha$ levels used in the RA step on the final result is also empirically studied in Supplementary~S.4, showing that the result is robust to the grid density of $\alpha$. Another example in \cite{martin2023b} where the true confidence distribution is known is used to demonstrate the efficiency of the approximation in step (c) of the DR algorithm (see Supplementary~S.8).
\end{remark}

\begin{remark}
 For computational efficiency, the (a) and (b) steps of the DR algorithm can be combined into the SA algorithm in the RA step. Specifically, the bootstrapped estimates $\hat{\theta}_*^{(b)}$ as well as the value $ F_{\hat{\theta}_*^{(b)}}(T_{y, \hat{\theta}_*^{(b)}})$ are collected along the SA procedure. The resulting algorithm is given in Supplementary~S.9.
\end{remark}


\ifthenelse{\boolean{inline}}{
\begin{figure}
\centering
\includegraphics[width=5.5in]{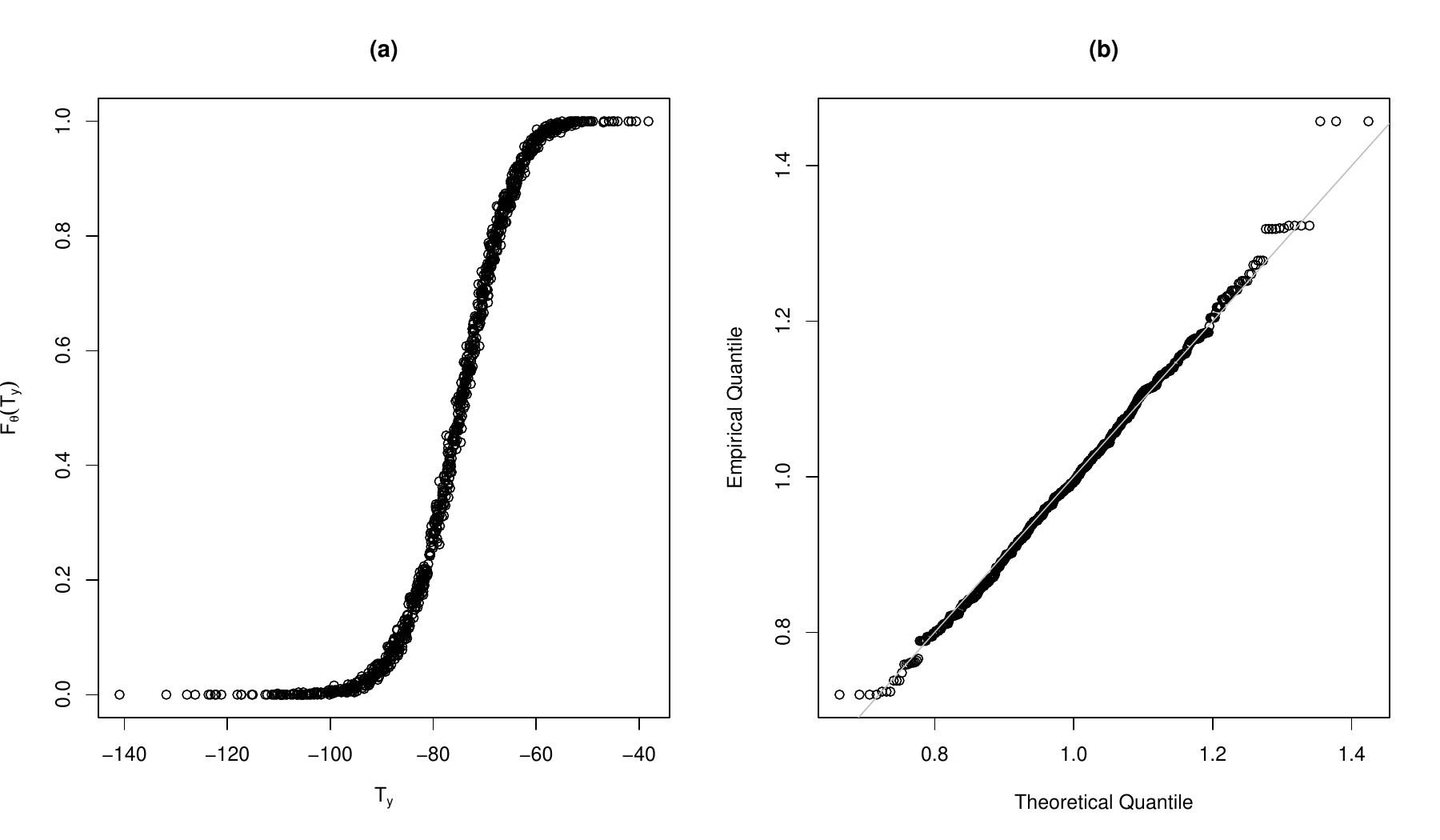}
        \caption{Illustration of the DR step after RA.
        Plot (a) shows the estimate $\hat{F}_{\hat{\theta}_*}(.)$ via Monte Carlo with 1,000 $\hat{\theta}_*$ values obtained by bootstrapping with the 10 estimated $m$ values explained in Figure \ref{fig:lrParUnitVar01}. Plot (b) is the Q-Q plot of the bootstrap values of $(\beta_*-\hat{\beta})'(X'X)(\beta_*-\hat{\beta})/p$ selected by DR against the theoretical quantiles.
        }
\label{fig:lrParUnitVar02}
\end{figure}
}{}

\addtocounter{example}{-1}
\begin{example}[Cont'd]
\rm
We applied the DR method to the bootstrapped estimates of $\theta$ obtained with the $m$ values from the RA process. Figure \ref{fig:lrParUnitVar02} (a) shows the scatter plot of the initial
$m$-out-of-$n$ bootstrapped values of 
$F_{\hat{\theta}_*}(T_{y, \hat{\theta}_*})$ versus the corresponding values of $T_{y, \hat{\theta}_*}$.
Similar to Figure \ref{fig:lrParUnitVar01} (b), Figure \ref{fig:lrParUnitVar02} (b)
shows the empirical quantiles of the resampled estimates $(\beta_*-\hat{\beta})'(X'X)(\beta_*-\hat{\beta})/p$ obtained with DR versus the theoretical quantiles. Comparing Figure \ref{fig:lrParUnitVar02} (b) to Figure \ref{fig:lrParUnitVar01} (b), we see that the application of RA followed by DR creates a nearly perfect approximation to the true confidence distribution. 
\end{example}



\subsection{Marginal Parametric Inference}\label{ss:marginal}

In the previous section, it was shown that the proposed RA method followed by the DR refinement process can provide a valid inference on the model parameters $\theta \in \Theta$. In practice, we might only be interested in the marginal inference of some functions of $\theta$. For example, when partitioning the model parameter as $\theta = (\theta_1, \theta_2) \in \Theta=\Theta_1\times \Theta_2$, we are interested in constructing a confidence region for $\theta_1$, while treating $\theta_2$ as the nuisance parameter. Here, we propose a solution inspired by \cite{martin2015plausibility, martin2023valid}, which replaces the generalized association function (\ref{eq:T-association-01}) with a form of the relative profile likelihood
\begin{equation}\label{eq:T-association-02}
    T_{y,\theta_1} =  \ell(y, \hat\theta_y) - \max_{\theta_2}\ell(y,\theta_1,\theta_2), \qquad\theta\in\Theta, y\in {\mathbb Y}.
\end{equation}
The idea is to create an association function that depends on $\theta_1$ alone, by marginalizing out $\theta_2$ via the profile likelihood. 

\ifthenelse{\boolean{inline}}{
\begin{figure}
\centering
\includegraphics[width=5.5in]{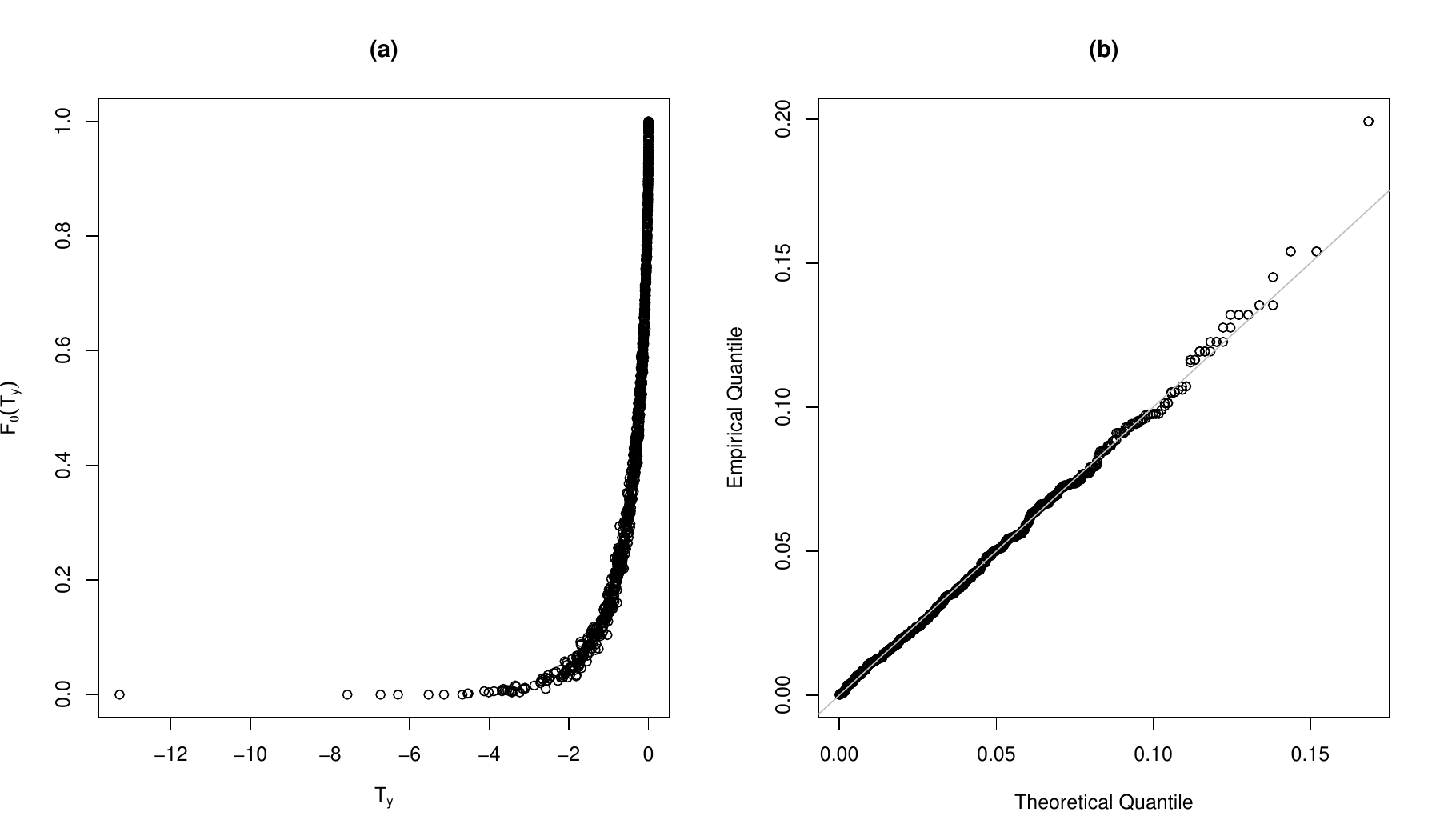}
        \caption{Illustration of the results after RA and DR for marginal inference of $\beta_1$. Plot (a) shows the estimate $\hat{F}_{\hat{\theta}_*}(.)$ via Monte Carlo with 1,000 $\hat{\theta}_*$ values obtained by bootstrapping with the 10 estimated $m$ values through the RA step. Plot (b) is the Q-Q plot of the bootstrap values of $|\beta_{1}^*-\hat{\beta}_1|$ selected by DR against the theoretical quantiles.
        }
\label{fig:lr-marginal}
\end{figure}
}

In our proposed CB method, we conduct the RA process using the generalized association function defined in (\ref{eq:T-association-02}). If the distribution of $T_{Y,\theta_1}$ as a function of $Y\sim \mathbb{P}_{\theta}$ is independent of $\theta_2$, the exactness of the constructed confidence region for $\theta_1$ via $\{\theta_1: F_{\theta_1}(T_{y,\theta_1}) \geq \alpha\}$ for $\alpha \in (0,1)$, as previously elaborated in Theorem~\ref{thm:exact} is straightforward to establish \citep[see, {\it e.g.},][]{martin2015plausibility}. Even if when $T_{y,\theta_1}$ being independent of $\theta_2$ is hard to verify, \cite{martin2023valid} demonstrates that inference conducted using the profile likelihood-based association function (\ref{eq:T-association-02}) remains both theoretically valid and empirically efficient. This robust theoretical foundation further supports the validity of the proposed CB approach. 
\addtocounter{example}{-1}
\begin{example}[Cont'd]
\rm
We considered the marginal inference of the individual coefficient $\beta_1$. The RA and DR steps are carried out with the relative profile likelihood $\eqref{eq:T-association-02}$, where $\theta_1 := \beta_1$ and $\theta_2 := \{\beta_j\, | \, j \neq 1\} $. The results are shown in Figure~\ref{fig:lr-marginal}. It can be seen that the CB method provides a very good approximation to the true fiducial distribution of $|\beta^*_{1} - \hat{\beta}_1|$. 
\end{example}

\section{Applications in $L_1$-penalized Linear Regression}\label{s:application}
Revisit the linear regression model (\ref{eq:LR01}) articulated in Example~\ref{example:lr}. The $L_1$-penalized estimator of $\beta$, also known and popularized as the Lasso estimator \citep{tibshirani1996}, is considered here and can be written as 
\[
\hat{\beta} = \argmin_\beta \frac{1}{2}(Y - X\beta)^T(Y - X\beta) + \lambda \sum_{i = 1}^p \vert \beta \vert_i,
\]
where $\lambda > 0$ is the tuning parameter. The Lasso estimator is useful especially when $p > n$, as it can provide variable selection by shrinking some coefficients toward zero. The involvement of the regularization term, allowing for a more robust estimator compared to the ordinary least squares (OLS) estimator, can result in better prediction power.

Although Lasso is widely used in real data analysis scenarios such as genome-wide association studies (GWAS, \citeauthor{Uffelmann2021}, \citeyear{Uffelmann2021}; \citeauthor{Zeng2015}, \citeyear{Zeng2015}), designing a valid and efficient inference procedure is generally considered difficult, due to the added penalty term. Resampling-based approaches, such as bootstrap and permutation tests \citep{Arbet2017}, are the most commonly used methods for model inference. However, previous theoretical studies have shown that even in the asymptotic case, the validity of the bootstrap-based inference procedure for Lasso is difficult to establish \citep{Fu2000,Chatterjee2010,Chatterjee2011}. 

A remarkable benefit of the proposed CB method lies in its ability to offer a valid frequentist inference procedure for methods like Lasso, where valid inference is challenging to derive mathematically. We will first use a simulation study to demonstrate the efficiency of the proposed method in Section \ref{ss:L1-simu} and then compare our proposed method with other methods in a real data example in Section \ref{ss:L1-data}.

\subsection{Simulation Study}\label{ss:L1-simu}

Similar to the high-dimensional setting in Example~\ref{example:lr}, our simulation studies consider the scenario where the ratio $\kappa =  p/n \to 1$. Specifically, we increase $\kappa$ by varying its value with $\{0.3, 0.5, 0.9 \}$, alongside an increase in sample size $n$ with $\{100, 200, 500\}$. We let $X_{ij} \sim N(0,1)$ and standardize $X$ to have mean 0 and standard deviation 1 in each column. For the vector of true parameters $\beta$, we assume the signal to be sparse by setting $\beta = (3, 0, \dots,0)$. In our data simulation, we set $\sigma = 1.$ When performing inference with CB, both $\beta$ and $\sigma$ are treated as unknown. In the case of Lasso, a key concern is the valid inference of $\beta$. We consider both the joint inference on $\beta$ and the marginal inference on $\beta_1$ (the true signal). To compare with other bootstrap-based methods, we include the standard bootstrap \citep{efron1979bootstrap} as well as the (debiased) residual bootstrap \citep{Chatterjee2010}. Specifically, carefully designed residual bootstrap has been shown to be particularly effective in scenarios where $p/n \to 1$ \citep{Lopes2014}, offering a powerful alternative to the standard bootstrap.

To perform the proposed CB, we take $\pi(\theta) = \lambda \sum_{i = 1}^p \vert \beta \vert_i$ in (\ref{eq:loss}) to formulate the association function $T_{y, \theta}$ with penalized likelihood. The penalty term $\lambda$ used for regularization takes values of $\lambda \in \{20.1, 40.2, 63.1\}$ correspondingly for the three cases, which is initially determined by 10-fold cross-validation on the entire observed dataset, and remains constant across all 500 repetitions. We fix $\sigma^2$ at $\hat{\sigma}^2 = \frac{(\beta-\hat{\beta}_y)'(X'X)(\beta-\hat{\beta}_y)}{n-\sum_{i=1}^p\mathbb{I}\{\hat{\beta }_y\neq 0\} }$, a refined estimator given in \cite{Reid2016}, where $\hat{\beta}_y$ denotes the Lasso estimates of $\beta$ on the entire observed dataset. The implementation of CB uses the R package $\texttt{glmnet}$ \citep{Friedman2010} and $\texttt{natural}$ \citep{Yu2019}.


A valid joint inference aims to provide a close approximation of the true distribution of the quantity $(\beta-\hat{\beta}_y)'(X'X)(\beta-\hat{\beta}_y)/p$. We apply the RA step followed by the DR step the create an approximate confidence distribution for the parameter $\beta$, and construct the $1-\alpha$ confidence region $\{\beta:  (\beta-\hat{\beta}_y)'(X'X)(\beta-\hat{\beta}_y)/p \leq q_{1-\alpha}\}$, where $q_{1-\alpha}$ is determined in such a way that $100(1-\alpha)\%$ of the samples $\hat{\beta}_*$ obtained from the DR step fall within the region. We refer to $q_{1-\alpha}$ as the magnitude of the confidence region, analogous to the length of the confidence interval in a multidimensional context.  The summarized results, presented in Table~\ref{tab:2}, demonstrate that CB can achieve the desired joint coverage rate while the standard bootstrap exhibits significant undercoverage. The residual bootstrap also performs effectively in such $p/n \to 1$ simulation setting, as validated in \cite{Lopes2014}; however, it tends to yield more conservative intervals than CB.

\ifthenelse{\boolean{inline}}{
\begin{table}
\caption{\label{tab:2}Estimated coverage probabilities of the true $\beta$ and expected magnitude for the constructed $100(1-\alpha)\%$ joint confidence region with different values of $\alpha$ using the model settings ($p/n \to 1$). The standard deviation of each value (estimated with bootstrap) is given in parentheses.}
\vspace{0.1 in}
\centering
\scriptsize
\begin{tabular}{cccccccccc}
\hline
                                                                                 &          & \multicolumn{2}{c}{CB}        &  & \multicolumn{2}{c}{Standard Bootstrap} &  & \multicolumn{2}{c}{Residual Bootstrap} \\ \cline{3-4} \cline{6-7} \cline{9-10} 
Model                                                                            & $\alpha$ & Coverage      & Magnitude     &  & Coverage           & Magnitude         &  & Coverage           & Magnitude         \\ \hline
\multirow{3}{*}{\begin{tabular}[c]{@{}c@{}}$n = 100$, \\ $p = 30$\end{tabular}}  & 0.05     & 0.952 (0.010) & 0.463 (0.002) &  & 0.780 (0.019)      & 0.270 (0.003)     &  & 0.958 (0.009)      & 0.471 (0.002)     \\
                                                                                 & 0.15     & 0.862 (0.015) & 0.328 (0.001) &  & 0.604 (0.022)      & 0.183 (0.002)     &  & 0.872 (0.015)      & 0.334 (0.001)     \\
                                                                                 & 0.25     & 0.780 (0.019) & 0.259 (0.001) &  & 0.486 (0.022)      & 0.142 (0.002)     &  & 0.786 (0.018)      & 0.265 (0.001)     \\ \hline
\multirow{3}{*}{\begin{tabular}[c]{@{}c@{}}$n = 200$, \\ $p = 100$\end{tabular}} & 0.05     & 0.946 (0.010) & 0.201 (0.001) &  & 0.446 (0.022)      & 0.077 (0.001)     &  & 0.962 (0.009)      & 0.205 (0.000)     \\
                                                                                 & 0.15     & 0.864 (0.015) & 0.152 (0.000) &  & 0.298 (0.020)      & 0.052 (0.001)     &  & 0.864 (0.015)      & 0.154 (0.000)     \\
                                                                                 & 0.25     & 0.758 (0.019) & 0.125 (0.000) &  & 0.192 (0.018)      & 0.040 (0.001)     &  & 0.762 (0.019)      & 0.127 (0.000)     \\ \hline
\multirow{3}{*}{\begin{tabular}[c]{@{}c@{}}$n = 500$,\\  $p = 450$\end{tabular}} & 0.05     & 0.950 (0.010) & 0.045 (0.000) &  & 0.844 (0.016)      & 0.034 (0.000)     &  & 0.946 (0.010)      & 0.046 (0.000)     \\
                                                                                 & 0.15     & 0.866 (0.015) & 0.034 (0.000) &  & 0.726 (0.020)      & 0.026 (0.000)     &  & 0.876 (0.015)      & 0.035 (0.000)     \\
                                                                                 & 0.25     & 0.762 (0.019) & 0.028 (0.000) &  & 0.646 (0.021)      & 0.023 (0.000)     &  & 0.778 (0.019)       & 0.029 (0.000)     \\ \hline
\end{tabular}
\end{table}
}{}

\ifthenelse{\boolean{inline}}{
\begin{table}
\caption{\label{tab:3}Estimated coverage probabilities of the true $\beta_1$ and expected length for the constructed $100(1-\alpha)\%$ marginal confidence interval with different values of $\alpha$ using the model settings ($p/n \to 1$). The standard deviation of each value (estimated with bootstrap) is given in parentheses.}
\vspace{0.1 in}
\centering
\scriptsize
\begin{tabular}{cccccccccc}
\hline
                                                                                 &          & \multicolumn{2}{c}{CB}        &  & \multicolumn{2}{c}{Standard Bootstrap} &  & \multicolumn{2}{c}{Residual Bootstrap} \\ \cline{3-4} \cline{6-7} \cline{9-10} 
Model                                                                            & $\alpha$ & Coverage      & Length        &  & Coverage           & Length            &  & Coverage           & Length            \\ \hline
\multirow{3}{*}{\begin{tabular}[c]{@{}c@{}}$n = 100$, \\ $p = 30$\end{tabular}}  & 0.05     & 0.950 (0.010) & 0.735 (0.002) &  & 0.510 (0.022)      & 0.409 (0.002)     &  & 0.950 (0.010)      & 0.739 (0.001)     \\
                                                                                 & 0.15     & 0.844 (0.016) & 0.615 (0.002) &  & 0.320 (0.021)      & 0.298 (0.001)     &  & 0.860 (0.016)      & 0.616 (0.001)     \\
                                                                                 & 0.25     & 0.764 (0.019) & 0.544 (0.001) &  & 0.212 (0.018)      & 0.238 (0.001)     &  & 0.756 (0.019)      & 0.544 (0.001)     \\ \hline
\multirow{3}{*}{\begin{tabular}[c]{@{}c@{}}$n = 200$, \\ $p = 100$\end{tabular}} & 0.05     & 0.940 (0.011) & 0.634 (0.001) &  & 0.218 (0.018)      & 0.287 (0.001)     &  & 0.958 (0.009)      & 0.639 (0.001)     \\
                                                                                 & 0.15     & 0.858 (0.016) & 0.549 (0.001) &  & 0.072 (0.012)      & 0.210 (0.001)     &  & 0.862 (0.015)      & 0.553 (0.000)     \\
                                                                                 & 0.25     & 0.746 (0.019) & 0.499 (0.001) &  & 0.038 (0.009)      & 0.167 (0.001)     &  & 0.756 (0.019)      & 0.501 (0.000)     \\ \hline
\multirow{3}{*}{\begin{tabular}[c]{@{}c@{}}$n = 500$,\\  $p = 450$\end{tabular}} & 0.05     & 0.942 (0.010) & 0.397 (0.001) &  & 0.188 (0.017)      & 0.178 (0.000)     &  & 0.938 (0.011)      & 0.400 (0.000)     \\
                                                                                 & 0.15     & 0.852 (0.016) & 0.344 (0.000) &  & 0.086 (0.013)      & 0.131 (0.000)     &  & 0.862 (0.015)      & 0.345 (0.000)     \\
                                                                                 & 0.25     & 0.750 (0.019) & 0.313 (0.000) &  & 0.048 (0.010)      & 0.104 (0.000)     &  & 0.760 (0.019)      & 0.313 (0.000)     \\ \hline
\end{tabular}
\end{table}
}{}

We also performed the marginal inference on $\beta_1$. The confidence interval is constructed as $\{\beta_1 : |\beta_1 - \hat{\beta}_{y,1}| \leq q_{1-\alpha}\}$ where $q_{1-\alpha}$ is chosen such that $100(1-\alpha)\%$ of the samples $\hat{\beta}^*_1$ obtained from the DR step are covered by the interval. The summarized results are shown in Table~\ref{tab:3}, demonstrating that CB can achieve the desired marginal coverage rate. In contrast, the standard bootstrap method continues to exhibit significant undercoverage.


\subsection{Real Data Example}\label{ss:L1-data}

Consider the diabetes study introduced by \cite{efron2004least}. This study encompasses ten baseline variables ($p = 10$), including age, sex, body mass index (BMI), mean arterial pressure (MAP), and six blood serum measurements (S1-S6).  The primary response variable of interest corresponds to a quantitative measure of disease progression, recorded one year after baseline assessment, for each of the $n = 442$ diabetes patients in the dataset. 

We start by standardizing the variables so that $\sum_{i=1}^n x_{ij} = 0, \frac{1}{n}\sum_{i=1}^n x^2_{ij} = 1, \mbox{for } j = 1,\dots,p$ and $\frac{1}{n}\sum_{i=1}^n y_i = 0$. We also performed a routine linear regression diagnostic analysis to verify that the basic assumptions of the linear regression model are appropriate. Then, we fit a Lasso version of the model to the dataset for predicting the response variable of interest, perform variable selection, and make inference on regresion coefficients of the predictors. The penalty value $\lambda \approx 520$ is selected by 10-fold cross-validation, and $\sigma^2$ is estimated using the residual sum of squares from the standard linear regression model including all predictors. 

\ifthenelse{\boolean{inline}}{
\begin{figure}
\centering
\includegraphics[width=5in]{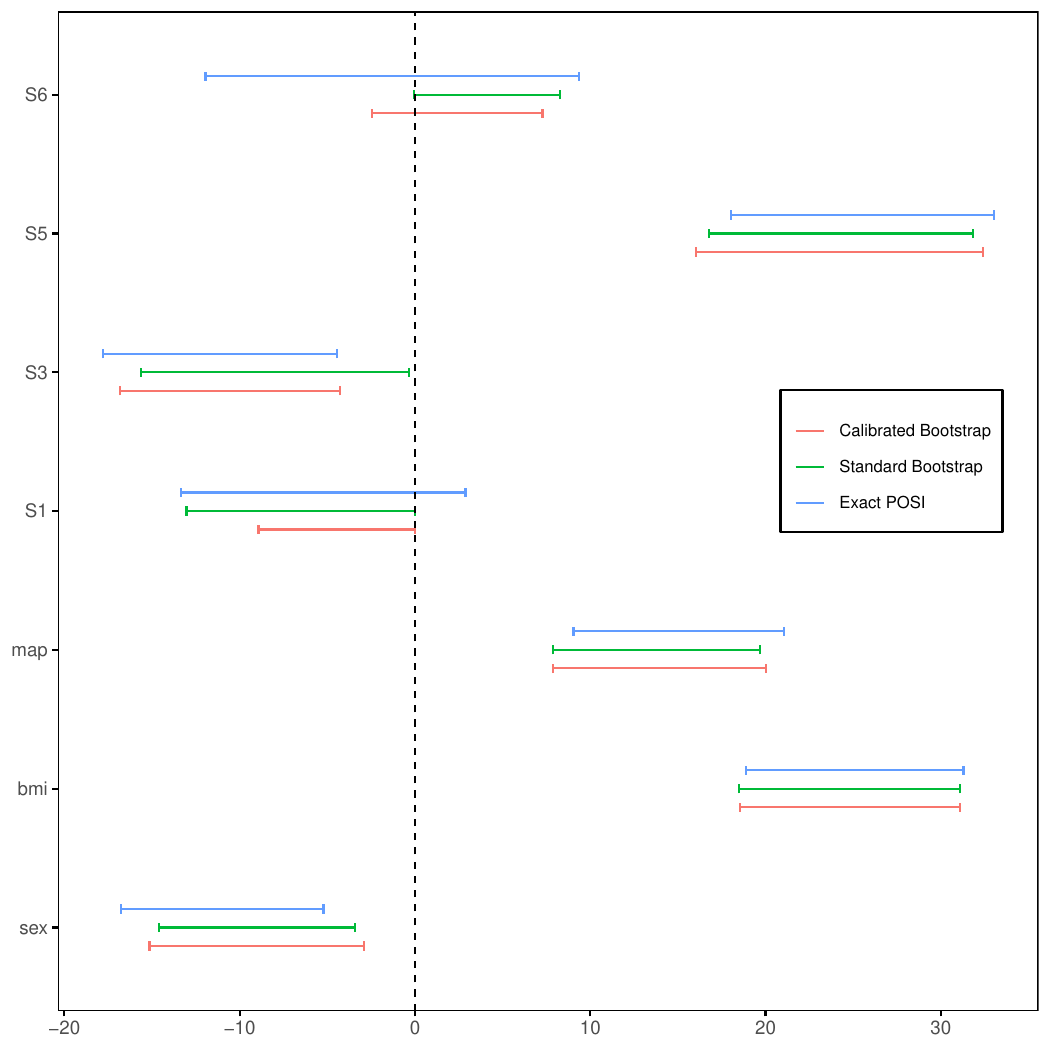}
        \caption{Constructed 95\% confidence intervals for the seven selected variables by Lasso ($\lambda = 520$). The intervals constructed with our proposed CB method are shown in red lines. The intervals constructed with the standard bootstrap are shown in green lines. Exact POSI denotes the post-selection inference approach of \cite{Lee2016}, and the constructed intervals are shown in blue lines.
        }
\label{fig:diabetes}
\end{figure}
}{}

Lasso yields a model with seven variables: sex, BMI, MAP, S1, S3, S5 and S6. For these selected variables, we compare the confidence intervals obtained by our proposed CB method with those obtained by the standard bootstrap. We also include a recent method of \cite{Lee2016} designed for post-selective inference of Lasso. The method, which we refer to as exact POSI, is capable of constructing a valid confidence interval for the predictors selected by Lasso, while conditioning on the selected model. The constructed 95\% intervals with the three methods are shown in Figure~\ref{fig:diabetes}. For the three methods, our method ensures a minimum of $95\%$ coverage under the full model with 10 predictors, in contrast to exact POSI, which guarantees this level of coverage conditional on the model with seven selected predictors. The standard bootstrap does not warrant $95\%$ coverage. The intervals from our CB method are shown to be comparable in length to those of other methods but are notably shorter than the exact POSI for S6. The results also indicate a consensus among all methods regarding the non-significance of S6. 

\section{Concluding Remarks}
\label{s:discussion}
In this paper, we proposed a resampling approximation approach that enables valid finite sample joint inference based on likelihood functions and marginal inference based on profile likelihood. It can be easily extended to cases where general loss functions are used for point estimation. Avoiding the limitations associated with conventional bootstrap techniques, the proposed method is shown to achieve valid inference outcomes through calibrated resampling and refinements. 
To our knowledge, this is the first study to adapt the resampling method for valid inference in finite-sample scenarios. 
Although the proposed method does involve higher computational costs compared to conventional bootstrap techniques, it remains computationally feasible. Moreover, it can be readily parallelized on modern computer clusters, further enhancing its computational feasibility and efficiency.

The key idea in our development of resampling-based methods for finite-sample valid inference is to find a resampling scheme that guarantees the validity of resulting confidence region for a pre-specified confidence level. In the proposed CB method, we created a stochastic approximation algorithm to find an adaptive $m$-out-of-$n$ resampling scheme. Alternative resampling schemes and alternative ways of creating samples of bootstrapped estimates can be considered in future research and applications. For example, weighted maximum likelihood estimates with weights drawn from an adaptive $\mbox{Dirichlet}(\delta {\mathbf 1})$ distribution can be considered, with $\delta$ representing the calibration parameter. This method is expected to be especially useful for the case with small observed data samples. 

Nevertheless, it is worth noting that obtaining exact marginal inferences for individual parameters can be a complex undertaking, necessitating further research. This highlights the broader challenges of conducting finite-sample valid inference in complex models. In this article, we proposed an approach based on the idea in \cite{martin2015plausibility,martin2023valid} for ``marginalizing out'' nuisance parameters. While this method is shown to be valid and empirically efficient, it points to a potential direction that invites creative thoughts on marginal inference, a challenging problem for all existing schools of thought.

The primary focus of this paper is the development of computational methods that facilitate efficient parametric inference. \cite{cella2022direct} developed an IM framework for approximate inference on risk minimizers in a nonparametric context. Given that many modern machine learning applications necessitate inference based on unknown models or loss functions, an intriguing future direction would be to explore whether the insights from this paper can be effectively adapted for efficient inference under the framework proposed by \cite{cella2022direct}.

\section*{Acknowledgements}
The authors are grateful to the editor, the associate editor, and anonymous referees for their insightful, critical, and constructive comments on an earlier version of the paper.
Zhang and Jiang are supported in part by U.S. National Institutes of Health grants R01HG010171 and R01MH116527 and National Science Foundation grant DMS-2112711. Liu is supported in part by U.S. National Science Foundation grant DMS-2412629.

\bibliographystyle{rss}
\bibliography{ims}

\end{document}



\def\spacingset#1{\renewcommand{\baselinestretch}%
{#1}\small\normalsize} \spacingset{1}

\newcommand{\bX}{{\mathbf X}}
\newcommand{\bY}{{\mathbf Y}}
\newcommand{\bm}[1]{\boldsymbol{#1}}


\renewcommand{\bm}[1]{#1}
\newcommand{\bx}{\boldsymbol{x}}
\newcommand{\by}{\boldsymbol{y}}


\addtolength{\textheight}{.5in}%

\appendix
\renewcommand{\thesection}{S}
\renewcommand{\theequation}{S.\arabic{equation}} 
\setcounter{equation}{0}

\renewcommand{\thefigure}{S.\arabic{figure}}

\setcounter{figure}{0}

\renewcommand{\thetable}{S.\arabic{table}}
\setcounter{table}{0}

\renewcommand{\thealgocf}{S.\arabic{algocf}} 
\setcounter{algocf}{0} 

  \bigskip
  \bigskip
  \bigskip
  \begin{center}
    {\Large {\bf Finite Sample Valid Inference via Calibrated Bootstrap}:\\

\vspace{0.2in}

    \textit{\Large Supplementary Material: Proofs, Discussions \\ and Additional Experiments}
}

\vspace{0.1in}

\text{Yiran Jiang} \\
\text{Department of Biostatistics, Yale University}

\vspace{0.1in}

\text{Chuanhai Liu} \\
\text{Department of Statistics, Purdue University}

\vspace{0.1in}

\text{Heping Zhang} \\
\text{Department of Biostatistics, Yale University}




\end{center}
  \medskip
\tableofcontents
\clearpage

\subsection{Confidence Distribution from Perspective of the Imprecise Probability} \label{sp:imprecise}
In this section, we introduce the general definition of the confidence distribution from the imprecise probability perspective, elaborated in \cite{martin2021imprecise, martin2023b}. Following the notation in
\cite{martin2015plausibility, martin2018}, we define the \textit{contour} function, as a function of the observed data $y$ and the parameter value $\theta$:
\[
\pi_{y}(\theta)=\mathrm{P}_{Y \mid \theta}\left(T_{Y, \theta} \leq T_{y, \theta}\right), \quad \theta \in \Theta,
\]
which is exactly $F_\theta(T_{y,\theta})$ defined in Section 2.2. This function determines a possibility distribution defined by
\[
\bar{\Pi}_{y}(A)=\sup _{\theta \in A} \pi_{y}(\theta), \quad A \subseteq \Theta.
\]
This possibility distribution is an \textit{upper probability} and, therefore, corresponds to the upper envelope defined by a collection of ordinary probabilities, called the \textit{credal set} of $\bar{\Pi}_{y}$ :
\[
\mathscr{C}\left(\bar{\Pi}_{y}\right)=\left\{\mathrm{Q}_{y} \in \operatorname{probs}(\Theta): \mathrm{Q}_{y}(\cdot) \leq \bar{\Pi}_{y}(\cdot)\right\},
\]
which is the set of data-dependent probability distributions $\mathrm{Q}_{y}$ supported on $\Theta$ that are dominated by $\bar{\Pi}_{y}$. For possibility distributions, the credal set has a nice characterization such that:
\[
\mathrm{Q}_{y} \in \mathscr{C}\left(\bar{\Pi}_{y}\right) \Longleftrightarrow \mathrm{Q}_{y}\left(\left\{\theta: \pi_{y}(\theta) \leq \alpha\right\}\right) \leq \alpha, \quad \text { for all } \alpha \in[0,1].
\]
Since the set $\left\{\theta: \pi_{y}(\theta)>\alpha\right\}$ is a confidence set by construction (see Section 2.2), one can describe the credal set as the set of all those data-dependent distributions that assign at least probability $1-\alpha$ to all the $100(1-\alpha) \%$ confidence sets. This suggests defining the \textit{confidence distribution} $\mathrm{Q}_{y}^{\star}$ ({\it see}, \citeauthor{martin2021imprecise},\citeyear{martin2021imprecise, martin2023b}) as that which gives
\[
\mathrm{Q}_{y}^{\star}\left(\left\{\theta: \pi_{y}(\theta) \leq \alpha\right\}\right)=\alpha, \quad \text { for all } \alpha \in[0,1],
\]
if it exists. Note that the definition above is exactly the same as (4) in Definition~2.

\subsection{Technical Details of Theorem~3} \label{sps:proof-1}
\subsubsection{Regularity Conditions}







Firstly, we specify the regularity conditions for the likelihood function:
\begin{enumerate}[{C}1]
    \item \textbf{Differentiability}: For almost all $y$, $\frac{\partial}{\partial \theta} L_y(\theta)$ exists and is continuous for $\forall \theta \in \Theta$, where \( L_y(\theta) \) is the likelihood function.
    \item \textbf{Existence of an Interior Maximum}: There must be a point in the parameter space where the likelihood function attains its maximum. That is, 
    \[ \exists \, \hat{\theta} \in \Theta : L_y(\hat{\theta}) \geq L_y(\theta) \; \text{for all} \; \theta \in \Theta \]
\end{enumerate}
The key references for the proof are \cite{robbins1951stochastic} and \cite{dupa1982stochastic} which investigated the stochastic approximation algorithm. 

\subsubsection{Proof of Theorem~3}
The stochastic approximation algorithm aims at finding a unique root $m = m_\alpha$ such that $f^\alpha(m) \coloneqq \mbox{Prob}\left(F_{\hat{\theta}_*(m)}(T_{y, \hat{\theta}_*(m)})\le\alpha\right) \approx \alpha$.
First, we discuss the condition for the existence of the root. Recall that
\begin{align*}
    F_{\hat{\theta}_*(m)}(T_{y, \hat{\theta}_*(m)}) &= \mbox{Prob} \left(T_{Y, \hat{\theta}_*(m)} \leq T_{y, \hat{\theta}_*(m)}\right)\\
    &= \mbox{Prob} \left(\ell(Y, \hat{\theta}_Y) - \ell(Y, \hat{\theta}_*(m)) \leq \ell(y, \hat{\theta}_y) - \ell(y, \hat{\theta}_*(m)) \right).
\end{align*}
For finite $n$, as $m \to \infty$, $\hat{\theta}_*(m) \to \hat{\theta}_y$, $ F_{\hat{\theta}_*(m)}(T_{y, \hat{\theta}_*(m)}) \to 1$ and as a result, $f^\alpha(m) \to 0$ for any $\alpha$ ($0 \leq \alpha \leq 1$). In other words, for any $\epsilon$ such that $\alpha > \epsilon > 0$, $\exists M_0 > 0$ such that for $m > M_0$, $f^\alpha(m) < \epsilon$ and $\alpha \geq |f^\alpha(m) - \alpha| > \alpha - \epsilon$. 
Given the boundness of the function, the solution $m_\alpha = \min_{m \geq 1} |f^\alpha(m) - \alpha|$ must exist for any given $\alpha$.

To investigate the stochastic approximation process of $m$, some notations are introduced here. Let $m^{(t)}$ denote the updated $m$ value (an integer) at the $t$-th iteration. Denote by $\tilde{m}^{(t)} \in \mathbb{R}$ the obtained $m$ values by directly applying the stochastic update formula. Let $\tilde{m}^{(t)}_l$ and $\tilde{m}^{(t)}_r$ denote the left and right neighbouring integer values of $\tilde{m}^{(t)}$. The stochastic update steps can be expressed as follows:
\begin{enumerate}
    \item Sample $u$ from a uniform distribution: $u \sim \mbox{Uniform}(0,1)$
    \item Update $m^{(t)}$ by\\
    $m^{(t)} = \begin{cases}
    M_u & \text{if } \tilde{m}^{(t)} \geq M_u\\
     \tilde{m}^{(t)}_l \mathbb{I}\{u \geq \tilde{m}^{(t)} - \tilde{m}^{(t)}_l\} + \tilde{m}^{(t)}_r \mathbb{I}\{u \leq \tilde{m}^{(t)} - \tilde{m}^{(t)}_l\} & \text{if } M_u \geq \tilde{m}^{(t)} \geq M_l\\
    M_l & \text{if } \tilde{m}^{(t)} < M_l.
\end{cases}$
    \item Obtain an approximation $\widehat{f^{\alpha}(m^{(t)})}$ via Monte-Carlo approximation.
    \item Update $m$ for the next iteration as follows: \[m^{(t + 1)} = m^{(t)} + g_t \widehat{f^{\alpha}(m^{(t)})}.\]
\end{enumerate}
It is important to note that $M_u$ and $M_l$ are predefined positive integer values that serve as constraints to limit the range of $m$ during the update procedure, ensuring stable convergence. It is worth noting that this choice of step size satisfies the step size conditions outlined in \cite{robbins1951stochastic}, which are as follows:
\[
\sum_{t = 1}^\infty g_t = \infty,
\]
and
\[
\sum_{t = 1}^\infty g_t^2 < \infty.
\]
We define the step size $g_t$ as $C/t$ with $C > 0$. To align with the notation of \cite{dupa1982stochastic}, denote 
\[
f^{\alpha}(m^{(t)}) = \widehat{f^{\alpha}(m^{(t)})} + \mbox{err}(m^{(t)})
\]
where $\mbox{err}(m^{(t)})$ is the error of the Monte-Carlo approximation. As is in many other cases, the Monte-Carlo method here is unbiased. Thus, $E\left[ \mbox{err}(m^{(t)})\right]  = 0$.  Moreover, we have $E\left[ f^{\alpha}(m^{(t)})\right]^2 < \infty$ and $E\left[ \mbox{err}(m^{(t)})\right]^2 < \infty$, since $0 \leq f^{\alpha}(m^{(t)}) \leq 1$ and $0 \leq  \widehat{f^{\alpha}(m^{(t)})} \leq 1$. Thus, we can find a $K > 0$ such that for any $m^{(t)}$,
\[
\left(f^{\alpha}(m^{(t)})\right)^2 + E\left[ \mbox{err}(m^{(t)})\right]^2 \leq K \cdot\left(1 + (m^{(t)})^2\right).
\]
Together with this necessary condition and all the arguments above, the summarized four-step stochastic updates of $m$ converge to the result $m_\alpha$ with probability one by applying the result of Theorem 1 and Corollary 2 in \cite{dupa1982stochastic}.

Finally, given that the problem is bootstrap $\epsilon$-calibratable, \textit{i.e.} there exists an $m$ value such that 
\[
        \alpha \leq \mathrm{P}_{\hat{\theta}_*(m)}\left(F_{\hat{\theta}_*(m)}(T_{y, \hat{\theta}_*(m)})\leq \alpha\right) \leq \alpha +  \epsilon,
\]
or equivalently,
\[
\alpha \leq f^\alpha(m) \leq \alpha +  \epsilon,
\]
 since $\lfloor m^{(t)}\rfloor - 1$ is returned in the last step, this gives the desired solution  
\[ m_\alpha^* = \min_{m \geq 1} \{  f^\alpha(m) - \alpha : f^\alpha(m) \geq \alpha \}, \]
which yields confidence interval of at least $1-\alpha$ obtained with (13).

\begin{remark}
    Although the validity of the constructed confidence interval is established when the problem is bootstrap $\epsilon$-calibratable, the efficiency of the confidence set—how closely it approximates the target level—is dependent on $\epsilon$. The smaller the $\epsilon$ is, the closer to the target level. A smaller $\epsilon$ generally results in an interval that is closer to the target level. This dependency is problem-specific and fundamentally linked to the grid density of $f^\alpha(m)$ as $m$ varies. However, empirical examples, such as those presented in Example~2 and Section~S.4, demonstrate that the approximation is remarkably accurate. 
\end{remark}

\subsubsection{Additiona Illustrations of the Effectiveness of the Approximation} \label{sps:calibratable}

Here we conduct an additional investigation on the bound of the approximation error $|\alpha - f^\alpha(m_\alpha^*)|$ to gain deeper insights into the property of being bootstrap $\epsilon$-calibratable. 

Inspired by \cite{cella2024}, we approximate the contour function (or the confidence set using our term) $\{\theta: F_\theta (T_{y, \theta}) \geq \alpha \}$ using $\{\theta: (\theta - \hat{\theta})^T \Sigma^{-1}(\theta - \hat{\theta}) \leq \frac{\chi^2_{1-\alpha, p}}{c}\}$, which represents the $1-\alpha$ probability contour of a Gaussian distribution $N(\hat{\theta}, \frac{1}{c}\Sigma)$ with an unknown $\Sigma$ and a constant $c$. Meanwhile, we approximate the distribution of $\hat{\theta}_*(m)$ as a function of $m$ using the Gaussian distribution $N(\hat{\theta}, \frac{n}{m} \Sigma)$. This leads to
\begin{align*}
    f^\alpha(m) &= 1- \mathrm{P}_{\theta\sim N(\hat{\theta}, \Sigma)} \left(\frac{m}{n} (\theta - \hat{\theta})^T \Sigma^{-1}(\theta - \hat{\theta}) \leq \frac{\chi^2_{1-\alpha, p}}{c}\right) \\
    &= 1- \mathrm{P} \left(\chi^2_p > \frac{n}{m}\frac{\chi^2_{1-\alpha, p}}{c}\right).
\end{align*}
Note that we have as $m \to \infty$, $f^\alpha(m) \to 0$ as desired, and the approximate solution falls at $m_\alpha^* = \lfloor \frac{n}{c} \rfloor$. Furthermore, the condition for being bootstrap $\epsilon$-calibratable if there exists an $m > 0$ such that the model estimation process is feasible and $\frac{n}{m c} \geq 1$ (which yields $f^\alpha(m) \geq \alpha$). Additionally,  since $f^\alpha(m)$ is a decreasing function of $m$, the change in $f^\alpha(m)$ when increasing $m$ to $m + 1$ can be bounded by
\begin{align*}
\Delta f^\alpha(m) 
&\leq \chi^2_p\left(\frac{n}{m}\frac{\chi^2_{1-\alpha, p}}{c} \right) \cdot \left(\frac{n}{m}\frac{\chi^2_{1-\alpha, p}}{c} - \frac{n}{m + 1}\frac{\chi^2_{1-\alpha, p}}{c} \right) \\
&= \chi^2_p\left(\frac{n}{m}\frac{\chi^2_{1-\alpha, p}}{c} \right) \cdot \left(\frac{n}{m (m + 1)}\frac{\chi^2_{1-\alpha, p}}{c} \right) \\
&\leq \chi^2_p\left(\frac{m^*_\alpha}{m}\chi^2_{1-\alpha, p} \right) \cdot \left(\frac{m^*_\alpha}{m (m + 1)}\chi^2_{1-\alpha, p} \right),
\end{align*}
where $\chi^2_p(\cdot)$ is the density function of the $\chi^2$-distribution with $p$ degrees of freedom. By setting $m \approx m^*_\alpha$, we have
\[
\Delta f^\alpha(m_\alpha^*) \leq \chi^2_p\left(\chi^2_{1-\alpha, p}\right) \cdot \left(\frac{1}{m_\alpha^*}\chi^2_{1-\alpha, p}\right),
\]
which provides an approximate upper bound to the approximation error. 

In the finite sample case, consider $n = 20, p = 10$ as an example. In this case, the upper bound for the approximation error is approximately 0.01, which is sufficiently small for practical applications. Asymptotically, this analysis implies that the approximation error can be negligible when $m$ and $n$ are large, and it vanishes as $n\to \infty$ and $m \to \infty$. The Gaussian approximation used in this analysis is also asymptotically validated by Wilks' theorem \citep{wilk1938}.

\subsection{Technical Details of Corollary~1}\label{sps:proof-2}
\subsubsection{Proof of Corollary~1}
Condition C1 suggests that for $\theta_1$ and $\theta_2$ such that $\ell(y, \theta_1) \geq \ell(y, \theta_2)$, we have $F_{\theta_1}(T_{y, \theta_1}) \leq F_{\theta_2}(T_{y, \theta_2})$. The property of cumulative distribution functions (CDFs) suggests that $F_{\theta}(T_{y,\theta}) \in [0,1]$, and the continuity of $T_{y,\theta}$ suggests that it has non-zero density at any point in $[0,1]$. Taking the regularity conditions of the loss function into consideration, for any $t \in \mathbb{R}$, we can find $t' \in \mathbb{R}$ such that
\[
\mbox{Prob}\left(\ell(y, \theta) \leq t\right) = \mbox{Prob}\left( F_{\theta}(T_{y,\theta}) \geq t'\right).
\]
Under C2, for $m_1 > m_2 > 0$, if $\ell(y, \hat{\theta}_*(m_2))$ first-order stochastically dominates $\ell(y, \hat{\theta}_{*}(m_1))$, for any $t \in \mathbb{R}$, we have
\[
\mbox{Prob}\left(\ell(y, \hat{\theta}_*(m_1)) \leq t \right) \geq \mbox{Prob}\left(\ell(y, \hat{\theta}_*(m_2)) \leq t \right).
\]
Taking the above observation into consideration, we have
\[
\mbox{Prob}\left(F_{\hat{\theta}_*(m_1)}(T_{y, \hat{\theta}_*(m_1)})\geq t' \right) \geq \mbox{Prob}\left(F_{\hat{\theta}_*(m_2)}(T_{y, \hat{\theta}_*(m_2)}) \geq t' \right)
\]
for any $t' \in \mathbb{R}$. This  suggests
\[
\mbox{Prob}\left(F_{\hat{\theta}_*(m_1)}(T_{y, \hat{\theta}_*(m_1)})\leq t' \right) \leq \mbox{Prob}\left(F_{\hat{\theta}_*(m_2)}(T_{y, \hat{\theta}_*(m_2)}) \leq t' \right).
\]
Taking $t' := \alpha$ gives the desired results of $f^\alpha(m_1) \leq f^\alpha(m_2)$.

Next, we prove the results again by replacing condition C1 with its sufficient condition, wherein the distribution of $T_{Y,\theta}$ remains invariant under $\theta$. Now, we can write
\begin{align*}
    f^\alpha(m) &\coloneqq \mbox{Prob}\left(F_{\theta}(T_{y, \hat{\theta}_*(m)})\le\alpha\right) \\
    &\coloneqq \mbox{Prob}\left(F_{\theta}\left(\ell(y, \hat{\theta}_y) - \ell(y, \hat{\theta}_*(m))\right)\le\alpha\right)
\end{align*}
where 
\begin{equation}\label{eq:proof-1}
    F_\theta(t) = \mbox{Prob}\left(T_{Y,\theta} \leq t \right)
\end{equation}
is the CDF of $T_{Y,\theta}$. Under C2, for $m_1 > m_2 > 0$, if $\ell(y, \hat{\theta}_*(m_2))$ first-order stochastically dominates $\ell(y, \hat{\theta}_{*}(m_1))$, 
\[
\mbox{Prob}\left(\ell(y, \hat{\theta}_*(m_1)) \leq t \right) \geq \mbox{Prob}\left(\ell(y, \hat{\theta}_*(m_2)) \leq t \right)
\]
where $\hat{\theta}_*(m_1) \sim \mathrm{G}_{y,m_1}$ and $\hat{\theta}_*(m_2) \sim \mathrm{G}_{y,m_2}$, for any $t \geq \ell(y, \hat{\theta}_y)$. To prove $f^\alpha(m_1) \leq f^\alpha(m_2)$ by contradiction, assume that there is a $\alpha$ such that $f^\alpha(m_1) > f^\alpha(m_2)$. This suggests that
\[
\mbox{Prob}\left(\ell(y, \hat{\theta}_y) - \ell(y, \hat{\theta}_*(m_1)) \leq F^{-1}(\alpha) \right) \geq \mbox{Prob}\left(\ell(y, \hat{\theta}_y) - \ell(y, \hat{\theta}_*(m_2)) \leq F^{-1}(\alpha) \right)
\]
and one can further write
\[
\mbox{Prob}\left(\ell(y, \hat{\theta}_*(m_1)) \leq \ell(y, \hat{\theta}_y) - F^{-1}(\alpha) \right) \leq\mbox{Prob}\left(\ell(y, \hat{\theta}_*(m_2)) \leq \ell(y, \hat{\theta}_y) - F^{-1}(\alpha) \right)
\]
which contradicts (\ref{eq:proof-1}) by taking $t = \ell(y, \hat{\theta}_y) - F^{-1}(\alpha)$. Note that $F^{-1}(\alpha) \leq 0$ any $1 \geq \alpha \geq 0$.

Finally, we prove Theorem~3 holds true asymptotically while $n \to \infty$. First note that $T_{Y, \theta} = \ell(Y, \hat{\theta}_Y) - \ell(Y, \theta)$ takes the form of likelihood ratio. Suppose that the regularity conditions of the likelihood function for establishing the asymptotic normality of MLE hold (see \cite{vaart1998}). By Wilks' Theorem on likelihood ratio \citep{wilk1938}, we have $-2 T_{Y,\theta}$ converges asymptotically to a chi-square distribution with $Y\sim \mathrm{P}_\theta$. This further suggests that the dependence of $F_\theta$ on $\theta$ disappears as $n \to \infty$. Thus, the sufficient condition for C1 holds true asymptotically.

For C2, notes that the asymptotic variance of MLE under regularity conditions satisfy
\[
\sqrt{n} (\hat{\theta}_n - \theta) \xrightarrow{d} N (0, I(\theta)^{-1})
\]
where $I(\theta)$ is the Fisher Information of $\theta$.  We can see the empirical distribution of $y$ as some true population distribution with parameter $\hat{\theta}_y$, where $\hat{\theta}_y$ is the MLE of $\theta$ given $y$. In this case, $m$-out-of-$n$ bootstrap can be seen as a sample with size $m$ from this true distribution. This implies
\[
\sqrt{m} (\hat{\theta}_*(m) - \hat{\theta}_y) \xrightarrow{d} N (0, I(\hat{\theta}_y)^{-1})
\]
as $m \to \infty$, which suggests that the asymptotic distribution of $\hat{\theta}_*(m)$ is a Gaussian distribution with decreasing variance as $m_1 > m_2$. This implies the asymptotic first-order stochastic dominance of $\ell(y, \hat{\theta}_*(m_2))$ over $\ell(y, \hat{\theta}_{*}(m_1))$ under mild conditions of the likelihood function. 

Thus, both conditions hold true asymptotically. 

\subsubsection{Examples}\label{sps:lr-condition-example}
Here we show that the model used in Example~2 satisfies the conditions mentioned in Corollary~2. For the linear regression model, we notice that 
\begin{align*}
    T_{y,\theta} &= \ell(y , \hat{\theta}_y) - \ell(y, \theta)\\
    &= \frac{n}{2}\log (2\pi \hat{\sigma}^2) + \frac{1}{2\hat{\sigma}^2}(y - X\hat{\beta}_y)^T(y - X \hat{\beta}_y) -\frac{n}{2}\log (2\pi \sigma^2) - \frac{1}{2{\sigma}^2}(y - X\beta)^T(y - X\beta)\\
    &= \log (\frac{\hat{\sigma}^2}{\sigma^2}) + \frac{1}{2\hat{\sigma}^2}(y - X\hat{\beta}_y)^T(y - X \hat{\beta}_y) - \frac{1}{2{\sigma^2}}(y - X\beta)^T(y - X\beta).
\end{align*}
Given that 
\[
\frac{\hat{\sigma}^2}{\sigma^2} \sim \frac{1}{n} \chi^2_{n-2}
\]
\[
\frac{1}{2\hat{\sigma}^2}(y - X\hat{\beta}_y)^T(y - X \hat{\beta}_y) = \frac{n}{2}
\]
\[
\frac{1}{2\hat{\sigma}^2}(y - X\hat{\beta}_y)^T(y - X \hat{\beta}_y) - \frac{1}{2{\sigma^2}}(y - X\beta)^T(y - X\beta) \sim \frac{1}{2}\chi^2_{n-p},
\]
the function $T_{y,\theta}$ does not depend on $\theta$, which implies condition C1 to hold true. In Example 2, $\sigma$ is assumed to be given. In such case, we have
\begin{align*}
    T_{y,\theta} &= \ell(y , \hat{\theta}_y) - \ell(y, \theta)\\
    &= \frac{1}{2}(y - X\hat{\beta}_y)^T(y - X \hat{\beta}_y) - \frac{1}{2}(y - X\beta)^T(y - X\beta) \\
    &= \frac{1}{2}(y - X\hat{\beta}_y)^T(y - X \hat{\beta}_y) - \frac{1}{2}(y - X \hat{\beta}_y + X \hat{\beta}_y -X\beta)^T(y - X \hat{\beta}_y + X \hat{\beta}_y - X\beta)\\
    &= \underbrace{(y - X\hat{\beta}_y)^TX}_{=0}(\hat{\beta}_y -\beta) - \frac{1}{2}(\hat{\beta}_y -\beta)^TX^TX (\hat{\beta}_y - \beta)\\
    &= - \frac{1}{2}(\hat{\beta}_y -\beta)^TX^TX (\hat{\beta}_y - \beta) \\
    &\sim \frac{1}{2}\sigma^2 \chi^2_p,
\end{align*}
which also does not depend on $\beta$.

For condition C2, the stochastic dominance of the loss function with regard to the decreasing $m$ is closely related to the concept of \textit{risk monotonicity} \citep{Loog2019} in the learning theory: with the increasing sample size, the expectation of the loss monotonically decreases. Although desirable, the rigorous proof of such property for a general case is considered to be difficult, and the existing research on the property generally rely on specific examples \citep{Loog2019}. Here we provide an empirical evidence of such property for Example 2, with the case $n = 100, \kappa = 0.3$, shown in Figure~\ref{sps:proof-2:fig:1}.
\begin{figure}
\centering
\includegraphics[width=5.5in]{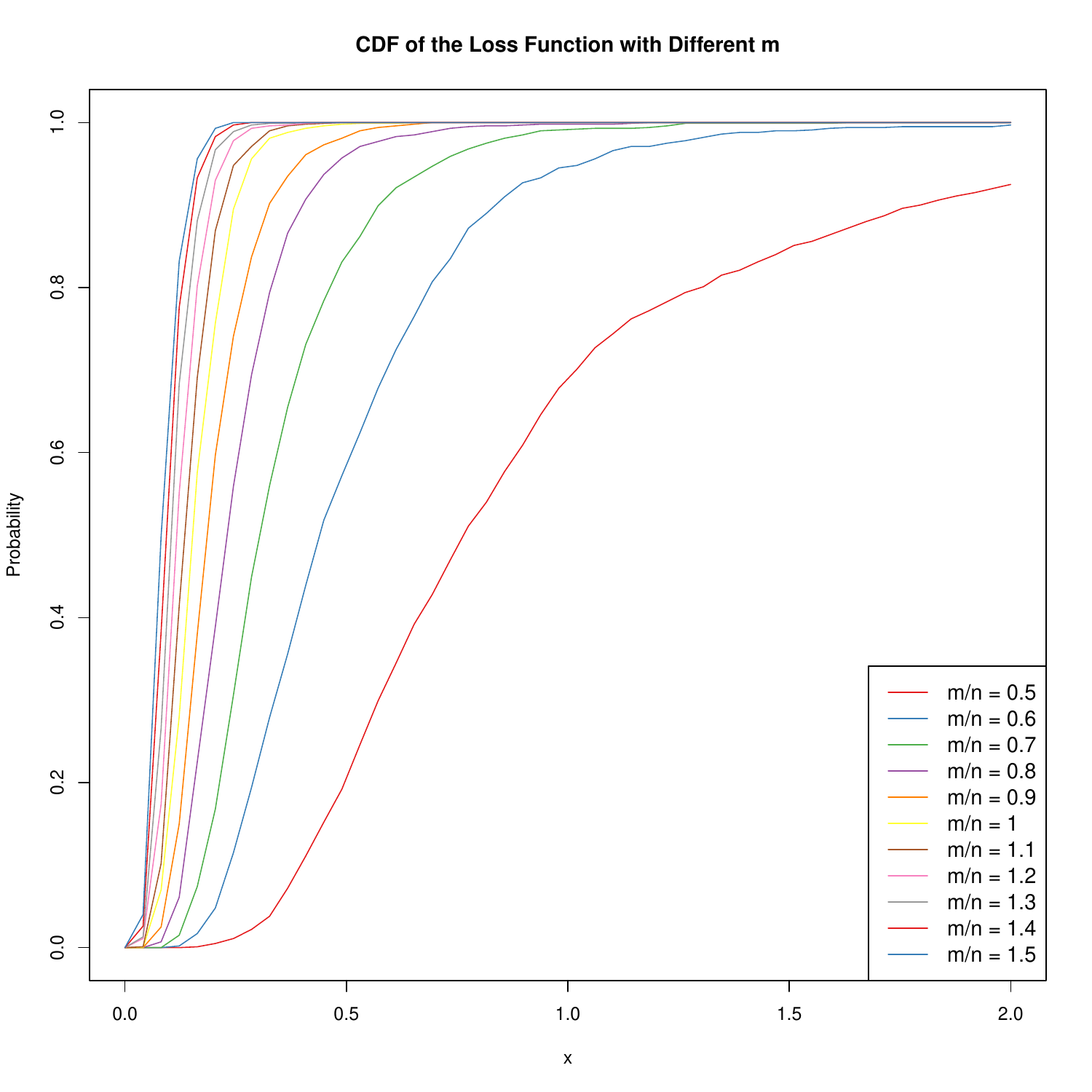}
        \caption{The CDF of the shifted loss function value $\ell(y, \hat{\theta}_*(m)) - \ell(y, \hat{\theta}_y)$, with different values of $m$, using the model in Example~2 with a simulated data set. The estimation is based on 1,000 Monte-Carlo repetitions.
        }
\label{sps:proof-2:fig:1}
\end{figure}

    








\subsection{An Empirical Study of the RA algorithm for Approximating the Confidence Distribution}\label{sp:fr-counter-example}
Consider the example used in Sections~2.2 and 3.2. Suppose we are interested in the inference of the mean parameter $\theta$ with $y$, a sample containing $n$ observations from the model $Y \sim N(\theta, 1)$. Suppose now the estimation is conducted with the penalty value $\lambda$. That is, the maximized penalized negative log-likelihood estimator is $\hat{\theta} = \mbox{sign}\left( \hat{\theta}_y \right)
\left(|\hat{\theta}_y| - \lambda\right)_+,$ where $\hat{\theta}_y$ is the MLE of $\theta$ and
\begin{equation*}
\mbox{sign}\left(z\right)\left(|z| - \gamma\right)_+ =
\left\{
\begin{array}{lcl}
z - \gamma && \mbox{if 
$z > 0$ and $\gamma < |z|$}\\
z + \gamma && \mbox{if 
$z < 0$ and $\gamma < |z|$}\\
0 && \mbox{if 
$\gamma \geq |z|$.}\\
\end{array}
\right.
\end{equation*}
For RA, the loss function used in the generalized association function now becomes the penalized loss function
\[
\ell(y, \theta) = \frac{1}{2}\sum_{i = 1}^n(y_i - \theta)^2 + \lambda |\theta|.
\]

Here we consider the case with $\theta = 1, n = 100,$ and $\lambda = 0.2$. For constructing the confidence distribution of $\theta$ denoted by ${\theta}_* \sim G_y$ with samples of $\hat{\theta}_* \in \Theta$, it is required that the distribution of $F_{\hat{\theta}_*}(T_{y, \hat{\theta}_*})$ follows the standard uniform distribution (see Definition~2 and Theorem~2). In the example, RA process following the refinement process by the DR refinement process
(Algorithm~2) is used for approximating such distribution.

A natural question is, can the standard bootstrap be directly used in the DR refinement process? If we bypass the RA process and directly use the standard bootstrap for generating candidate $\hat{\theta}_*$ values, the distribution of the obtained $F_{\hat{\theta}_*}(T_{y, \hat{\theta}_*})$ values (estimated with Monte-Carlo) is given in Figure~\ref{sp:fig:empirical-standard-bootstrap}. It is evident that the standard bootstrap fails to cover the tail values where $F_{\hat{\theta}_*}(T_{y, \hat{\theta}_*})$ is close to $0$. Consequently, even with the DR refinement process, the standard bootstrap does not provide a satisfactory approximation of the true confidence distribution due to the absence of these critical tail values.

We then conduct the RA process at various target significance levels, ranging from 0.05 to 0.95. For each of the target level, RA returns a $m$ value corresponding to the resampling size, which can be used for the $m$-out-of-$n$ bootstrap to conduct inference at the desired level. The resulting distribution of the $F_{\hat{\theta}_*}(T_{y, \hat{\theta}_*})$ across these $19$ different $m$ values are shown in Figure~\ref{sp:fig:empirical-fr-1}. Notably, the mean of the distribution shifts from $1$ to $0$ as the target significance level increases. More importantly, unlike the standard bootstrap, it consistently covers both tails.

Next, we study the mixture of the obtained distributions of $F_{\hat{\theta}_*}(T_{y, \hat{\theta}_*})$ with $m$ values corresponding to different significance levels. We consider two different mixtures one with levels spaced at regular intervals from 0.05 to 0.95, and another more sparse mixture at 0.05, 0.50, and 0.95. The distributions for these two cases, as shown in Figure~\ref{sp:fig:empirical-fr-2}, reveal no significant difference, suggesting a relatively minor impact of the grid density of target significance levels. However, despite the apparent uniformity within the central range from 0.1 to 0.9, both tails exhibit considerably higher density mass. Consequently, the DR refinement process is still necessary for ``reweighting'' the distribution, ensuring the overall uniformity.

\begin{figure}
\centering
\includegraphics[width=5.5in]{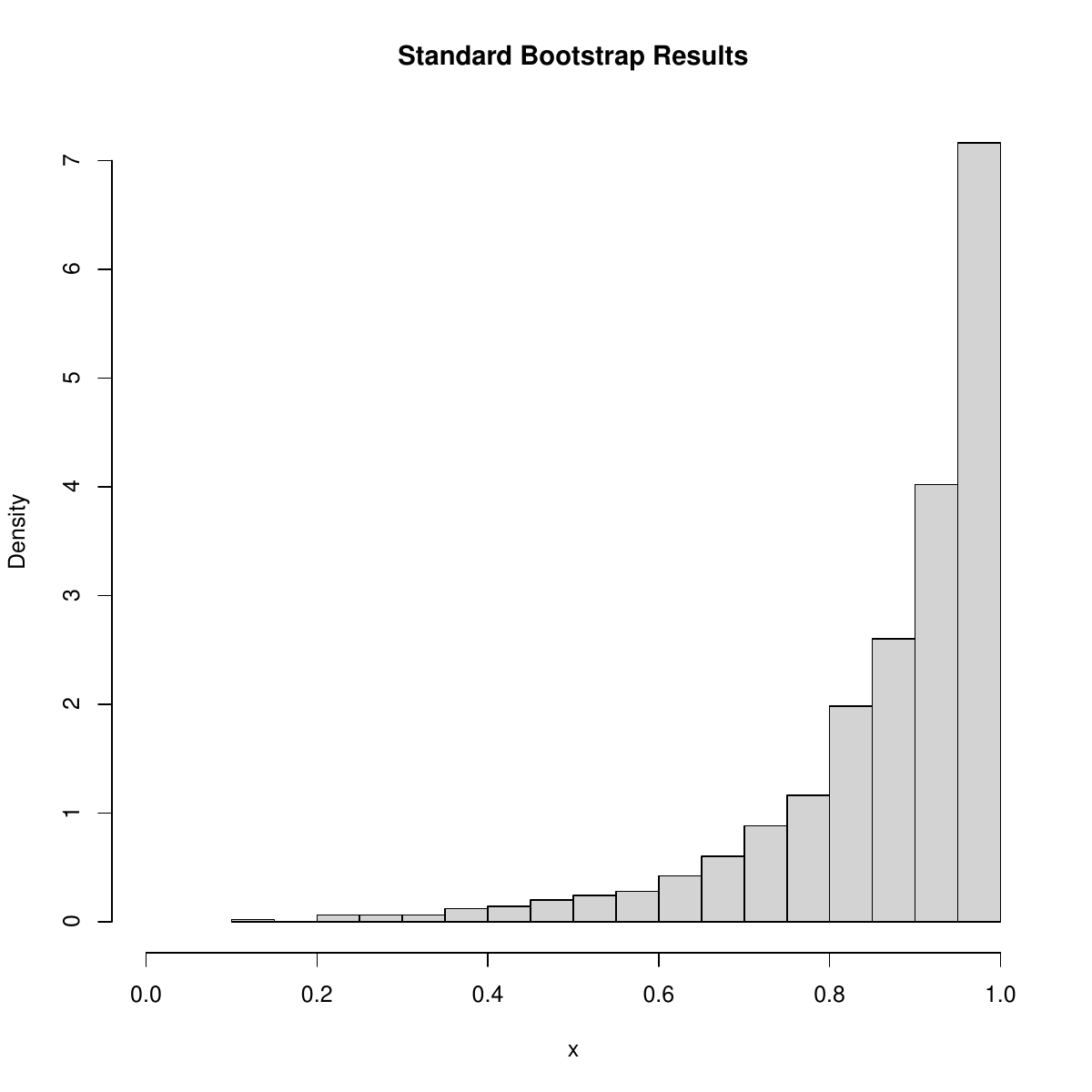}
        \caption{The distribution of $F_{\hat{\theta}_*}(T_{y, \hat{\theta}_*})$ obtained with the standard $n$-out-of-$n$ bootstrap (1,000 resampling repetitions).
        }
\label{sp:fig:empirical-standard-bootstrap}
\end{figure}

\begin{figure}
\centering
\includegraphics[width=5.5in]{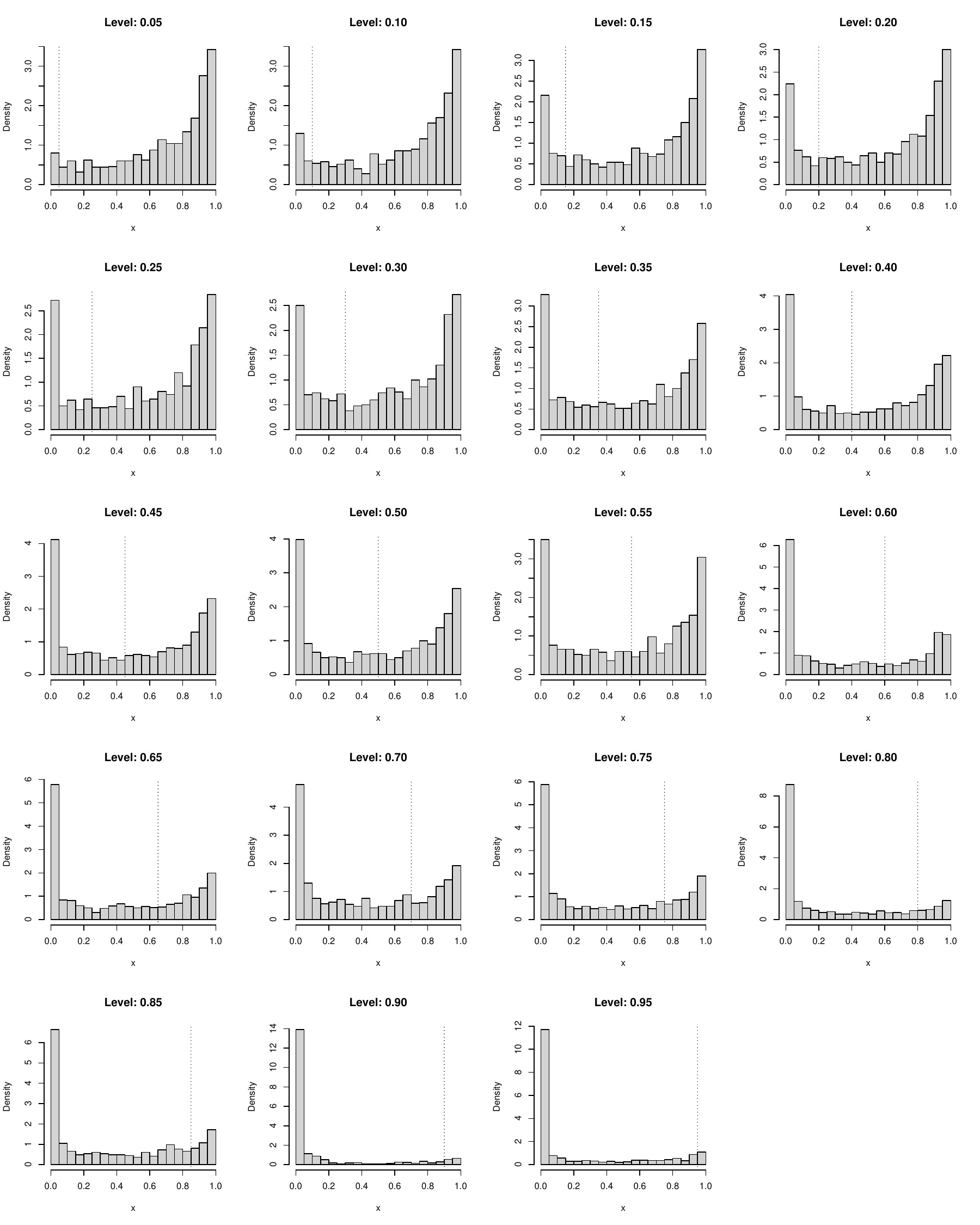}
        \caption{The distribution of $F_{\hat{\theta}_*}(T_{y, \hat{\theta}_*})$ obtained with varying $m$ values, which are output of the SA algorithm corresponding to different target level $\alpha$ (1,000 resampling repetitions). 
        }
\label{sp:fig:empirical-fr-1}
\end{figure}

\begin{figure}
\centering
\includegraphics[width=5.5in]{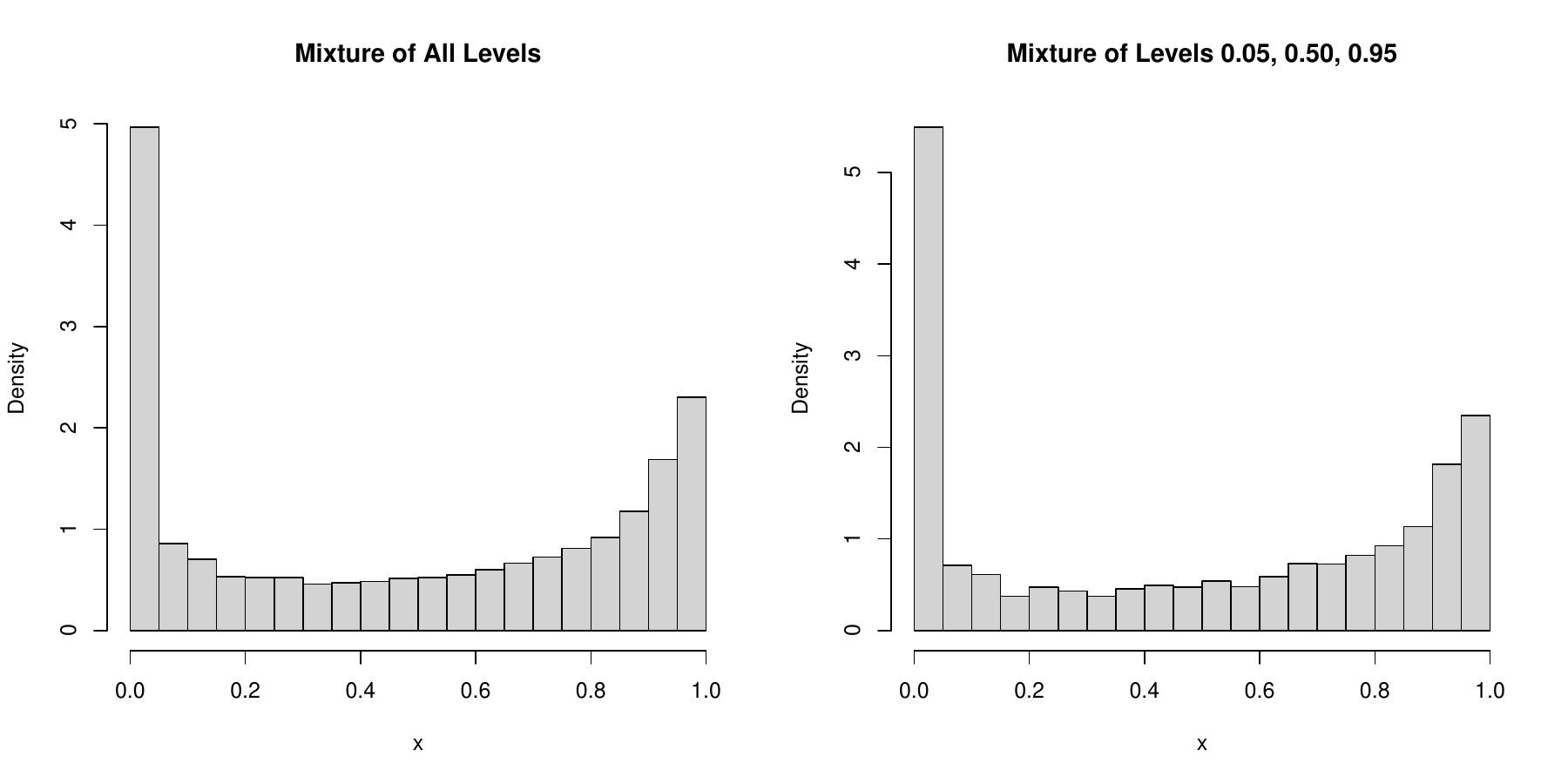}
        \caption{The mixture distribution of $F_{\hat{\theta}_*}(T_{y, \hat{\theta}_*})$ with two different grid density of target level $\alpha$. The left histogram corresponds to the target levels $(0.05, 0.10,\dots,0.95)$, and the right histogram corresponds to the target levels $(0.05,0.50,0.95)$.
        }
\label{sp:fig:empirical-fr-2}
\end{figure}

\subsection{Fiducial Distribution of $\beta$ in High-dimensional Example}\label{sps:fiducial-distribution-beta}
For the estimates
\[
\hat{\beta} = (X'X)^{-1}X' Y, \quad \hat{\sigma} =  \frac{1}{n-p}Y'\left(I- X(X'X)^{-1}X'\right)Y
\]
the association function which serves as the pivotal quantity can be written as
\[
    \frac{\hat{\beta} - \beta}{\sigma} = Z_1,\quad Z_1 \sim N_p\left(0, (X'X)^{-1}\right),
\]
where $N_p(\cdot)$ denotes the multivariate Gaussian distribution with dimension $p$ and
\[
    \frac{\hat{\sigma}}{\sigma} = Z_2,\quad (n - p)Z_2^2 \sim \chi^2\left(n -p\right),
\]
where $\chi^2(n -p)$ denotes chi-square distribution with $n -p$ degree of freedom. We are interested in a resampling scheme that can produce the resampled estimates of $\beta$ which is close to its so-called fiducial distribution
\[ 
    \beta= \hat{\beta} - \hat{\sigma}Z,\quad Z \sim t_p\left(0, (X'X)^{-1}, n-p\right).
\] 
where $t_p(\cdot)$ denotes the $p$-dimensional multivariate student-$t$ distribution. Note that this expression can be further written as
\[
\beta = (X'X)^{-1}X' \left(X\hat{\beta}+\hat{\sigma} \varepsilon\right) , \qquad \varepsilon \sim t_n(0, I, n-p). 
\]
We see that the best way to sample $\beta$ from its fiducial distribution is to generate the new observation $(X, Y_*)$ with $Y_* = X\hat{\beta}+\hat{\sigma}\varepsilon$ where $ \varepsilon \sim t_n(0, I, n-p)$, followed by the least-square estimate of $\beta$
\[
\hat{\beta}_* = (X'X)^{-1}X'Y_*.
\]
This way of creating new samples using parametric models is a type of parametric bootstrap \citep{efron2012bayesian}. The resulting distribution of $\hat{\beta}_*$ is given by
\[
    t_p\left(\hat{\beta}, \hat{\sigma}^2 (X'X)^{-1}, n-p\right), 
\]
and can be used to provide an exact confidence region of $\beta$. Similarly, when $\sigma$ is known,  the resulting distribution of $\hat{\beta}_*$ is given by
\[
    N_p\left(\hat{\beta}, \sigma^2 (X'X)^{-1}\right). 
\]

It is interesting to see that when $\sigma$ is assumed to be known, the best way to generate new samples becomes $Y_* = X\hat{\beta}+\sigma \varepsilon$ where $\varepsilon \sim \mbox{N}_n(0, I)$. This is exactly the standard parametric bootstrap \citep{efron2012bayesian}. More discussion about the parametric bootstrap is provided next.

\subsection{Parameteric Bootsrap in High-dimensional Example}\label{sps:param-bootstrap}
 When $\sigma$ is unknown, it can be seen that the standard parametric bootstrap in the high-dimensional linear regression example mentioned above can not provide an exact confidence region of $\beta$ in finite-sample case, due to the difference in the distribution of $\epsilon$ when generating $Y^*$. The question is, can standard parametric bootstrap lead to the valid inference while $n \to \infty$?  
{ We see the variability of the standardized $\hat{\beta}_*$ has \[
\frac{1}{s^2}(\hat{\beta}_*-\hat{\beta})'(X'X)(\hat{\beta}_*-\hat{\beta}) \sim \chi^2_p
\]
for the standard parametric bootstrap whereas for the Student-t parametric bootstrap
 \[
\frac{1}{s^2}(\hat{\beta}_*-\hat{\beta})'(X'X)(\hat{\beta}_*-\hat{\beta})\sim \frac{\chi^2_p}{\chi^2_{n-p}/(n-p)}.
\]}
In the asymptotic context with $p/n \rightarrow c >0 $ as $n\rightarrow \infty$, it is easy to see that
 with a slightly abused notation for independent chi-squared random variables $\chi_p^2$ and $ \chi_{n-p}^2$, we have
 \begin{eqnarray}
    \sqrt{p}\left[\frac{\chi_p^2/p}{\chi_{n-p}^2/(n-p)}-1\right] 
    &=&\sqrt{p}\left[\frac{1+\frac{1}{\sqrt{p}}\frac{\chi_p^2-p}{\sqrt{p}}}{1+\frac{1}{\sqrt{n-p}}\frac{\chi_{n-p}^2-(n-p)}{\sqrt{n-p}}} - 1 \right] \nonumber \\
    &=&\frac{\frac{\chi_p^2-p}{\sqrt{p}}-\sqrt{\frac{p}{n-p}}\frac{\chi_{n-p}^2-(n-p)}{\sqrt{n-p}}}{1+\frac{1}{\sqrt{n-p}}\frac{\chi_{n-p}^2-(n-p)}{\sqrt{n-p}}}\nonumber \\
    &\stackrel{D}{\rightarrow}& \mbox{N}(0, 1+c), \quad\ \mbox{ as } n\rightarrow \infty,\label{eq:LR-desired-Bootstrap}
\end{eqnarray}
where \eqref{eq:LR-desired-Bootstrap} follows Slutsky's theorem. In contrary, for the Gaussian parametric bootstrap, we have the corresponding result:
\begin{eqnarray}
    \sqrt{p}\left[\chi_p^2/p-1\right] 
    &=&\frac{\chi_p^2-p}{\sqrt{p}}
    \; \stackrel{D}{\rightarrow} \; \mbox{N}(0, 1), \quad\ \mbox{ as } n\rightarrow \infty.\label{eq:LR-parametric-Bootstrap}
\end{eqnarray}
The difference can be seen when comparing the asymptotic distribution of (\ref{eq:LR-desired-Bootstrap}) and (\ref{eq:LR-parametric-Bootstrap}). Thus, the gap between the two methods still exists as $n \to \infty$.

\subsection{Pairs Bootstrap and Residual Bootstrap in High-dimensional Example}\label{sps:residual-bootstrap}

In the linear regression context, an alternative approach to the standard pairs bootstrap known as the residual bootstrap \citep{Freedman1981} is also commonly used in  practice. Denote by $\hat{\beta}_{ols}$ the least square estimate of $\beta$. Define the residuals
\[
    e_i = y_i - x_i' \hat{\beta}_{ols}, i = 1, \dots, n
\]
and consider the set of centered residuals $\{e_1 - \bar{e}_n, \dots, e_n - \bar{e}_n \}$, where  $\bar{e}_n = \frac{1}{n}\sum_{i = 1}^n e_i$. The residual bootstrap selects a random sample $\{\tilde{e}_i\}_{i=1}^n$ of size $n$ with replacement from this set. The bootstrapped response $\tilde{y}_i$, $i = 1,\dots, n$ is then created as
\[
    \tilde{y}_i = x_i'\hat{\beta}_{ols} + \tilde{e}_i, i = 1,\dots, n.
\]

Residual bootstrap is considered to be less conservative compared to pairs bootstrap, as it does not involve resampling of the predictors \citep{el2018can}. With the observation in Example~2 that pairs bootstrap tend to generate more over-dispersed $\beta$ than the true distribution, residual bootstrap seems to be a plausible choice in such cases. To check this intuition, further simulation is performed for the same high-dimensional linear regression example with $\kappa = 0.3$. Different sample sizes ($n = 20,50,100,500$) are considered. 

\begin{figure}
\centering
\includegraphics[width=5.5in]{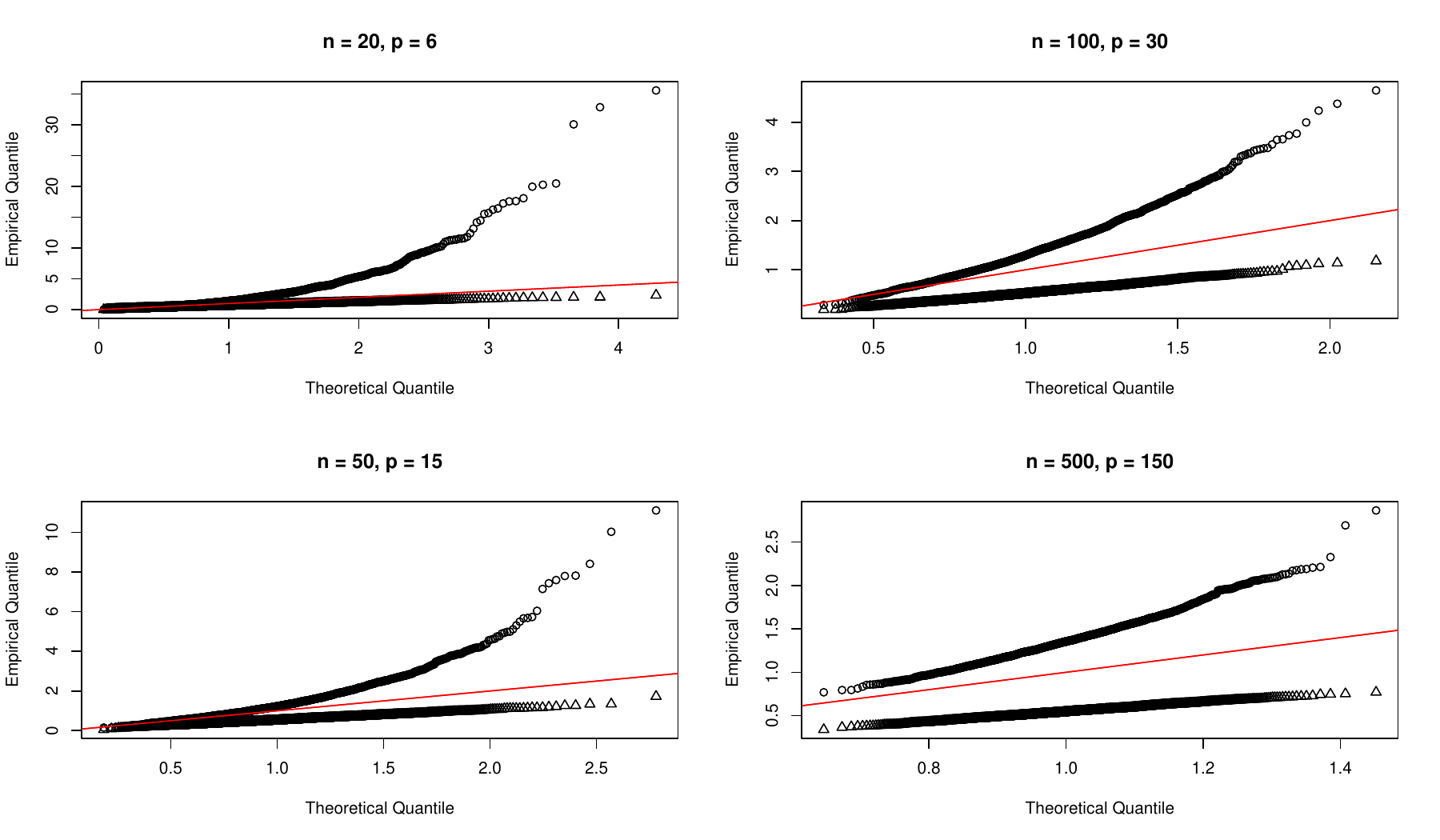}
        \caption{The Q-Q plot of 1,000 bootstrap estimates of $(\beta-\hat{\beta})'(X'X)(\beta-\hat{\beta})/p$ against the theoretical quantiles. The circle ``$\circ$'' points are obtained by the pairs bootstrap and the triangle ``$\triangle$'' points by the residual bootstrap. 
        }
\label{fig:residual-1}
\end{figure}

Figure~\ref{fig:residual-1} compares the distribution of standard pairs bootstrap approximations of $(\beta-\hat{\beta})'(X'X)(\beta-\hat{\beta})/p$ obtained with two different bootstrap methods against the desirable theoretical distribution. It can be seen that for all $n$ values, the pairs bootstrap tends to produce a more over-dispersed $\beta$ whereas the residual bootstrap, being a less conservative approach, tends to yield an under-dispersed $\beta$. This highlights the limitation of both the pairs bootstrap and the residual bootstrap, underscoring the need for a refined procedure such as CB proposed in this article.

\subsection{An Additional Example with DR Algorithm}\label{sp:fr-addtional-example}

Here we demonstrate the effectiveness of the DR algorithm (Algorithm~2) in approximating the true fiducial distribution with an example in \cite{martin2023b}.

Consider direction measurements data on the plane, which can be represented by angles relative to a reference point. The roulette wheel data from Example 1.1 in \cite{Mardia2009} is used here. The $n = 9$ observations of this data is obtained by spinning a roulette wheel and recording the angle of the position at which the wheel stops. Figure~\ref{sp:fig-von-data} provides a visualization of the data. Denote $y = (y_1, \dots, y_n)$ the observed angles of these $n$ independent spins $Y_1,\dots,Y_n$.  We consider model the data with the \textit{von Mises distribution}, which has the density function
\[
p(\theta) = \frac{1}{2\pi I_0(\kappa)} \exp\{\kappa \cos(y - \theta)\}, \quad y, \theta \in [0, 2\pi),
\]
where $\kappa > 0$ is a known concentration parameter, and $I_0$ denotes the Bessel function of order 0. We assume here $\kappa = 2$ is known. The minimal sufficient statistics of this model is
\[
X = (\bar{C}, \bar{S}) = \left( \frac{1}{n} \sum_{i=1}^n \cos Y_i, \frac{1}{n} \sum_{i=1}^n \sin Y_i \right).
\]
Convert this average position to polar coordinates, we have
\[
G = \arctan\left(\bar{S}/\bar{C}\right) \quad \text{and} \quad U = \sqrt{\bar{C}^2 + \bar{S}^2}.
\]
Given $U = u$, it is then shown in \cite{martin2023b} that the (conditional) fiducial density takes the form  
\[
q_x(\theta) \propto \exp\{\kappa u \cos(g - \theta)\}, \quad \theta \in (0, 2\pi].
\]

\begin{figure}[htbp]
\centering
\includegraphics[width=5.5in]{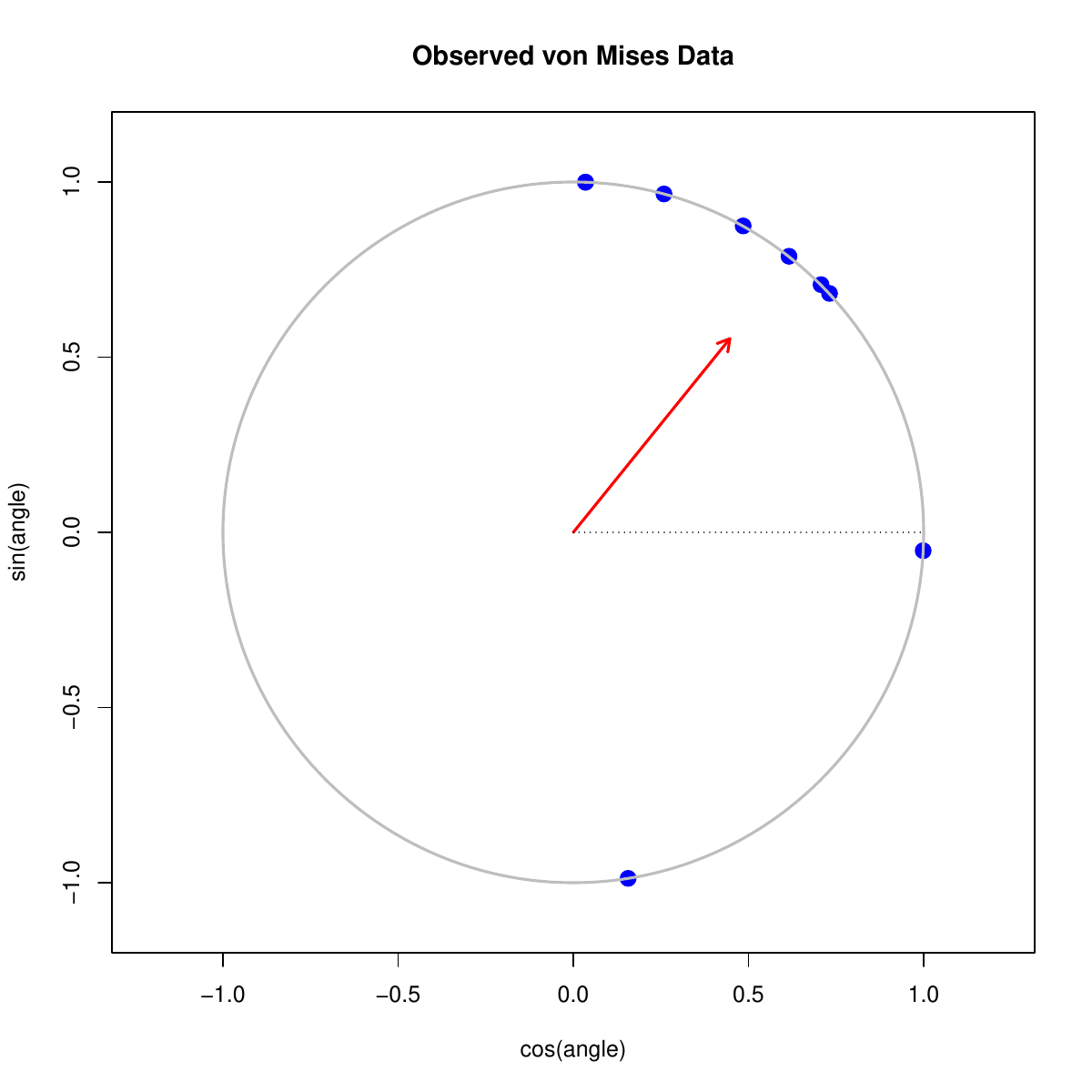}
        \caption{Visualization of the observed direction measurements data in \cite{Mardia2009}. The solid circles represent the Cartesian coordinates corresponding to the observed data. The circle that is closest to $(0,1)$ represent two observations taking exactly the same value. The arrow is pointing to the mean of the Cartesian coordinates of all observed data. The angle of the arrow makes with the dotted reference line is the maximum likelihood estimator $G$ of $\Theta$. The length of the arrow is the sample concentration $U$. \label{sp:fig-von-data}
        }
\end{figure}

To implement the proposed DR algorithm, we first take a grid of candidate $\theta$ values within the interval $[0 ,2\pi)$ and compute the function value $F_{\theta}(T_{y,\theta})$ via Monte-Carlo (100 repetitions). The resulting values as the function of $\theta$ are displayed in Figure~\ref{sp:fig-von-fr} (a). The DR algorithm then generates an approximated fiducial distribution, as illustrated in Figure~\ref{sp:fig-von-fr} (b). It is shown that the algorithm can provide a very close approximation to the true fiducial distribution, demonstrating the algorithm’s efficacy. 

Given that the true fiducial distribution is known, a numerical comparison between the empirical Cumulative Distribution Function (CDF) of the approximated distribution and the theoretical CDF is feasible. The results of this comparison are presented in Table~\ref{sp:tab:cdf_comparison}. It demonstrates that the proposed approximation algorithm has very high accuracy.


\begin{table}
\caption{\label{sp:tab:cdf_comparison} Numerical comparison between the (empirical) CDF of the approximated distribution and the  theoretical CDF of the fiducial distribution.}
\centering
\vspace{0.2 in}
\begin{tabular}{lcccccc}
\hline
                & $\theta=1$ & $\theta=2$ & $\theta=3$ & $\theta=4$ & $\theta=5$ & $\theta=6$ \\ \hline
Estimated Probability   & 0.354      & 0.672      & 0.780      & 0.813      & 0.846      & 0.940      \\
Theoretical Probability & 0.361      & 0.679      & 0.783      & 0.815      & 0.845      & 0.940      \\ \hline
\end{tabular}
\end{table}


\begin{figure}[htbp]
\centering
\begin{minipage}{0.45\textwidth}
    \centering
    \includegraphics[width=\linewidth]{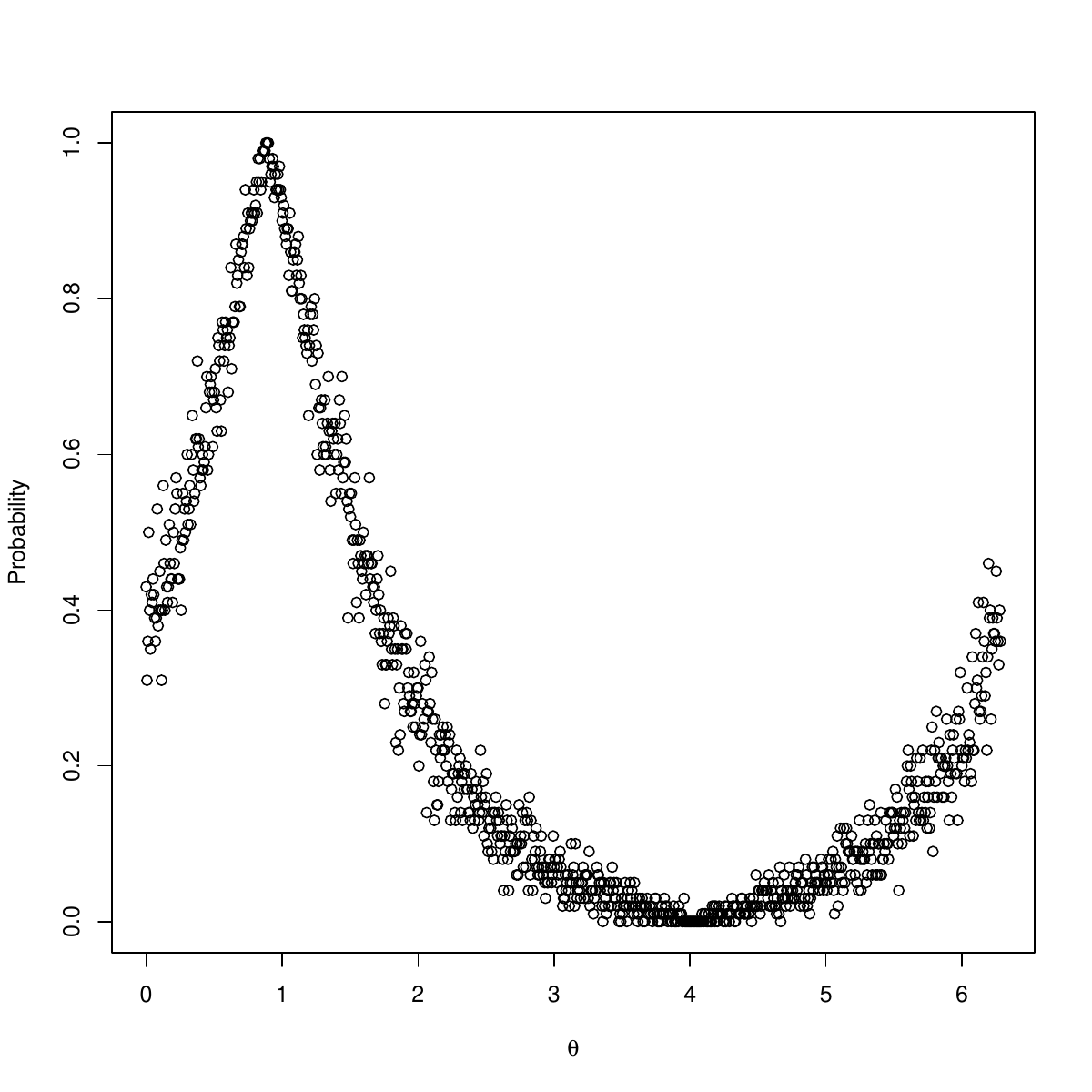} 
    \textbf{(a) \normalfont{$F_{\theta}(T_{y,\theta})$ as the function of $\theta$}}
\end{minipage}\hfill 
\begin{minipage}{0.45\textwidth}
    \centering
    \includegraphics[width=\linewidth]{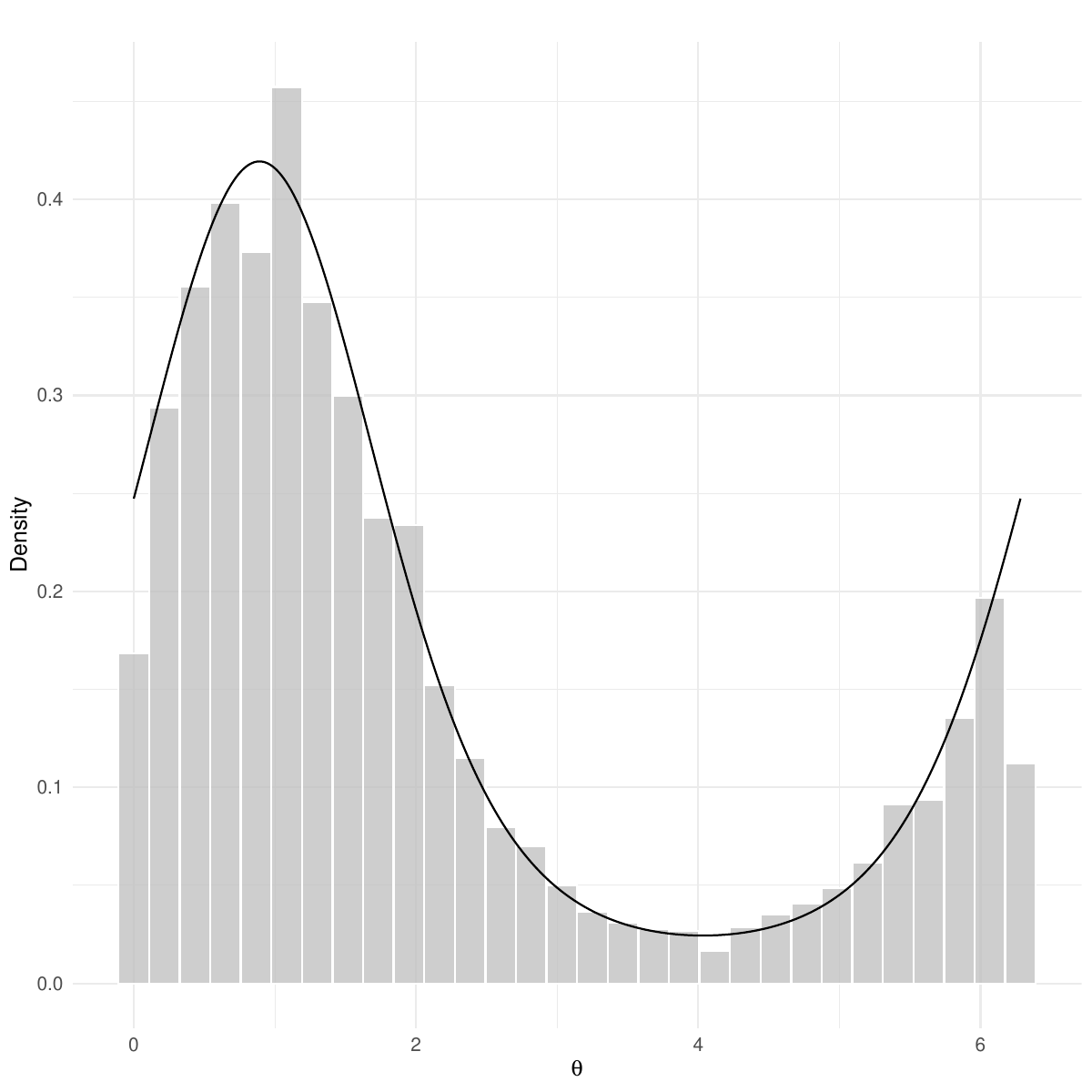} 
    \textbf{(b) \normalfont{Approximated Fiducial Distribution}}
\end{minipage}
\caption{The results of the DR algorithm on the direction measurements data. Panel (a) shows the estimated $F_{\theta}(T_{y,\theta})$ via Monte-Carlo (100 repetitions) as the function of $\theta$. Panel (b) is the histogram of the approximated fiducial distribution obtained with the DR algorithm. \label{sp:fig-von-fr}}
\end{figure}

\clearpage

\subsection{The Combined RA-DR Algorithm}\label{sps:alg-combined}
An effective algorithm combining the proposed RA and DR procedure is shown as Algorithm~\ref{sp:alg:combined}. This algorithm is applied to in the simulation study and real data analysis in Section~4.

The computational cost of the RA-DR algorithm is $O(B\cdot T \cdot |\boldsymbol{\alpha}|)$, where $|\boldsymbol{\alpha}|$ denotes the number of elements in the candidate set. In the simulation study, we use $B = 100$, $T = 100$, $|\boldsymbol{\alpha}| = 3$. As a result, the total computation cost is equivalent to $30,000$ bootstraps. Specifically, on a standard machine, the $\{n = 100, p = 30\}$ case takes about 1 minute, the $\{n = 200, p = 100\}$ case takes about 2 minutes and the $\{n = 500, p = 450\}$ case takes about 9 minutes.

\begin{algorithm}
\footnotesize
        Specify a set of target significance level $\boldsymbol{\alpha}$, \textit{e.g.} $\boldsymbol{\alpha} = \{0.1, 0.5, 0.9\}$ and 
        choose the number of bootstrap replications $B$ as well as the iteration number $T$\;
        Initialize an empty (indexed) parameter set $\tilde{\Theta}_0$ as well as an empty (indexed) function values set $\mathcal{F}_0$\;
        \For{$\alpha$ \textnormal{\textbf{in}} $\boldsymbol{\alpha}$} {
         Initialize $m^{(0)}$ with $m^{(0)}=n$\;
        \For{$t \gets 0$ \KwTo $T-1$} {
            Let $m=\lfloor m^{(t)}\rfloor + \mbox{Bernoulli}(m^{(t)} -\lfloor m^{(t)}\rfloor)$, where $\lfloor m^{(t)}\rfloor$ denotes the largest integer not greater than $m^{(t)}$ and $\mbox{Bernoulli}(p)$ a Bernoulli random variable with parameter $p$\;
            Sample  $\{(x_i^*, y_i^*):\; i=1,..., m^{(t)}\}$ or, in the matrix notation, $(X^*, Y^*)$ from $\{(x_i, y_i):\; i=1,..., n\}$ with replacement\;    
            Compute $\hat{\theta}_* = \argmin_\theta \ell(\theta, X^*, Y^*)$\;
            Evaluate $\ell^{(t)} = -[\ell(\hat{\theta}_*, X, Y) - 
            \ell(\hat{\theta}, X, Y)]$\;
            Initialize $P=0$\;
            \For{$b \gets 1$ \KwTo $B$}{
                Sample $\{(x_i^{**}, y_i^{**}):\; i=1,..., n\}$ or, in the matrix notation, $(X^{**}, Y^{**})$ from model $\mathbf{P}_{\hat{\theta}_*}$\;
                Compute $\hat{\theta}_{**} = \argmin_\theta \ell(\theta, X^{**}, Y^{**})$\;
                Evaluate $S^{(b)} = -[\ell(\hat{\theta}_{*}, X^{**}, Y^{**}) - \ell(\hat{\theta}_{**}, X^{**}, Y^{**})]$\;
                \If{$S^{(b)} \le \ell^{(t)}$}{
                    $P\gets P + 1$\;
                }
            }
            Set $Z_t \gets \frac{\mathbb{I}{\{P/B \le\alpha\}}  - \alpha}{\sqrt{B \alpha(1-\alpha)}}$\;
            Include $\hat{\theta}_*$ into $\tilde{\Theta}_0$ and $P/B$ into $\mathcal{F}_0$\;
            
            Update $m^{(t+1)} = m^{(t)} + \frac{c}{t + 1} Z_t$, where $c$ is a predefined constant\;
        }
        }
        Initialize a resampled parameter set $\tilde{\Theta}$ and conduct the refinement procedure in step (c) of Algorithm~2 using the elements in the set $\tilde{\Theta}_0$ and $\mathcal{F}_0$\;
   \caption{The RA-DR algorithm\label{sp:alg:combined} 
   }
\end{algorithm}

\clearpage

\bibliographystyle{apalike}
\bibliography{ims-supplementary}